\newtheorem{theorem}{Theorem}[section]
\newtheorem{definition}[theorem]{Definition}
\newtheorem{lemma}[theorem]{Lemma}
\newtheorem{observation}[theorem]{Observation}
\newtheorem{conjecture}{Conjecture}
\newcommand{\Xomit}[1]{}
\newcommand{\vone}{\vspace{.1in}}
\newcommand{\vh}{\vspace{.05in}}
\newcommand{\MinWeightDelta}{Min-Wt-$\Delta$}
\newcommand{\leqsp}{\leq^{\pmb{sprs}}}
\newcommand{\lesssimsp}{\lesssim^{\pmb{sprs}}}
\newcommand{\equivsp}{\equiv^{\pmb{sprs}}}
\newcommand{\Wmax}{$M$ }
\newcommand{\Wmin}{$m$ }
\newcommand{\APSP}{APSD}
\newcommand{\APSPI}{APSP}
\renewcommand\labelenumi{(\roman{enumi})}
\renewcommand\theenumi\labelenumi
\begin{document}

\title{Fine-Grained Complexity for Sparse Graphs}
\author{Udit Agarwal $^{\star}$ and Vijaya Ramachandran\thanks{Dept. of Computer Science, University of Texas, Austin TX 78712. Email: {\tt udit@cs.utexas.edu, vlr@cs.utexas.edu}. This work was supported in part by NSF Grant CCF-1320675. The first author's research was also supported in part by a Calhoun Fellowship.}}

\maketitle

\begin{abstract}
We consider the fine-grained complexity of sparse graph problems that currently have $\tilde{O}(mn)$
 time algorithms, where $m$ is the number of edges
and $n$ is the number of vertices in the input graph. This class includes several important path problems on both directed and undirected graphs, including
APSP, MWC (minimum weight cycle), and Eccentricities, which is the problem of computing, for each vertex in the graph, the length of a longest shortest path starting at that vertex.

\vone
We introduce the notion of a sparse reduction which preserves the sparsity of graphs, and we
present near linear-time sparse reductions between
various pairs of graph problems in  the $\tilde{O}(mn)$ class.
Surprisingly, very few of the known nontrivial reductions between problems in the $\tilde{O}(mn)$ class are sparse reductions. 
 In the directed case, our results give a 
partial order on 
a large collection of problems in
the $\tilde{O}(mn)$ class (along with some equivalences), and many of our reductions are very nontrivial.  
In the undirected case we give two nontrivial sparse reductions: from MWC to APSP,
and from unweighted ANSC (all nodes shortest cycles) to a natural variant of APSP. The latter reduction also 
gives an improved algorithm for ANSC (for dense graphs).

\vone
We propose the {\it MWC Conjecture}, a new conditional hardness conjecture that the weight of
a minimum weight cycle in a directed graph cannot be computed in 
time polynomially smaller than $mn$. 
Our sparse reductions for directed
path problems in the $\tilde{O}(mn)$ class 
establish
that several problems in this class, including 2-SiSP (second simple shortest path), $s$-$t$ Replacement Paths, Radius, and Eccentricities,
are MWCC hard.
We also identify Eccentricities as a key problem in the $\tilde{O}(mn)$ class which is simultaneously  MWCC-hard, SETH-hard
 and $k$-DSH-hard, where SETH is the Strong Exponential
Time Hypothesis, and $k$-DSH is the 
hypothesis that a dominating set of size k cannot be computed in time polynomially smaller than $n^k$.

\vone
Our framework using sparse reductions 
 is very
relevant to real-world graphs, which tend to be sparse and for which the  $\tilde{O}(mn)$ time algorithms are the ones  typically
used in practice, and not the $\tilde{O}(n^3)$ time algorithms.
\end{abstract} 

\newpage
\section{Introduction}

In recent years there has been considerable interest in determining the fine-grained complexity of problems 
in P,  see e.g.~\cite{Williams15}.
 For instance, the 3SUM~\cite{GO95} and OV (Orthogonal Vectors)~\cite{Williams05,BI15} problems have been central to the fine-grained complexity of
several problems with quadratic time algorithms, in computational geometry and related areas for 3SUM and in edit distance and related areas
for OV.
APSP (all pairs shortest paths)  has been central to the fine-grained complexity of several path problems with cubic time algorithms
on dense  graphs~\cite{WW10}.
3SUM has a quadratic time algorithm but no sub-quadratic (i.e., $O(n^{2-\epsilon})$ for some constant $\epsilon>0$) time algorithm is known.
It has been shown that a sub-quadratic time algorithm for any of a large number of problems would imply a sub-quadratic time  algorithm
 for  3SUM~\cite{GO95, Williams05}.
 In a similar vein no sub-quadratic time algorithm is known for OV, and  it has been shown that 
 a sub-quadratic time algorithm for LCS or Edit Dis	tance would imply a sub-quadratic time algorithm
for finding 
orthogonal vectors (OV)~\cite{BI15}. 

For several graph problems related to shortest paths that currently have $\tilde{O}(n^3)$
\footnote{$\tilde{O}$ and $\tilde{\Theta}$ 
can hide sub-polynomial factors; 
in our new results they only hide polylog factors.}
time algorithms,  
{\it equivalence} under sub-cubic reductions
has been shown
in work starting with~\cite{WW10}:
 between all pairs shortest paths (APSP) in either a
directed or undirected weighted graph,  finding a second simple shortest path 
from a given vertex $u$ to a given vertex $v$ (2-SiSP) in a weighted directed graph, finding a minimum weight
cycle  (MWC) in a directed or undirected graph, finding a minimum weight triangle (Min-Wt-$\Delta$),
 and a host of other 
problems.
This gives compelling evidence that a large class of 
problems on dense graphs
is unlikely
to have  sub-cubic algorithms as a function of $n$, the number of vertices, unless fundamentally new
algorithmic techniques are developed. 

The {\it Strong Exponential Time Hypothesis (SETH)}~\cite{IP01} states that for every $\delta < 1$, there is a $k$ such that $k$-SAT 
cannot be solved in $O(2^{\delta\cdot n})$ time.
It has been shown that a sub-quadratic time algorithm for 
OV
would
 falsify SETH~\cite{Williams05}. No hardness results relative to SETH are known for either 3SUM or cubic APSP, and the latter problem, in fact, does not
have a SETH hardness result 
for deterministic algorithms
unless NSETH, a nondeterministic version of SETH, is falsified~\cite{CGI+16}.
Other hardness conjectures have been proposed, for instance, for $k$ Clique~\cite{ABW15} and for $k$ Dominating Set ($k$-DSH)~\cite{PW10}.

In this paper we consider a central collection of graph problems related to APSP,
which refines the subcubic equivalence class. 
We let 
$n$ be the number of vertices and $m$ the number of edges.
All of the sub-cubic equivalent
 graph problems mentioned above (and  several others)
 have $\tilde{O}(mn)$ time
algorithms; additionally,
many sub-cubic equivalent problems related to
minimum
 triangle detection and 
 triangle
 listing 
have lower
$O(m^{3/2})$
 time complexities for sparse graphs~\cite{IR78}.
(Checking whether a graph contains a triangle has an even faster $O(m^{1.41})$ time algorithm~\cite{AYZ97} but can also
be computed in sub-cubic time using fast matrix multiplication.) For APSP with arbitrary edge weights, 
 there is 
an $O(mn +n^2 \log\log n)$ time algorithm for directed
 graphs~\cite{Pettie04} and an even faster $O(mn \log \alpha (m,n))$ time algorithm for undirected graphs~\cite{PR05}, 
 where $\alpha$ is a
certain natural inverse of the Ackermann's function. 
(For integer weights, 
there is an $O(mn)$ time algorithm 
 for undirected graphs~\cite{Thorup97} and
an $O(mn +n^2 \log\log n)$ time algorithm for directed graphs~\cite{Hagerup00}.)
When a graph is truly sparse with $m=O(n)$ these bounds are essentially optimal or very close to optimal,
since the size of the output for APSP is $n^2$. Thus, a cubic in $n$ bound for APSP does not fully capture what is currently
achievable by known algorithms, especially since graphs that arise in practice tend to have $m$ close to linear in $n$
or at least are {\it sparse}, i.e., have  $m=O(n^{1 + \delta})$ for $\delta < 1$. 
This motivates our study of the fine-grained complexity of graph path problems
that currently have $\tilde{O}(mn)$ time algorithms.

Another fundamental problem in the $\tilde{O}(mn)$ class is MWC (Minimum Weight Cycle). In both directed and undirected graphs, MWC can be computed
in $\tilde{O}(mn)$ time using an algorithm for APSP.
Very recently Orlin and Sedeno-Noda gave an improved $O(mn)$ time algorithm~\cite{OS17} for directed MWC.
This is an 
important result but the bound still remains $\Omega (mn)$.
Finding an MWC algorithm that runs faster than $mn$ time is a long-standing
open problem in graph algorithms.

Fine-grained reductions and hardness
 results with respect to time bounds that consider 
 only $n$ or only $m$,
 such as bounds of the form 
 $m^2$, $n^2$, $n^c$, $m^{1+\delta}$, 
 are given
  in~\cite{RW13, AWW16, AWY15, AW14,HKNS15}.
 One exception is 
 in~\cite{AGW15} where some  reductions are given for problems 
 with $\tilde{O}(mn)$ time algorithms, such as diameter and some betweenness centrality problems,
 that preserve graph sparsity.
 Several other related results on fine-grained complexity are in~\cite{HKNS15,WW13,Patrascu10, KPP16, JV16,ACLL14, AW14, AWY15, AWW16, Williams15}.
 
 In this paper we present both fine-grained reductions and hardness results 
 for graph problems
 with $\tilde{O}(mn)$ time algorithms, 
 most
 of which are equivalent under sub-cubic reductions
 on dense graphs,
  but now
  taking sparseness of edges into consideration. We use the current long-standing upper bound
 of $\tilde{O}(mn)$ for these problems as our reference, both for our fine-grained reductions and for our
hardness results.
Our results give a 
partial order on  
hardness of several 
problems
in the $\tilde{O}(mn)$ class, 
with equivalence within some subsets of problems,
and a new hardness conjecture (MWCC) for this class.

Our results appear to the first that consider a hardness class with respect to two natural parameters in the input (the $\tilde{O}(mn)$ class in our case). Further, the
 $\tilde{O}(mn)$ time bound has endured for a long time for a large collection of important problems, and hence merits our detailed study.

\section{Our Contributions}  \label{sec:contr}

We will deal with either an unweighted graph $G=(V,E)$ or a weighted graph
$G = (V,E,w)$, where
the weight function is $w: E \rightarrow \mathcal{R^+}$. 
We assume that the vertices have distinct labels with $\lceil \log n \rceil$ bits.
Let \Wmax  and \Wmin  denote the largest and the smallest edge weight, and let  the
{\it edge weight ratio} be $\rho=$ \Wmax/\Wmin.
Let $d_G (x,y)$ denote the length (or weight) of a shortest path from $x$ to $y$ in $G$, and
 for a cycle $C$ in $G$, let $d_C (x,y)$ denote the length of the shortest path from $x$ to $y$ in $C$.
 We deal with the APSP 
 problem\footnote{\label{note1}In an earlier version of this write-up, we used $\rm{APSP'}$  for APSP, and APSP for APSD, respectively.} 
 whose output is the shortest path weights
for all pairs of vertices, 
together with a
concise
 representation of the shortest paths, which in our case is an $n\times n$ matrix, $Last_G$, 
that contains, 
 in position $(x,y)$,
 the predecessor vertex of $y$ on a shortest path from 
$x$ to $y$.
 We also consider the APSD (All Pairs Shortest Distances) problem~\cite{GM97}
 \footnotemark[2]
 which only involves computing the 
 weights of the shortest paths.
  Most of the currently known APSD algorithms, including matrix multiplication based methods for small integer weights~\cite{Seidel95,SZ99,Zwick02}, can compute APSP  in the same bound as APSD.
 We deal with only simple graphs in this paper.

\subparagraph*{I. \underline{Sparse Reductions and the $mn$ Partial Order}.} 

In Definition~\ref{def:mn-red} below, we define the notion of a sparsity preserving reduction (or {\it sparse
reduction} for short) from a graph
problem $P$ to a graph problem $Q$ that allows $P$ to inherit $Q$'s time bound for the graph problem as
a function of both $m$ and $n$, as long as the reduction is efficient. Our definition is in the spirit of a Karp 
reduction~\cite{Karp72}, but slightly more general, since we allow a constant number of calls to $Q$ instead of
just one call in a Karp reduction (and we allow polylog calls for $\tilde{O}(\cdot)$ bounds). 

One could consider
a more general notion of a sparsity preserving reduction in the spirit of Turing reductions as 
in~\cite{GIKW17} (which considers functions of a single variable $n$). However, for all of the
many sparsity preserving reductions we present here, the simpler notion defined below suffices. It should also
be noted that the simple and elegant definition of a Karp reduction suffices for the vast majority of 
known NP-completeness reductions. 
The key difference between our definition and other definitions of fine-grained reductions is 
that it is fine-grained with regard to both $m$ and $n$, and respects the dependence on {\it both parameters.}

It would interesting to see if some of the open problems left by our work on
fine-grained reductions for the $mn$ class can be solved by moving to a more general sparsity preserving reduction in the spirit of a Turing reduction applied to functions of both $m$ and $n$. We 
do not
consider this more general
version here since we do not need it for our reductions.

 \begin{table*}
\small
\centering
\begin{tabular}{| c | l | l |}
\hline
{\sc Reduction} & \hfill {\sc Prior Results (Undirected)}  \hfill $~$ & \hfill  \textbf{\sc New} {\sc Results (Undirected)} \hfill $~$ \\
\hline
\hline
MWC & $\tilde{O}(n^2)$ reduction~\cite{RW11}& \textbf{Sparse} $\bm{\tilde{O}(n^2)}$ \textbf{Reduction} \\
$\leq$ \APSP & \hspace{0.05in} \textit{a.} \underline{goes through Min-Wt-$\Delta$}  & \hspace{0.05in} \textbf{\textit{a.} no intermediate problem} \\
&  \hspace{0.05in} \textit{b.} \underline{$\Theta(n^2)$ edges} in reduced graphs & \hspace{0.05in} \textbf{\textit{b.}} $\bm{\Theta(m)}$ \textbf{edges}\textbf{ in reduced graphs}\\
\hline
ANSC & Sparse \underline{$\tilde{O}(mn^{\frac{3-\omega}{2}})$ reduction}~\cite{Yuster11} & \textbf{Sparse} $\bm{\tilde{O}(n^2)}$ \textbf{Reduction} \\
(Unweighted) & \hspace{0.05in} \textit{a.} \underline{randomized} & \hspace{0.05in} \textbf{\textit{a.} deterministic} \\ 
\APSPI & \hspace{0.05in} \textit{b.} \underline{polynomial calls} to \APSPI & \hspace{0.05in} \textbf{\textit{b.}} $\bm{\tilde{O}(1)}$ \textbf{calls to} \APSPI \\
 & \hspace{0.05in} \textit{c.} gives \underline{randomized $\tilde{O}(n^{\frac{\omega + 3}{2}})$} & \hspace{0.05in} \textbf{\textit{c.} gives deterministic} $\bm{\tilde{O}(n^{\omega})}$ \\
 & \hspace{0.05in} time algorithm~\cite{Yuster11}  & \hspace{0.05in} \textbf{time algorithm} \\
\hline
\end{tabular}
\caption{Our sparse reduction results for undirected graphs. These results are in Sections~\ref{sec:undir} and \ref{sec:otherundir}. Note that Min-Wt-$\Delta$ can be solved in $m^{3/2}$ time.} \label{table1}
\end{table*}

\begin{definition}[\textbf{Sparsity Preserving Graph Reductions}]	\label{def:mn-red}
Given graph problems $P$ and $Q$, 
there is a \emph{sparsity preserving $f(m,n)$  reduction from $P$ to $Q$}, 
denoted by $P \leqsp_{f(m,n)} Q$, if given 
an algorithm for $Q$ that runs in $T_Q (m,n)$ time on graphs with $n$ vertices and $m$ edges,  we can solve $P$ in 
$O(T_Q (m,n)+f(m,n))$ time on graphs with $n$ vertices and $m$ edges,
by making a constant number of oracle calls to $Q$.
\end{definition}

For simplicity, we will refer to a sparsity preserving graph reduction as a {\it sparse reduction}, and we will
say that $P$ {\it sparse reduces to} $Q$ .
Similar to Definition~\ref{def:mn-red},
we will say that $P$ {\it tilde-$f(m,n)$ sparse reduces to $Q$}, denoted by 
$P \lesssimsp_{f(m,n)} Q$,  if, given an algorithm for $Q$ that runs in $T_Q (m,n)$ time,  we can solve $P$ in
$\tilde{O}(T_Q (m,n)+f(m,n))$ time
(by making polylog oracle calls to $Q$ on graphs with $\tilde{O}(n)$ vertices and $\tilde{O}(m)$ edges).
 We will also use $\equivsp_{f(m,n)}$ and
 $\cong^{\pmb{sprs}}_{f(m,n)}$ in place of $\leqsp_{f(m,n)} $ and $\lesssimsp_{f(m,n)}$ when there are reductions
 in both directions.
In a weighted graph
 we allow the $\tilde{O}$ term to have a 
$\log \rho$ factor.
(Recall that 
$\rho = M/m$.)

We present several sparse reductions for problems that currently
have $\tilde{O}(mn)$ time algorithms. This gives rise to a 
partial order on problems that are known to be sub-cubic equivalent, and currently have
 $\tilde{O}(mn)$ time algorithms.  
   For the
 most part, our reductions take $\tilde{O}(m+n)$ time 
 (many are in fact $O(m+n)$ time), except reductions to APSP take $\tilde{O}(n^2)$ time. This ensures
 that any improvement in the time bound for
 the target problem will give rise to the same improvement to the source problem, to within a polylog factor.
 Surprisingly, very few of the known sub-cubic
 reductions carry over to the sparse case due to one or both of the following features. 
 
 \vspace{.05in}
 1.  A central technique used in many of these earlier reductions has been to
 reduce 
 to or from a suitable triangle finding problem. As noted above, in the sparse setting, all
  triangle finding and enumeration problems can be computed in  
$\tilde{O}(m^{3/2})$ time,
 which is
 an asymptotically smaller bound than $mn$ when the graph is sparse.
 
2. Many of the known sub-cubic
 reductions convert a sparse graph into
 a dense one, and
 all known subcubic reductions from a problem in the $mn$ class to a triangle finding problem create
 a dense graph. If any such reduction had been sparse, it would have given an $O(m^{3/2})$ time algorithm
 for a problem whose current fastest algorithm is in the $mn$ class, a major improvement.

 \vspace{.05In}
We present a suite of new sparse
 reductions for the $\tilde{O}(mn)$ time
 class.  Many of our reductions are quite intricate, and for some of our reductions we introduce a new technique of 
{\it bit-sampling} (previously called `bit-fixing' but re-named here to avoid confusion with an un-named
technique used in~\cite{AGW15}).
The full definitions of the problems we consider are in the Appendix.
  Tables~\ref{table1} and \ref{table2} summarize the improvements our reductions achieve over prior results.
 We now give some highlights of our results.

 \begin{table*}
\small
\centering
\begin{tabular}{| c | l | l |}
\hline
{\sc Reduction} & \hfill {\sc Prior Results (Directed)}  \hfill $~$ & \hfill  \textbf{\sc New} {\sc Results (Directed)} \hfill $~$ \\
\hline
\hline
MWC & \underline{$\tilde{O}(n^2)$ reduction} ~\cite{RW11,WW10} & \textbf{Sparse} $\bm{O(m)}$ \textbf{Reduction} \\
$\leq$ 2-SiSP & \hspace{0.05in} \textit{a.} \underline{goes through Min-Wt-$\Delta$} & \hspace{0.05in} \textbf{\textit{a.} no intermediate problem} \\
& \hspace{0.05in} \textit{b.} \underline{$\Theta(n^2)$ edges} in reduced graphs  & \hspace{0.05in} \textbf{\textit{b.}} $\bm{O(m)}$ \textbf{edges in reduced graph}  \\
\hline
2-SiSP & \underline{$\tilde{O}(n^2)$ reductions} ~\cite{GL09,WW10,AGW15} & \textbf{Sparse} $\bm{\tilde{O}(m)}$ \textbf{Reduction} \\
$\leq$ Radius; & \hspace{0.05in} \textit{a.} \underline{goes through Min-Wt-$\Delta$}  & \hspace{0.05in} \textbf{\textit{a.} no intermediate problem} \\
$\leq$ BC & \hspace{0.05in} and a host of other problems  &  \\
& \hspace{0.05in} \textit{b.}  \underline{$\Theta(n^2)$ edges} in reduced graphs  & \hspace{0.05in} \textbf{\textit{b.}} $\bm{\tilde{O}(m)}$ \textbf{edges in reduced graph} \\
\hline
Replacement &  \underline{$\tilde{O}(n^2)$ reductions}~\cite{GL09,WW10,AGW15} & \textbf{Sparse} $\bm{O(m)}$ ($\bm{\tilde{O}(m)}$) \textbf{Reduction}\\
paths & &  \textbf{to ANSC (Eccentricities)} \\ 
$\leq$ ANSC; & \hspace{0.05in} \textit{a.}  \underline{goes through Min-Wt-$\Delta$} & \hspace{0.05in} \textbf{\textit{a.} no intermediate problem} \\
$\leq$ Eccentricities & \hspace{0.05in} and a host of other problems & \hspace{0.05in} \textbf{\textit{b.}} $\bm{O(m)}$ \textbf{edges for ANSC \&} \\ 
&  \hspace{0.05in} \textit{b.}  \underline{$\Theta(n^2)$ edges} in reduced graphs & \hspace{0.05in} $\bm{\tilde{O}(m)}$ \textbf{edges for Eccentricities in}  \\
 & & \hspace{0.05in} \textbf{reduced graph} \\
\hline
ANSC &  \underline{$\tilde{O}(n^2)$ reduction}~\cite{WW10,AGW15} & \textbf{Sparse} $\bm{\tilde{O}(m)}$ \textbf{Reduction} \\
$\leq$ ANBC & \hspace{0.05in} \textit{a.}  \underline{goes through Min-Wt-$\Delta$} & \hspace{0.05in} \textbf{\textit{a.} no intermediate problem} \\
 & \hspace{0.05in} and a host of other problems &  \\ 
&  \hspace{0.05in} \textit{b.}  \underline{$\Theta(n^2)$ edges} in reduced graphs & \hspace{0.05in} \textbf{\textit{b.}} $\bm{\tilde{O}(m)}$ \textbf{edges in reduced graphs}\\
\hline 
\end{tabular}
\caption{Our sparse reduction results for directed graphs. These results are in Sections~\ref{sec:dir}, \ref{sec:otherdir} and  \ref{sec:centrality}. The definitions for these problems are in Appendix~\ref{sec:definitions}. Note that Min-Wt-$\Delta$ can be solved in $m^{3/2}$ time.} \label{table2}
\end{table*}

\begin{figure}
\scriptsize
	\centering
				\begin{tikzpicture}[scale=0.5, every node/.style={circle, draw, inner sep=0pt, minimum width=5pt, align=center}]
            	\node[ellipse, minimum size=0.5cm] (MWC) at (-1,0) {MWC};
            	\node[ellipse, minimum size=0.5cm] (2SiSP) at (3.5,0) {2SiSP};         
            	\node[ellipse, minimum size=0.5cm] (2SiSC) at (3.5,3) {2SiSC};                     	   	
            	\node[ellipse, minimum size=0.5cm] (Replacement) at (10.5,0) {s-t replacement \\ paths};
            	\node[ellipse, minimum size=0.5cm] (ANSC) at (10.5,3) {ANSC};
            	\node[ellipse, minimum size=0.5cm] (Radius) at (10.5,-2.5) {Radius};
            	\node[ellipse, minimum size=0.5cm] (Eccentricities) at (18.5,0) {Eccentricities};
            	\node[ellipse, minimum size=0.5cm] (APSP) at (24.5,0) {APSP};
            	\draw[->, line width=1pt] (MWC) -- (2SiSP) {};
            	\begin{scope}[->,every node/.style={ right}]
            	\draw[->, line width=1pt] (2SiSP) to  [bend right=60] (2SiSC) {};
            	\node (l) at (3,1.5) {\cite{AR16}};
            	\draw[->, line width=1pt] (2SiSC) to  [bend right=60] (2SiSP) {};
            	\end{scope}
            	\draw (2,-1) -- (5,-1) -- (5,4) -- (2, 4) -- cycle {};    
            	\draw[->, line width=1pt, dashed] (2SiSP) -- (Replacement) {};
            	\draw[->, line width=1pt] (ANSC) to  [bend right=30]  (Replacement) {};
            	\draw[->, line width=1pt] (Replacement) to  [bend right=30]  (ANSC) {};
            	\draw (7,-1) -- (13.5,-1) -- (13.5,4) -- (7, 4) -- cycle {};                	
            	\draw[->, line width=1pt, decorate, decoration={zigzag}] (2SiSP) -- (Radius) {};
            	\draw[->, line width=1pt, dashed] (Radius) -- (Eccentricities) {};
            	\draw[->, line width=1pt, decorate, decoration={zigzag}] (Replacement) -- (Eccentricities) {};
            	\begin{scope}[->,every node/.style={ right}]
            	\draw[->, line width=1pt, dashed] (Eccentricities) -- (APSP) node[midway, label=above:$n^2$] {};
            	\end{scope}
            	\end{tikzpicture}
            \caption{Our sparse reductions for weighted directed graphs (reductions related to centrality problems are in Figure~\ref{fig:reductionsBC}). The regular edges represent sparse $O(m+n)$ reductions, the squiggly edges represent tilde-sparse $O(m+n)$ reductions, and the dashed edges represent reductions that are trivial. The $n^2$ label on dashed edge to APSP denotes an $O(n^2)$ time reduction.}

    \label{fig:reductions}
\end{figure}
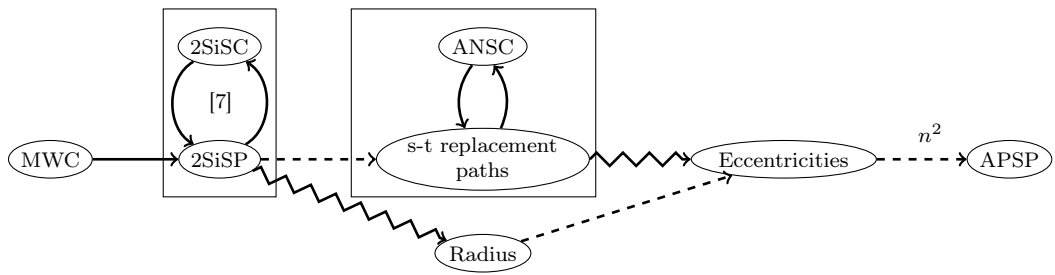

\subparagraph*{(a) Sparse Reductions for Undirected Graphs:}

Finding the weight of a minimum weight cycle (MWC) is a fundamental problem.  
A simple  sparse $O(m+n)$ reduction from MWC to APSD is
known for directed graph 
but it
does not work in
the undirected case
mainly because an edge can be traversed in either direction in an undirected graph, and known 
algorithms for the directed case would create non-simple paths when applied to an undirected graph.
 Roditty and Williams~\cite{RW11}, in a follow-up 
 to~\cite{WW10},
 pointed out the challenges of reducing from undirected MWC to 
 APSD in sub-$n^{\omega}$ time,
 where $\omega$ is the matrix multiplication exponent, and then gave
a $\tilde{O}(n^2)$
reduction from undirected MWC to undirected Min-Wt-$\Delta$ in a 
{\it dense} bipartite 
 graph. But a reduction 
 that increases the density of the graph is not helpful 
 in our sparse setting. Instead, in this paper we give a sparse 
$\tilde{O}(n^2)$
 time reduction from undirected MWC to APSD.
 Similar techniques allow us to obtain a
 sparse
 $\tilde{O}(n^2)$
  time reduction from undirected ANSC (All Nodes Shortest Cycles)~\cite{Yuster11,SW16}, which asks for a shortest
cycle through every vertex) to APSP.
This reduction improves the running time for 
unweighted ANSC in {\it dense} graphs~\cite{Yuster11}, since we can now solve it
in $\tilde{O}(n^{\omega})$ time using the 
unweighted APSP algorithm in~\cite{Seidel95,AGM+92}. 
  Our ANSC reduction and resulting improved algorithm is only for
 unweighted graphs and extending it to weighted graphs appears to be challenging.
 
   We introduce
 a new 
 {\it bit-sampling technique} \footnote{In an earlier version, we called this bit-fixing but that term was often confused with a `bit-encoding' technique used in~\cite{AGW15}.}
 in these reductions.
 This technique contains
   a simple 
 construction with exactly $\log n$ hash functions for Color Coding~\cite{AYZ97} 
 with 2 colors (described in detail in Section~\ref{sec:undir}).
 Our bit-sampling method also gives 
 the
 first near-linear time algorithm for $k$-SiSC 
  in weighted
 undirected graphs.
$k$-SiSC is the cycle variant of $k$-SiSP~\cite{KIM82}, and our reduction that gives a fast algorithm for $k$-SiSC in weighted undirected
graphs is given in
Appendix~\ref{sec:kSiSC}.

 Section~\ref{sec:undir} 
 summarizes the proof of Theorem~\ref{thm1} below, and Section~\ref{sec:otherundir}  
 proves Theorems~\ref{thm1} and \ref{thm2} in full.

\begin{theorem} \label{thm1}
 In a weighted undirected $n$-node $m$-edge graph with edge weight ratio $\rho$,
MWC  can be computed with  $2 \cdot \log n \cdot \log \rho$ 
 calls to APSD 
on graphs with $2n$ nodes, at most $2m$ edges, and edge weight ratio at most $\rho$,
with $O(n+m)$ cost for constructing
each reduced graph, and with additional $O(n^2 \cdot \log n \cdot \log (n\rho))$ processing time.
Additionally, 
edge weights are preserved, and
every edge in the reduced graph retains its 
corresponding
edge weight from the original graph.
 Hence, MWC $\lesssimsp_{n^2}$ APSD.
\end{theorem}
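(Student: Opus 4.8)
The plan is to start from the standard directed identity and adapt it to the undirected case. For a directed graph one has that the minimum weight cycle equals $\min_{(u,v)\in E}\big(w(u,v)+d_G(v,u)\big)$, and since the shortest $v\to u$ path cannot reuse the arc $u\to v$, a single APSD call suffices. In the undirected case this breaks: for an edge $\{u,v\}$ the quantity $d_G(u,v)$ is realized by the edge itself, so $w(u,v)+d_G(u,v)=2w(u,v)$ is a degenerate back-and-forth walk rather than a cycle. First I would reformulate the goal as computing, for the optimal cycle, a value of the form $w(u,v)+d_{G\setminus\{u,v\}}(u,v)$, where $\{u,v\}$ is any fixed edge of that cycle; it suffices that (i) no configuration we evaluate ever returns a value below the true MWC weight (soundness, i.e.\ no degenerate $2$-walk is counted), and (ii) the optimal cycle is captured exactly by at least one configuration (completeness).

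The conceptual heart is a \emph{portal} gadget on $2n$ vertices $\{x^{0},x^{1}\}_{x\in V}$ that isolates a single distinguished edge. If we place a copy of every edge $\{a,b\}$ as the pair $(a^{0},b^{0}),(a^{1},b^{1})$ except for one designated edge $\{u,v\}$, which we instead place crossed as $(u^{0},v^{1}),(u^{1},v^{0})$, then a $u^{0}\rightsquigarrow u^{1}$ walk must traverse the crossed edge an odd number of times, while any back-and-forth along a single edge flips layers an even number of times and returns to its starting layer. Since weights are positive, the shortest such walk uses the crossed edge exactly once and then takes a shortest $v$-to-$u$ path in the layer that no longer contains $\{u,v\}$; hence $d_{H}(u^{0},u^{1})=w(u,v)+d_{G\setminus\{u,v\}}(u,v)$, the minimum weight cycle through $\{u,v\}$, with the degenerate reuse of $\{u,v\}$ automatically excluded by parity. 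This gadget is perfectly sound and complete for a single edge, but applying it edge-by-edge would need $\Theta(m)$ APSD calls.

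The role of bit-sampling is to batch these portals down to $2\lceil\log n\rceil$ configurations. Since vertex labels are distinct $\lceil\log n\rceil$-bit strings, the two endpoints of the optimal cycle's chosen edge differ in some bit position $i$, and for one of the two orientations the coloring $c(x)=\mathrm{bit}_i(x)$ assigns them opposite colors. For each of the $2\lceil\log n\rceil$ colorings I would build a single doubled graph whose layer structure is driven by $c$, run one APSD call on it (still $2n$ vertices, at most $2m$ edges, weights and weight ratio preserved), and scan its $n\times n$ output to update a running minimum over candidate cycle weights; a $\lceil\log\rho\rceil$-fold weight scaling lets the distance-only output of APSD pin down the exact minimum within the allowed budget. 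Taking the minimum over all $2\lceil\log n\rceil\cdot\lceil\log\rho\rceil$ calls yields the MWC weight, at a cost of $O(n+m)$ to build each graph and $O(n^{2}\log n\log(n\rho))$ to combine the outputs, giving MWC $\lesssimsp_{n^2}$ APSD.

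The main obstacle is the correctness of this batched gadget, i.e.\ proving that a single coloring-driven doubled graph simultaneously preserves the single-edge portal guarantee for every edge that the coloring separates. The subtlety is a cycle-space/cut-space parity obstruction: the bichromatic edges of a vertex $2$-coloring always meet any cycle an even number of times, so the naive ``cross every bichromatic edge'' double cover collapses into two disjoint copies of $G$ and computes nothing useful. The crux of the proof is therefore to design the layer assignment so that the optimal cycle still lifts to a $v^{0}\rightsquigarrow v^{1}$ path for some split vertex $v$ on it while every degenerate walk stays within a layer, and to verify soundness against all edges at once rather than against a single isolated portal.
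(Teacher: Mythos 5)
Your proposal assembles the right external scaffolding --- the budget of $2\lceil\log n\rceil\cdot\lceil\log\rho\rceil$ derived graphs on $2n$ vertices, the identity ``MWC weight $=w(u,v)+d_{G\setminus\{u,v\}}(u,v)$ for a suitable edge of the optimal cycle,'' and bit-sampling on vertex labels to separate the endpoints of that edge --- but it stops exactly at the step that constitutes the actual proof. You correctly observe that the symmetric double cover you propose (cross every bichromatic edge) collapses by the cycle-space parity argument, and you then state that ``the crux of the proof is therefore to design the layer assignment so that the optimal cycle still lifts \dots and to verify soundness against all edges at once.'' That design and that soundness verification \emph{are} the theorem; without them you have a plan, not a proof. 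Moreover, the obstruction is not cosmetic: any construction that seeks the answer as $d_H(v^0,v^1)$ for a single split vertex $v$ on the cycle must defeat the even-intersection parity of bichromatic edge sets with cycles, which is precisely why a per-coloring batched version of your portal gadget does not exist in the form you describe.

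The paper escapes this by abandoning the symmetric lift entirely. Its graph $G_{i,j,k}$ is asymmetric: $V_1$ carries all edges of $G$, $V_2$ carries none, and a cross edge $(u^1,v^2)$ is included only when $u$'s $i$-th bit is $j$ and $w(u,v)$ falls in the $k$-th dyadic weight class. The minimum weight cycle is then recovered not as one distance in the derived graph but as the sum $d_{G_{i,j,k}}(v_1^1,v_p^2)+d_G(v_1,v_p)$, where $(v_p,v_{p+1})$ is the \emph{critical edge} of the cycle with respect to $v_1$: Observation~\ref{obs:MinWtCyc} guarantees that the arc of the cycle from $v_1$ to $v_p$ through $v_{p+1}$ becomes a shortest path once $(v_{p-1},v_p)$ is unavailable, which the bit condition arranges in some $G_{i,j,k}$. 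Soundness against degenerate two-walks is then obtained not by parity at all, but by two mechanisms your proposal has no counterpart to: the check $Last_{G_{i,j,k}}(y^1,z^2)\neq Last_G(y,z)$, which ensures the two concatenated paths close into a cycle, and the weight-bucket index $k$ together with the threshold $d_{G_{i,j,k}}(y^1,z^2)\le d_G(y,z)+M/2^{k-1}$, which rules out the spurious path that reaches $z$ within $V_1$ and then bounces over a light edge into $V_2$ (Lemma~\ref{lemma:MinWtCycAPSP2}). Relatedly, your account of the $\log\rho$ factor as a ``weight scaling to pin down the exact minimum'' does not match the role the weight classes actually play, and you do not address how to obtain the needed last-edge information from a distance-only APSD oracle --- the paper uses the Lingas--Lundell $2$-approximation as an $\tilde{O}(n^2\log(n\rho))$ preprocessing step for exactly this. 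These missing pieces are the substance of the result.
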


In undirected graphs with integer weights at most $M$, APSD 
can be computed in
$\tilde{O}(M \cdot n^{\omega})$ time~\cite{Seidel95,SZ99}. 
In~\cite{RW11}, the authors give an $\tilde{O}(M \cdot n^{\omega})$ time algorithm for undirected MWC in such graphs by
preprocessing using a result in~\cite{LL09} and then
making $\tilde{O}(1)$ calls to an APSD 
algorithm.
By applying our sparse reduction in Theorem~\ref{thm1} we can get an alternate {\it simpler}
$\tilde{O}(M \cdot n^{\omega})$ time algorithm for  
MWC in undirected graphs (the sparsity of our reduction is not relevant here except for
the fact that it is also an $\tilde{O}(n^2)$ reduction).

The following result 
gives an improved algorithm for ANSC in undirected unweighted graphs.

\begin{theorem}\label{thm:reductions-algorithms}\label{thm2}
In undirected unweighted graphs,
ANSC $\lesssimsp_{n^2}$  APSP 
and ANSC 
can be computed
 in $\tilde{O}(n^{\omega})$ time.
\end{theorem}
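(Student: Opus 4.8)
The plan is to extend the bit-sampling reduction underlying Theorem~\ref{thm1} from the \emph{global} minimum cycle to a \emph{per-vertex} one. Fix the shortest cycle $C_v$ through a vertex $v$; since $G$ is simple, $C_v$ uses two \emph{distinct} neighbors $a,b$ of $v$, and $C_v$ is an $a$--$b$ path in $G-v$ of length $|C_v|-2$ together with the edges $\{v,a\},\{v,b\}$. The obstruction to reading $|C_v|$ directly off an APSP matrix of $G$ is the degenerate length-$2$ walk $a\to v\to b$, which would falsely certify a ``cycle'' that merely backtracks through $v$. First I would recall the bit-sampling family: using the $\lceil\log n\rceil$-bit vertex labels, for each bit position $i$ define the $2$-coloring $c_i(u) = (i\text{-th bit of } u)$. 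For the \emph{fixed} pair $a\neq b$ there is some $i^\ast$ with $c_{i^\ast}(a)\neq c_{i^\ast}(b)$, so across the $\log n$ colorings every optimal apex-neighbor pair is separated by at least one coloring.

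Next, for each coloring $c$ I would build, exactly as in the proof of Theorem~\ref{thm1}, an auxiliary graph $H_c$ on $2n$ vertices (two copies $u_0,u_1$ of each $u$) and at most $2m$ edges, in which the edges incident to the apex are split by color so that a color-$0$ neighbor is reached in copy $0$ and a color-$1$ neighbor in copy $1$. The purpose of the split is that when $a,b$ have different colors the only way to close a walk at $v$ is to cross from copy $0$ to copy $1$ through the rest of the graph, which forbids the degenerate $a\to v\to b$ return and forces a genuine $a$--$b$ path in $G-v$. A single APSP call on $H_c$ then yields, \emph{for every vertex $v$ simultaneously}, a value $\delta_c(v)=d_{H_c}(v_0,v_1)$ equal to the length of the shortest cycle through $v$ whose apex-neighbors are separated by $c$ (and $+\infty$ if none exists). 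Setting $\mathrm{ANSC}(v)=\min_c \delta_c(v)$ over the $\log n$ colorings is correct by two matching bounds: \emph{soundness}, every finite $\delta_c(v)$ is realized by a simple cycle through $v$ (here unweightedness is essential, so that a shortest path in $H_c$ is simple and projects to a simple cycle in $G$), giving $\min_c\delta_c(v)\ge \mathrm{ANSC}(v)$; and \emph{completeness}, the coloring $c_{i^\ast}$ separating the neighbors of the optimal $C_v$ attains $\delta_{c_{i^\ast}}(v)\le |C_v|$, giving $\min_c\delta_c(v)\le \mathrm{ANSC}(v)$.

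The hard part will be the soundness/completeness argument batched over \emph{all} apexes at once: a single $H_c$ must encode the apex-removal $G-v$ uniformly in $v$, and I must rule out that some shorter walk in $H_c$---one that re-enters both copies of a non-apex vertex, or that routes through $v$ itself---produces a value below the true $\mathrm{ANSC}(v)$. This is exactly where the color split does the work: any $v_0$--$v_1$ path must change copies, a copy change at the apex is impossible once its two incident edges lie in opposite copies, so the backtracking walk is excluded while every honest cycle survives, and the standard exchange argument showing a shortest path in $H_c$ does not revisit a vertex rules out spurious shortcuts. Finally, for complexity: building each $H_c$ costs $O(m+n)$, there are $O(\log n)$ colorings, and assembling $\min_c\delta_c(v)$ over all $v$ together with recovering a cycle witness from the $Last$ matrices costs $O(n^2\log n)$; hence $\mathrm{ANSC}\lesssimsp_{n^2}\mathrm{APSP}$. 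Plugging in the unweighted APSP algorithm of~\cite{Seidel95,AGM+92}, which runs in $\tilde O(n^\omega)$ on the $2n$-vertex graphs $H_c$, yields the claimed $\tilde O(n^\omega)$-time ANSC algorithm; the reliance on simple shortest paths and on fast unweighted APSP is precisely what makes the weighted extension problematic.
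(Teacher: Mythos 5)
Your high-level plan (per-vertex critical/apex edge, bit-sampling over the $\lceil\log n\rceil$ label bits, $O(\log n)$ APSP calls on $2n$-vertex graphs) matches the paper in spirit, but the construction you propose is genuinely different from the paper's and has a gap at exactly the step you flag as ``the hard part.'' You want a single graph $H_c$ on two copies $u_0,u_1$ of each vertex in which $\delta_c(v)=d_{H_c}(v_0,v_1)$ directly equals the length of the shortest cycle through $v$ whose two $v$-neighbors are separated by $c$, \emph{simultaneously for every} $v$. This forces two incompatible requirements on the same edge set: for $v$ as apex, $v_0$ must be adjacent only to copies of color-$0$ neighbors and $v_1$ only to copies of color-$1$ neighbors (otherwise the degenerate $v_0\to a\to v_1$ backtrack of length $2$ returns); but for $v$ as an \emph{intermediate} vertex of some other apex's cycle, each copy of $v$ must be enterable from and exitable to all of $v$'s neighbors. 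You never specify where an edge $\{x,w\}$ between two non-apex roles lands, and no choice works: if it lands only on $w_{c(x)}$, a path can pass through $w$ only between two neighbors of the \emph{same} color, killing most legitimate cycles; if you add enough parallel copies to restore free passage, $v_0$ and $v_1$ become joined by a length-$2$ path through a single neighbor and $\delta_c(v)$ collapses. The per-vertex split you describe is essentially the paper's $k$-SiSC gadget (Appendix~\ref{sec:kSiSC}), which splits \emph{one} distinguished vertex into $x_{0,i},x_{1,i}$; it does not batch over all apexes.

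The paper avoids this tension by not asking the auxiliary graph to carry the whole cycle. It reuses the asymmetric graphs $G_{i,j}$ of Section~\ref{sec:undir} (all of $E$ inside $V_1$, \emph{no} edges inside $V_2$, crossing edge $(u^1,w^2)$ present iff $u$'s $i$-th bit is $j$), so only the final crossing edge is color-filtered and interior vertices travel freely in $V_1$. The shortest cycle through $z$ is then recovered in $O(n^2\log n)$ post-processing as $\min_y\bigl(d_{G_{i,j}}(y^1,z^2)+d_G(y,z)\bigr)$ over all $y\neq z$ subject to the slack check $d_{G_{i,j}}(y^1,z^2)\le d_G(y,z)+1$ and the check $Last_{G_{i,j}}(y^1,z^2)\neq Last_G(y,z)$ (Lemmas~\ref{lemma:ANSCtoAPSP1} and~\ref{lemma:ANSCtoAPSP2}); note that $d_{G_{i,j}}(z^1,z^2)$ itself is always $2$ and useless. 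Your soundness sketch also glosses over a real issue that these checks handle: a simple path in a two-copy graph can visit both copies of some vertex and project to a non-simple closed walk, from which a simple cycle \emph{through $z$} of no greater length cannot always be extracted; ruling out the two-edge backtrack and an internal occurrence of $z^1$ is precisely the case analysis of Lemma~\ref{lemma:MinWtCycAPSP2}. To repair your write-up, replace the symmetric $H_c$ with the paper's $V_1/V_2$ construction and the pairwise combination step, or supply a concrete edge set for $H_c$ together with a proof that it excludes the backtrack while preserving passage through intermediate vertices --- as argued above, the latter does not exist for a $2n$-vertex graph of this form.
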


\subparagraph*{(b) Directed Graphs.}  
We give several nontrivial sparse reductions starting from MWC in directed graphs,
 as noted in the following theorem
 (also highlighted in Figure~\ref{fig:reductions}).
 
\begin{theorem}[\textbf{Directed Graphs.}]  \label{thm:weighted-directed}
 In weighted directed graphs:
\begin{enumerate}
\item MWC $\leqsp_{m+n}$ 2-SiSP $\leqsp_{m+n}$ $s$-$t$ replacement paths $\equivsp_{m+n}$ ANSC $\lesssimsp_{m+n}$ Eccentricities
\item 2-SiSP  $\lesssimsp_{m+n}$ Radius $\leqsp_{m+n}$ Eccentricities
\item 2-SiSP $\lesssimsp_{m+n}$ Betweenness Centrality
\end{enumerate}
\end{theorem}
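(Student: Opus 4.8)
The plan is to establish the stated chain and side-reductions link by link, each via a local gadget construction together with a short correctness lemma, and then to compose them using transitivity of $\leqsp$ and $\lesssimsp$. Transitivity is legitimate here because every link adds only $O(m+n)$ (or $\tilde O(m+n)$) overhead and inflates $m$ and $n$ by at most constant factors, so sparsity is preserved along the whole chain. I would separate the argument into the few \emph{clean} links that need no gadget and the \emph{gadget} links that carry the real content.

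I would dispatch the clean links first. For 2-SiSP $\leqsp_{m+n}$ $s$-$t$ replacement paths, compute a shortest path $P^\ast = s{=}p_0,\dots,p_k{=}t$ with one shortest-path call and then prove the characterization $\text{2-SiSP} = \min_{e\in P^\ast} d_{G\setminus e}(s,t)$: any simple $s$-$t$ path other than $P^\ast$ must omit at least one edge of $P^\ast$, so its weight is at least the corresponding replacement value, while each replacement path is itself a simple $s$-$t$ path distinct from $P^\ast$; hence the two minima coincide and one replacement-paths call plus an $O(k)$ minimization suffices. The link Radius $\leqsp_{m+n}$ Eccentricities is immediate, since Radius $=\min_v \mathrm{ecc}(v)$: I simply take the minimum of the returned eccentricity vector in $O(n)$ time, which is the trivial dashed edge in Figure~\ref{fig:reductions}.

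For the base reduction MWC $\leqsp_{m+n}$ 2-SiSP and for the pair $s$-$t$ replacement paths $\equivsp_{m+n}$ ANSC, I would use one shared device: split each vertex $v$ into $v^{in},v^{out}$ joined by a zero-weight internal edge, reroute every original edge $(x,y)$ to $x^{out}\to y^{in}$ (so a minimum cycle through $v$ equals the shortest $v^{out}\rightsquigarrow v^{in}$ distance), and thread a spine $s{=}u_0\to\cdots\to u_n{=}t$ whose $i$-th removable edge $e_i$ is in one-to-one correspondence with $v_i$. The intended invariant is $d_{G'\setminus e_i}(s,t)=c+(\text{min cycle through }v_i)$ for a fixed constant $c$. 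Under it, one 2-SiSP call returns $c+\mathrm{MWC}$ (giving MWC $\leqsp$ 2-SiSP), and one replacement-paths call returns the entire vector $\big(c+\text{cycle}(v_i)\big)_i$ (giving ANSC $\leqsp$ replacement paths, the output sizes both being $O(n)$). The reverse direction, replacement paths $\leqsp$ ANSC, I would obtain dually: subdivide each spine edge $e_i$ by a fresh vertex $y_i$ and add a back edge $t\to s$ so that the shortest cycle through $y_i$ realizes exactly the $e_i$-avoiding $s$-$t$ detour. The remaining links feed these cycle/detour quantities into the extremal measures. For ANSC $\lesssimsp_{m+n}$ Eccentricities, 2-SiSP $\lesssimsp_{m+n}$ Radius, and 2-SiSP $\lesssimsp_{m+n}$ Betweenness Centrality, I would build on the same split-and-spine graph with an extra gadget making, for each designated vertex, the eccentricity (resp.\ the radius-witnessing distance, resp.\ the betweenness count) equal the target coordinate; since these targets are extremal ($\max$'s, a $\min$ of $\max$'s) or counting quantities rather than a single selected distance, I would isolate the desired coordinate by weight scaling and binary search over a threshold, which is precisely why these links are tilde-reductions with a $\log\rho$ factor and $\tilde O(1)$ calls. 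Composition then yields statements (1)--(3).

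I expect the crux to be the forcing/locality argument inside the shared split-and-spine device. I must guarantee that the cheapest $e_i$-avoiding $s$-$t$ path is exactly the detour that enters and exits $G$ at $v_i$ (a genuine cycle through $v_i$), and is never undercut by a cheaper \emph{open} path that enters at $v_i^{out}$ and leaves at some $v_j^{in}$ with $j\neq i$, since an open path in $G$ can be far lighter than any cycle. Ruling such short-circuits out --- using directedness and path-simplicity to forbid backward exits, and a weighting scheme that penalizes forward cross-exits --- while \emph{simultaneously} keeping the construction sparse (no per-vertex copy of $G$, which would inflate the edge count to $\Theta(mn)$ and defeat the purpose) is the delicate step. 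The analogous need to isolate a single coordinate from an extremal quantity is what forces the reductions into Eccentricities, Radius, and Betweenness Centrality to use the extra threshold machinery rather than a single oracle call.
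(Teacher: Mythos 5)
Your overall architecture is right, and several links match the paper closely: the trivial links (2-SiSP $\leqsp_{m+n}$ $s$-$t$ replacement paths via $\min_{e\in P^*}d_{G\setminus e}(s,t)$, and Radius $\leqsp_{m+n}$ Eccentricities) are exactly the dashed edges; and your split-and-spine device with weights that penalize forward cross-exits is precisely the paper's MWC $\leqsp$ 2-SiSP construction (Lemma~\ref{lemma:MinWtCyc2SiSP} uses exit weight $(n-j+1)Q$ and re-entry weight $jQ$ with $Q=nM$, so entering at $j$ and leaving at $k$ costs $(n+1+k-j)Q$, minimized only at $k=j$), and the same graph gives ANSC $\leqsp$ replacement paths. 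However, your gadget for the reverse direction, replacement paths $\leqsp$ ANSC, is broken: if $y_i$ subdivides the path edge $e_i=(v_i,v_{i+1})$, then $y_i$ has a single in-edge from $v_i$ and a single out-edge to $v_{i+1}$, so every cycle through $y_i$ \emph{traverses} $e_i$; with a back edge $t\to s$ the shortest such cycle has weight $d_G(s,v_i)+w(e_i)+d_G(v_{i+1},t)=d_G(s,t)$ for every $i$, which says nothing about $d_{G\setminus e_i}(s,t)$. The paper's construction (Lemma~\ref{lemma:ReplacementANSC}) instead hangs a fresh vertex $z_i$ off the path with an edge $z_i\to v_i$ of weight $d_G(s,v_i)$ and an edge $v_{i+1}\to z_i$ of weight $d_G(v_{i+1},t)$, links the $z_i$ by a descending zero-weight chain $z_i\to z_{i-1}$, and \emph{deletes the edges of $P$}, so that a cycle through $z_i$ is forced to be a prefix-teleport to some $v_h$ with $h\le i$, a $P$-avoiding detour to some $v_k$ with $k>i$, and a suffix-teleport back --- exactly a replacement path for $e_i$.

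The second gap is in the reductions into Eccentricities and Radius. You propose isolating the target coordinate by binary search over a threshold, but for the all-coordinates targets (replacement paths / ANSC $\lesssimsp$ Eccentricities) a per-coordinate binary search would require $\Theta(n\log(nM))$ oracle calls, far exceeding the polylog budget that $\lesssimsp$ permits. The paper needs no binary search here at all: in Lemma~\ref{lemma:2SiSPRadius} a single call suffices because large anchor weights (the $\frac{11}{9}M'$ edges $z_{j_i}\to y_{j_i}$ with $M'=9nM$, the $\frac{2}{3}M'$ two-hop interconnections, and the $A,B$ vertices) force the farthest vertex from $y_{j_o}$ to be its partner $y_{j_i}$ and force the center to be one of the $y_{j_o}$, so the eccentricity vector (resp.\ the radius) directly reads off all replacement-path weights (resp.\ their minimum) up to a fixed offset. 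The $\tilde{O}$ in these links comes not from binary search but from the $O(n\log n)$-edge bit-encoding gadget (the $C_{r,s}$ vertices) needed to realize the all-pairs interconnection among the $y$ vertices sparsely --- a device your sketch does not supply, even though you correctly flag sparsity of the interconnection as the delicate point. Binary search genuinely appears only in the 2-SiSP $\lesssimsp$ BC reduction (Lemma~\ref{lemma:2SiSPBetweennessCentrality}), where the oracle returns a count rather than a distance, and there your plan coincides with the paper's.
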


In Section~\ref{sec:dir} we present a brief overview
of our sparse reduction
 from 2-SiSP to Radius for directed graphs. The remaining reductions are in 
 Section~\ref{sec:otherdir},  and in Section~\ref{sec:centrality} we
also present nontrivial sparse reductions from 2-SiSP and ANSC to versions of the {\it betweenness 
centrality} problem, to complement a collection of sparse reductions in~\cite{AGW15} for
betweenness centrality problems.

\subparagraph*{II.  \underline{Conditional Hardness Results}.}

Conditional hardness under fine-grained reductions falls into several categories: 
For example, 3SUM hardness~\cite{GO95} holds for several problems to which 3SUM reduces in sub-quadratic time,
and OV-hardness holds for several problems to which OV has sub-quadratic reductions. 
 By a known reduction from  SETH  to OV~\cite{Williams05} OV-hardness implies SETH-hardness as well.
 The $n^3$ equivalence class for path problems in dense graphs~\cite{WW10}
gives sub-cubic hardness for APSP in dense graphs and for the other problems in this class.

In this paper 
our focus is on
the $mn$ class, the class of graph path problems for which the
current best algorithms run in $\tilde{O}(mn)$ time.  This class differs from all previous classes considered
for fine-grained complexity since it depends on two parameters of the input, $m$ and $n$.
 To formalize hardness results for this class we
first make precise the notion of a sub-$mn$ time bound, which we capture with the following
definition.

\begin{definition}[\textbf{Sub-$mn$}] \label{def:sub-mn} 
A function $g(m,n)$ is \emph{sub-$mn$} if 
$g(m,n) = O(m^{\alpha} \cdot n^{\beta})$, where $\alpha, \beta$ are constants such that $\alpha + \beta <2$.
\end{definition}

A straightforward application of the notions of sub-cubic and sub-quadratic to the two-variable function $mn$ would have resulted in a
simpler but less powerful definition for sub-$mn$, namely that
 of requiring a time bound $O((mn)^{\delta})$, for some $\delta<1$. Another weaker form of
the above definition would have been to require $\alpha\leq 1$ and $\beta \leq 1$ with at least one of the two being strictly less than
1. The above definition is more general than either of these. It
considers a bound of the form $m^3/n^2$ to be sub-$mn$ even though such a
bound for the graph problem with be larger than $n^3$ for dense graphs. Thus, it is a very strong
definition of sub-$mn$ when applied to hardness results. (Such a definition could be abused when
giving sub-$mn$ reductions, but as noted in part I above, all of our sub-$mn$ reductions are 
linear or near-linear in the sizes of input and output, and thus readily satisfy the sub-$mn$ definition 
while being very efficient.)

 Based on our fine-grained reductions for directed graphs, we propose the
following conjecture:

\begin{conjecture}~\label{conj:mwc}
\emph{(MWCC: Directed Min-Wt-Cycle Conjecture)}
There is no sub-$mn$ time algorithm for MWC (Minimum Weight Cycle) in directed graphs. 
\end{conjecture}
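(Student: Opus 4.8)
Since the final statement is a \emph{conjecture} rather than a theorem, an unconditional proof is out of reach: establishing it would amount to proving a superlinear unconditional lower bound against \emph{all} algorithms for a problem in P, something entirely beyond current techniques, which cannot prove even modest superlinear bounds for most natural problems. The appropriate goal is therefore not a proof but a justification—assembling the evidence that makes MWCC as plausible as its established cousins (3SUM-hardness, SETH, and the APSP sub-cubic barrier). The plan is to organize that justification along three lines, and to attempt a genuine proof only of a model-restricted weakening.

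First, I would marshal the algorithmic evidence. The $\tilde{O}(mn)$ bound for directed MWC has resisted improvement for decades: the natural approach—run a shortest-path computation and close each cycle—inherently spends $\Theta(mn)$ time, and even the recent result of Orlin and Sedeno-Noda only removes the polylog factor, taking $\tilde{O}(mn)$ to $O(mn)$ without touching the polynomial. I would catalogue the known paradigms (Dijkstra and Bellman-Ford sweeps, and matrix-multiplication methods for small integer weights) and argue that each either incurs the $mn$ product intrinsically or applies only in restricted weight or density regimes, so that no known technique yields a sub-$mn$ bound in the general weighted case.

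Second, I would develop the reduction-based evidence. MWC lies in the sub-cubic equivalence class with APSP, 2-SiSP, and Min-Wt-$\Delta$, so it inherits the same ``no known speedup'' intuition; moreover the sparse reductions of Theorem~\ref{thm:weighted-directed} make a sub-$mn$ algorithm for MWC \emph{necessary} for sub-$mn$ algorithms for 2-SiSP, $s$-$t$ replacement paths, ANSC, Radius, and Eccentricities, which is precisely the downstream payoff of the conjecture. One subtlety I would flag explicitly, consistent with the discussion following Definition~\ref{def:sub-mn}: because sub-$mn$ uses the strong notion $\alpha+\beta<2$, a sub-$mn$ algorithm need \emph{not} be sub-cubic on dense graphs (e.g. $m^3/n^2 = n^4$ when $m=\Theta(n^2)$), so MWCC is \emph{not} implied by the APSP conjecture. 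It is a genuinely new hypothesis whose force lives in the sparse regime $m=\tilde{O}(n)$, where sub-$mn$ coincides with sub-quadratic in $n$; I would emphasize this independence to motivate why a separate conjecture is warranted.

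Third—and this is the only line that could produce an actual \emph{proof} of a weakened statement—I would attempt an unconditional lower bound in a restricted model of computation, adapting the path-comparison lower bound of Karger, Koller, and Phillips for shortest paths to the cycle-finding setting, aiming to show that any algorithm that computes MWC purely by comparing path weights must perform $\Omega(mn)$ comparisons. The main obstacle is exactly that this is the best one can hope for: the full conjecture quantifies over all algorithms and all input encodings, and no technique is known to rule out algebraic or word-RAM speedups unconditionally. The honest outcome is thus a well-motivated conjecture supported by algorithmic failure, a web of reductions, and a model-restricted lower bound, with the paper's real deliverable being the \emph{conditional} MWCC-hardness of the downstream problems rather than a proof of MWCC itself.
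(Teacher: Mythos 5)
Your proposal correctly recognizes that this is a conjecture with no proof to be had, and your justification matches the paper's own motivation: the decades-old $\tilde{O}(mn)$ bound, the recent $O(mn)$ algorithm of Orlin and Sedeno-Noda that still does not break the $mn$ barrier, and the downstream MWCC-hardness results obtained via the sparse reductions of Theorem~\ref{thm:weighted-directed}. Your observation that sub-$mn$ (with $\alpha+\beta<2$) is incomparable to sub-cubic, and hence that MWCC is a genuinely separate hypothesis, likewise tracks the paper's own discussion following Definition~\ref{def:sub-mn}; the proposed path-comparison lower bound is an extra the paper does not attempt, but the core treatment is the same.
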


Directed MWC is a natural candidate for hardness for the $mn$ class since it is a fundamental
problem for which a simple $\tilde{O}(mn)$ time algorithm has been known for many decades,
and very recently, an $O(mn)$ time
algorithm~\cite{OS17}.
But a sub-$mn$ time algorithm remains as elusive as ever. Further, through the
fine-grained reductions that we present in this paper, many other problems in the $mn$ class 
have MWCC hardness for sub-$mn$ time as noted in the following theorem.

\begin{theorem}\label{thm:MWC-hard}
Under MWCC, the following problems on directed graphs do not have sub-$mn$ time algorithms:
2-SiSP, 2-SiSC, $s$-$t$ Replacement Paths, ANSC, Radius, Betweenness Centrality, and
Eccentricities. 
\end{theorem}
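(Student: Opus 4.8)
The plan is to derive Theorem~\ref{thm:MWC-hard} as a contrapositive corollary of the chain of sparse reductions in Theorem~\ref{thm:weighted-directed}: I will show that if any of the listed target problems admitted a sub-$mn$ algorithm, then composing it with the relevant reduction would yield a sub-$mn$ algorithm for directed MWC, contradicting Conjecture~\ref{conj:mwc} (MWCC). The core of the argument is therefore a single closure property: that the class of problems solvable in sub-$mn$ time (Definition~\ref{def:sub-mn}) is closed under the (tilde-)sparse reductions we have established.

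First I would establish this closure lemma. Suppose $P \leqsp_{m+n} Q$ (or $P \lesssimsp_{m+n} Q$) and $Q$ has a sub-$mn$ algorithm, i.e. $T_Q(m,n) = O(m^{\alpha} n^{\beta})$ with $\alpha+\beta<2$. In the non-tilde case, $P$ is solved with $O(1)$ oracle calls on graphs with the same $n$ and $m$, plus $O(m+n)$ overhead, so $P$ runs in $O(m^{\alpha} n^{\beta} + m + n)$ time; since any algorithm must read its input, $T_Q(m,n)=\Omega(m+n)$, so the additive term is absorbed and $P$ runs in $O(m^{\alpha}n^{\beta})$ time, which is sub-$mn$. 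In the tilde case, the polylog oracle calls are made on graphs with $\tilde{O}(n)$ vertices and $\tilde{O}(m)$ edges; each such call costs $T_Q(\tilde{O}(m),\tilde{O}(n)) = \tilde{O}(m^{\alpha}n^{\beta})$, and summing over polylog calls plus the $\tilde{O}(m+n)$ overhead gives total time $\tilde{O}(m^{\alpha}n^{\beta})$. The remaining point is that a polylog factor does not break the sub-$mn$ monomial bound: since $\alpha+\beta<2$ strictly, choose $\epsilon>0$ with $\alpha+\beta+\epsilon<2$; then any $\mathrm{polylog}(n)\cdot m^{\alpha}n^{\beta} = O(m^{\alpha}n^{\beta+\epsilon})$, still sub-$mn$. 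The $\log\rho$ factor allowed in a weighted tilde-reduction is, for polynomially bounded weights, a further $O(\log n)$ factor absorbed in exactly the same way. A constant-length composition of such reductions accumulates only constantly many polylog and $\log\rho$ factors and keeps the leading monomial $m^{\alpha}n^{\beta}$ unchanged, so sub-$mn$-ness is preserved along any constant-length chain.

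With the closure lemma in hand, I would trace, for each target problem, a reduction path from MWC supplied by Theorem~\ref{thm:weighted-directed} (and Figure~\ref{fig:reductions}): MWC $\leqsp_{m+n}$ 2-SiSP reaches 2-SiSP directly and, via the equivalence 2-SiSP $\equivsp_{m+n}$ 2-SiSC of~\cite{AR16}, also 2-SiSC; MWC $\leqsp_{m+n}$ 2-SiSP $\leqsp_{m+n}$ $s$-$t$ replacement paths $\equivsp_{m+n}$ ANSC reaches $s$-$t$ replacement paths and ANSC; MWC $\to$ 2-SiSP $\lesssimsp_{m+n}$ Radius reaches Radius; MWC $\to$ 2-SiSP $\lesssimsp_{m+n}$ Betweenness Centrality reaches BC; and Eccentricities is reached from MWC through any path ending in $\lesssimsp_{m+n}$ Eccentricities (through ANSC in part~1, or through Radius in part~2). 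Applying the closure lemma to each constant-length chain, a sub-$mn$ algorithm for the target problem yields one for MWC, completing the contrapositive.

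The step I expect to be the main (and essentially the only) obstacle is the closure lemma — specifically, confirming that the monomial definition of sub-$mn$ (exponents summing to strictly less than $2$) is robust to the two features of our reductions that a naive argument might mishandle: the polylog (and $\log\rho$) blow-up of tilde-reductions, and the additive $\tilde{O}(m+n)$ overheads together with the $\tilde{O}(m),\tilde{O}(n)$-sized reduced graphs. Both are handled by the strict inequality $\alpha+\beta<2$, which leaves room to absorb sub-polynomial factors into the exponents, and by the observation that any algorithm spends $\Omega(m+n)$ time reading its input, which dominates every additive overhead. Once these are checked, the theorem follows immediately from the reductions already proved in Theorem~\ref{thm:weighted-directed}.
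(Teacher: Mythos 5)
Your proposal is correct and matches the paper's (implicit) argument: the paper derives Theorem~\ref{thm:MWC-hard} directly as the contrapositive of the reduction chains in Theorem~\ref{thm:weighted-directed} (plus the 2-SiSP $\equivsp_{m+n}$ 2-SiSC equivalence of~\cite{AR16} and the reduction to BC in Section~\ref{sec:centrality}), leaving the closure of sub-$mn$ time under constant-length chains of (tilde-)sparse reductions unstated. Your explicit closure lemma --- absorbing additive $O(m+n)$ overheads and polylog/$\log\rho$ factors into the strict inequality $\alpha+\beta<2$ --- is exactly the missing bookkeeping and is handled correctly.
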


The problems in the above theorem
are a subset of the problems that are sub-cubic equivalent to directed MWC,
and hence one could also strengthen the MWCC conjecture to state that directed MWC
has neither a
sub-$mn$ time algorithm nor
a subcubic (in $n$) time algorithm. As noted above, these two classes of time bounds are incomparable
as neither one is contained in the other.

\vh
\noindent
{\bf SETH and Related Problems.}
Recall that SETH conjectures
 that for every $\delta < 1$ there exists a $k$ such that there is no `$2^{\delta\cdot n}$' time algorithm for $k$-SAT. 
No SETH-based conditional lower bound is known for either 3SUM or for dense APSP.
In fact, 
it is
{\it unlikely} that dense APSP would have a hardness result for 
 deterministic sub-cubic
algorithms, relative to SETH,
since this would falsify NSETH~\cite{CGI+16}.

Despite the above-mentioned negative result in~\cite{CGI+16} for SETH hardness for 
sub-cubic equivalent problems, we now observe
that SETH-hardness does hold for a key 
MWCC-hard
problem in the $mn$ class. In particular  we observe that
a SETH hardness construction, used in~\cite{RW13} to obtain
sub-$m^2$ hardness for Eccentricities in graphs with $m=O(n)$, can be used to
establish the following result.

\begin{theorem}[\textbf{Sub-$mn$ Hardness Under SETH}] \label{thm:seth-hardness}
Under SETH, Eccentricities does not have a sub-$mn$ time algorithm in an unweighted or weighted graph,
either directed or undirected.
\end{theorem}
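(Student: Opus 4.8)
The plan is to reuse the standard SETH-to-Diameter reduction that passes through Orthogonal Vectors (OV)~\cite{Williams05, RW13}, to observe that the graph it produces is \emph{sparse}, and then to let that sparsity interact with the definition of sub-$mn$ to derive a contradiction with SETH. I first recall that, by the SETH-to-OV reduction of~\cite{Williams05}, under SETH there is no $O(N^{2-\epsilon})$ time algorithm for OV on two sets $A,B$ of $N$ Boolean vectors in dimension $d=\mathrm{polylog}(N)$. The bound $d=\mathrm{polylog}(N)$ is exactly the regime in which SETH implies OV-hardness, and it is what will keep the reduced graph sparse; I would flag this as the hypothesis I must hold onto throughout.

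Next I would present the reduction. From an OV instance I build the undirected unweighted graph $G$ on vertex sets $\{a_i\}_{i\le N}$, $\{b_j\}_{j\le N}$, one coordinate vertex $c_k$ for each $k\in[d]$, and two auxiliary vertices $x,y$. I add an edge $a_ic_k$ (resp.\ $b_jc_k$) whenever the $k$-th coordinate of the corresponding vector is $1$, join $x$ to every $a_i$ and every $c_k$, join $y$ to every $b_j$ and every $c_k$, and add the edge $xy$. A short case analysis shows every pairwise distance is at most $2$ (routed through $x$, $y$, or $xy$) except possibly $d(a_i,b_j)$: if $a_i$ and $b_j$ share a $1$-coordinate $k$ then $d(a_i,b_j)=2$ via $c_k$, while if they are orthogonal they have no common neighbor and $d(a_i,b_j)=3$ via $x,y$. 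Hence the diameter is $2$ if no orthogonal pair exists and $3$ otherwise. Since the diameter equals the maximum eccentricity, any Eccentricities algorithm decides the OV instance after an $O(n)$ scan of its output.

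For the quantitative part, the graph has $n=O(N+d)=O(N)$ vertices and $m=O(Nd)=\tilde O(N)$ edges, so $m=\tilde O(n)$. Suppose for contradiction that Eccentricities had a sub-$mn$ algorithm, running in $O(m^{\alpha}n^{\beta})$ time with $\alpha+\beta<2$. On these instances $m=\tilde O(n)$ forces the running time down to $\tilde O(n^{\alpha+\beta})=\tilde O(N^{2-\epsilon})$ with $\epsilon=2-(\alpha+\beta)>0$, contradicting SETH; note this collapse even absorbs exotic bounds such as $m^3/n^2$, which on sparse inputs is only $\tilde O(n)$. This settles the undirected unweighted case. The weighted cases follow by assigning unit weights, and the directed cases by replacing each undirected edge with its two opposite arcs; bidirecting preserves all distances (so in- and out-eccentricities coincide with the undirected eccentricities) and at most doubles $m$, so sparsity and the whole argument carry over to all four combinations of (un)weighted and (un)directed graphs.

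The hard part will not be conceptual but a matter of keeping two bookkeeping points honest. First, I must ensure the OV instances coming out of SETH genuinely have $d=\mathrm{polylog}(N)$: a polynomially large $d$ would make $G$ dense and destroy the $m=\tilde O(n)$ relation that powers the contradiction. Second, and most importantly, I must verify that the sub-$mn$ definition---which permits individually large and even negative exponents so long as $\alpha+\beta<2$---really does collapse to a truly subquadratic-in-$N$ bound on the $m=\tilde O(n)$ instances. This collapse is the linchpin connecting the two-parameter hardness notion of Definition~\ref{def:sub-mn} to the classical subquadratic OV barrier, and it is the one step I would write out with care.
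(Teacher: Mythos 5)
Your proof is correct, but it takes a genuinely different route from the one the paper writes out. The paper does not prove Theorem~\ref{thm:seth-hardness} directly from SETH: it proves the $k$-DSH statement (Theorem~\ref{thm:hardness}) via Lemma~\ref{thm:kDominatingDiameter}, reducing $k$-Dominating Set to Diameter on graphs with $N=O(n^{k/2})$ vertices and $M=O(n^{k/2+1})$ edges, and then invokes the implication of~\cite{PW10} that falsifying $k$-DSH falsifies SETH; the quantitative condition $k\geq 3+2\alpha/\epsilon$ is exactly what is needed to absorb the exponent $\alpha$ of $m$ in the sub-$mn$ bound, since those instances have $M\approx N^{1+2/k}$ rather than $M=\tilde O(N)$. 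You instead go directly through Orthogonal Vectors: the split-graph with coordinate vertices and the two hubs $x,y$ has diameter $3$ iff an orthogonal pair exists, and because $d=\mathrm{polylog}(N)$ keeps $m=\tilde O(n)$, any $O(m^{\alpha}n^{\beta})$ bound with $\alpha+\beta<2$ collapses to $\tilde O(N^{2-\epsilon})$, contradicting OV-hardness under SETH. Your distance case analysis, the handling of possibly negative exponents via $m\geq n-1$, and the lifts to weighted and directed graphs are all sound, so this is a complete proof of the stated theorem. What you lose relative to the paper is the stronger conclusion: your argument yields only SETH-hardness, whereas the paper's route additionally establishes hardness under the weaker hypothesis $k$-DSH (with the explicit $\alpha,\epsilon$ trade-off) and gets Theorem~\ref{thm:seth-hardness} as a corollary. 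What you gain is a self-contained, quantitatively cleaner derivation for the SETH statement alone, which is in fact the construction the paper alludes to when it cites the~\cite{RW13} "SETH hardness construction \ldots in graphs with $m=O(n)$" without writing it out.
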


The  
{\it $k$ Dominating Set Hypothesis ($k$-DSH)}~\cite{PW10}
states that there exists $k_0$ ($k_0=3$ in~\cite{PW10})
such that for all $k \geq k_0$, a dominating set of size $k$
in an undirected graph on $n$ vertices cannot be found in $O(n^{k - \epsilon})$ for any
constant $\epsilon >0$. This conjecture formalizes a long-standing open problem, and 
it was shown in~\cite{PW10} that falsifying $k$-DSH
would falsify SETH. 

For even values of $k$, it was shown in~\cite{RW13} that solving Eccentricities  in 
$O(m^{2-\epsilon})$ time, for
any constant $\epsilon>0$, would falsify $k$-DSH. We extend this result
to all values of $k$, both odd and even, and establish it for the sub-$mn$ class (which includes
$O(m^{2-\epsilon})$), to obtain the following theorem.

\begin{theorem}[\textbf{Conditional Hardness Under $k$-DSH}] \label{thm:hardness}
Under $k$-DSH, 
Eccentricities does not have a sub-$mn$ time algorithm in an unweighted or weighted graph,
either directed or undirected. More precisely, if Eccentricities can be solved in 
$O(m^{\alpha} n^{2-\alpha - \epsilon})$ time, then for any $k \geq 3 + (2 \alpha/\epsilon)$,
a dominating set of size $k$  can be found in $O(n^{k - \epsilon})$ time.
\end{theorem}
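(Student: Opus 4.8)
The plan is to reduce $k$-Dominating Set to Eccentricities through an intermediate Orthogonal Vectors (OV) formulation, reading off the answer from the maximum eccentricity of a graph whose diameter is either $2$ or $3$. I would first treat even $k$ by a direct construction, and then reduce odd $k$ to the even case by guessing one vertex of the dominating set.

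For even $k$, observe that $G=(V,E)$ on $n$ vertices has a dominating set of size $k$ iff there are subsets $A,B\subseteq V$ with $|A|=|B|=k/2$ and $N_G[A]\cup N_G[B]=V$, where $N_G[\cdot]$ denotes closed neighbourhood. Encoding each $(k/2)$-subset $A$ as a vector $a\in\{0,1\}^{V}$ with $a[v]=1$ iff $v\notin N_G[A]$, this is exactly the statement that some pair $(a,b)$ is orthogonal, giving an OV instance with $N=\binom{n}{k/2}=\Theta(n^{k/2})$ vectors per side in dimension $n$. I would then apply the standard OV-to-Eccentricities gadget used in~\cite{RW13}: create a node for each $A$ (set $\mathcal A$), each $B$ (set $\mathcal B$), and each coordinate $v\in V$; join $A$ to $v$ iff $a[v]=1$ and $B$ to $v$ iff $b[v]=1$; and add apex nodes $\alpha,\beta$, with $\alpha$ adjacent to all of $\mathcal A\cup V$, $\beta$ adjacent to all of $\mathcal B\cup V$, and the edge $\alpha\beta$. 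A routine case analysis shows every distance is at most $2$ except that $d(A,B)=3$ precisely when $a\perp b$ (the only $A$–$B$ path then being $A$–$\alpha$–$\beta$–$B$); hence the maximum eccentricity is $3$ if a good pair exists and $2$ otherwise. The resulting graph $H$ has $N_H=\Theta(n^{k/2})$ nodes and $M_H=O(n^{k/2+1})$ edges (the edge bound from the at most $N\cdot n$ coordinate incidences).

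Plugging an $O(m^\alpha n^{2-\alpha-\epsilon})$ Eccentricities algorithm into $H$ gives running time $O(M_H^\alpha N_H^{2-\alpha-\epsilon})$, whose exponent simplifies to $\alpha+k-\epsilon k/2$ (the $\alpha k/2$ terms cancel). This is $O(n^{k-\epsilon})$ as soon as $\epsilon k/2\ge \alpha+\epsilon$, i.e. $k\ge 2+2\alpha/\epsilon$. For odd $k$ I would guess the vertex $v_0$ that any nonempty dominating set must contain: for each of the $n$ choices of $v_0$, set $U=V\setminus N_G[v_0]$ and test whether some $(k-1)$-set dominates $U$; since $k-1$ is even this is handled by the construction above, now with vectors restricted to the coordinates $U$ and with $(k-1)/2$-subsets, so $N_H=\Theta(n^{(k-1)/2})$ and $M_H=O(n^{(k+1)/2})$. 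The per-guess exponent becomes $\alpha+(k-1)-\epsilon(k-1)/2$, and multiplying by the $n$ guesses yields $\alpha+k-\epsilon(k-1)/2$, which is $\le k-\epsilon$ exactly when $k\ge 3+2\alpha/\epsilon$. Since the even threshold $2+2\alpha/\epsilon$ is smaller, the bound $k\ge 3+2\alpha/\epsilon$ covers both parities, yielding the quantitative claim; taking $\epsilon=2-\alpha-\beta$ for any sub-$mn$ bound $m^\alpha n^\beta$ (Definition~\ref{def:sub-mn}) then gives the qualitative statement that a sub-$mn$ Eccentricities algorithm would refute $k$-DSH~\cite{PW10}.

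The main obstacle is getting the gadget to realize the two cases as an eccentricity (not merely diameter) gap while keeping $M_H=O(n^{k/2+1})$: all of the uninteresting distances ($A$–$A'$, $B$–$B'$, $A$–coordinate, coordinate–coordinate, and $A$–$B$ for non-orthogonal pairs) must provably collapse to $2$, so that the \emph{only} way any eccentricity reaches $3$ is through a genuinely orthogonal pair. Two further points are routine. First, to \emph{find} (not just detect) a dominating set, I locate from the eccentricity output an $\mathcal A$-node of eccentricity $3$, fix its subset $A$, and brute-force its partner $B$ over the $\Theta(n^{k/2})$ candidates in $O(n^{k/2+1})$ time, which is far below $n^{k-\epsilon}$. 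Second, the construction is undirected and unweighted; the directed and weighted cases follow by orienting each edge in both directions and assigning unit weights, so Eccentricities on all four variants inherits the hardness.
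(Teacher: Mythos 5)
Your proposal is correct and follows essentially the same strategy as the paper: blow the instance up to $\Theta(n^{k/2})$ subset-vertices, realize domination as a diameter-$2$-versus-$3$ (equivalently, maximum-eccentricity) gap, handle odd $k$ by guessing one vertex of the dominating set and making $n$ calls, and run the identical exponent arithmetic to arrive at the same threshold $k \geq 3 + 2\alpha/\epsilon$. The only divergence is cosmetic: the paper uses the one-sided construction of~\cite{RW13} (one family of $(k/2)$-subsets against $V$ with a clique on $V$), whereas you use the symmetric OV-style gadget with two subset families and two apex vertices, which yields the same gap and the same vertex/edge counts.
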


We give the proof of  
Theorem~\ref{thm:hardness}
in Section~\ref{sec:seth}, and this also establishes
 Theorem~\ref{thm:seth-hardness},
since hardness under $k$-DSH implies hardness under SETH. 

\vh
\noindent
{\bf Eccentricities as a Central Problem for $mn$.} We observe that directed Eccentricities is a central 
problem
in the $mn$ class: If a sub-$mn$ time
algorithm is obtained for directed Eccentricities, not only would it refute $k$-DSH, SETH and MWCC
(Conjecture~\ref{conj:mwc}), it would also imply sub-$mn$ time algorithms for 
several MWCC-hard problems:
2-SiSP, 2-SiSC, $s$-$t$ Replacement Paths, ANSC and Radius
in directed graphs as well as Radius and Eccentricities in undirected graphs.
(The undirected versions of 2-SiSP, 2-SiSC,  and $s$-$t$ Replacement Paths 
 have near-linear time algorithms and are not relevant, 
 and there is no known sparse reduction from undirected ANSC to Eccentricities.)

\vspace{.05in}
\noindent
{\bf APSP.} APSP has a special status in the $mn$ class. Since its output size is $n^2$,
it has near-optimal algorithms~\cite{Thorup97, Hagerup00, Pettie04, PR05} for
graphs with $m=O(n)$. Also, the $n^2$ size for the APSP output means that
any inference made through sparse reductions to APSP will not be based on
a sub-$mn$ time bound but instead on a  sub-$mn + n^2$ time bound. It also turns out that
the SETH and $k$-DSH hardness results for Eccentricities depend crucially on staying with a purely
sub-$mn$ bound, and hence even though Eccentricities has a simple sparse $n^2$ reduction to APSP,
we do not have SETH or $k$-DSH hardness for computing APSP in sub-$mn + n^2$ time.

\vh
\noindent
{\bf Betweenness Centrality (BC).} We discuss this problem in Section~\ref{sec:centrality}.
Sparse reductions for several variants of BC were given in~\cite{AGW15} but none established
MWCC-hardness. In Section~\ref{sec:centrality} we give nontrivial sparse reductions to establish
MWCC hardness for some important variants of BC.

\vh
\noindent
{\bf Time Bounds for Sparse Graphs and their Separation under SETH}. 
It is readily seen
that the $\tilde{O}(m^{3/2})$ bound for triangle finding problems is a
better bound than the $\tilde{O}(mn)$ bound for the $mn$ class. But imposing
a total
ordering
on functions
of two variables requires some care. 
 For example, 
 maximal 2-connected subgraphs of a given directed graph
 can be computed in $O(m^{3/2})$ time~\cite{CHI+17}
 as well as in $O(n^2)$ time~\cite{HKL15}.
 Here, $m^{3/2}$ is a better bound for very sparse graphs, and $n^2$ for very dense graphs.
In Section~\ref{sec:sparse-time-bound} we 
give natural definition
of
what it means for 
for one
time bound to be smaller than another time bound for sparse graphs.
By our definitions in
Section~\ref{sec:sparse-time-bound}, $m^{3/2}$ is a smaller
  time bound than both $mn$ and $n^2$
  for sparse graphs.
Our definitions in Section~\ref{sec:sparse-time-bound}, in conjunction with Theorem~\ref{thm:seth-hardness}, establish that the
problems related to triangle listing must have {\it provably smaller}  time bounds
 for sparse graphs than Eccentricities under $k$-DSH or SETH.

\vh
\noindent
{\bf Sub-cubic equivalence for $\tilde{O}(n^3)$ versus the sub-$mn$ Partial Order for $\tilde{O}(mn)$.}
 Our sparse reductions
give a 
 partial order 
 on hardness
 for several graph problems with $\tilde{O}(mn)$ time bounds, and a hardness
conjecture for directed graphs relative to a specific problem MWC. 
While this partial order may appear to be weaker than  the sub-cubic equivalence class of problems 
with $\tilde{O}(n^3)$ time bound for $n$-node graphs~\cite{WW10},  we also
 show that under SETH or $k$-DSH there is a {\it provable separation}
 between Eccentricities, a problem in the $\tilde{O}(mn)$ partial order, and the $O(m^{3/2})$ class of triangle finding problems, even though all of these problems 
are equivalent under sub-cubic reductions. Thus, if we assume SETH, the equivalences achieved under sub-cubic reductions
{\it cannot hold} for the sparse versions of the problems (i.e., when parameterized by both $m$ and $n$).

\vh
The results for conditional hardness relative to SETH and $k$-DSH are fairly straightforward,
 and they adapt earlier hardness results to our framework.
 The significance of these hardness results is in the new
insights they give into our inability to make improvements to some long-standing
time bounds for important problems on sparse graphs. On the other hand, many of
our sparse reductions are highly intricate, and overall these reductions give
a  partial order 
(with several equivalences)   
 on the large class of graph problems that currently have
$\tilde{O}(mn)$ time algorithms.

\vh
\noindent
{\bf Roadmap.}
The rest of the paper is organized as follows. Sections~\ref{sec:undir} and \ref{sec:dir} present key
examples of our sparse reductions for undirected and directed graphs, with full details in
Sections~\ref{sec:otherundir} and \ref{sec:otherdir}. In Sections~\ref{sec:seth} and \ref{sec:sparse-time-bound}
we present SETH and $k$-DSH hardness results, and the resulting
provable split of the sub-cubic equivalence class under these hardness
results for sparse time bounds. Section~\ref{sec:centrality} discusses
betweenness centrality. The Appendix gives the definitions of the problems we consider.

\section{Weighted Undirected Graphs: MWC $\lesssimsp_{n^2}$ APSD}\label{sec:undir}

In undirected graphs, 
 the only known sub-cubic reduction from MWC to 
 APSD~\cite{RW11} 
 uses
 a dense
reduction to Min-Wt-$\Delta$.
Described
in~\cite{RW11} for integer edge weights
of value at most $M$,
it
first 
uses an algorithm in~\cite{LL09} to compute, in
$O(n^2 \cdot \log n \log nM)$
time, a 2-approximation $W$ to the weight of a
minimum weight cycle as well as shortest paths between all pairs of vertices with pathlength at most 
 $W/2$.
The reduced graph for Min-Wt-$\Delta$ 
is constructed as a (dense) bipartite graph with edges to represent all of these shortest paths, 
together with the edges of the original graph in one side of the bipartition. 
This results in each triangle in the reduced graph corresponding to a cycle in the original graph, and with  a minimum weight cycle guaranteed to be present as a triangle.
An MWC is then constructed using 
a simple explicit construction for Color Coding~\cite{RW11} with  2 colors.

The approach in~\cite{RW11} does not work in our case, as we are dealing with sparse reductions.
Instead, we give a sparse reduction directly from MWC to 
APSD.
In contrast to~\cite{RW11}, where finding a minimum weight 3-edge triangle gives the MWC in the original graph, 
in our reduction the MWC is constructed as a path $P$ in 
a
reduced graph followed by a shortest path in the original graph.

One may ask if we can sparsify the dense reduction from MWC to Min-Wt-$\Delta$ 
in~\cite{RW11}, but such a
reduction, though very desirable, would immediately refute MWCC  and would achieve a major
breakthrough by giving an $\tilde{O}(m^{3/2})$ time algorithm for undirected MWC.

 We now sketch our sparse reduction from 
undirected
 MWC 
 to 
 APSP.
 (In Section~\ref{sec:otherundir} we refine this to a sparse reduction to 
 APSD.)
 It is well known that in any cycle $C = \langle v_1, v_2, \ldots, v_l \rangle$  in a weighted undirected graph
$G=(V,E,w)$ there exists an edge $(v_i,v_{i+1})$ on $C$
 such that
$\lceil \frac{w(C)}{2} \rceil - w(v_i,v_{i+1}) \leq d_C (v_1,v_i) \leq \lfloor \frac{w(C)}{2} \rfloor$ and 
$\lceil \frac{w(C)}{2} \rceil - w(v_i,v_{i+1}) \leq d_C (v_{i+1},v_1) \leq \lfloor \frac{w(C)}{2} \rfloor$.
The above edge, $(v_i,v_{i+1})$, is called the \emph{critical edge} of $C$
 with respect to
 the start vertex $v_1$ in~\cite{RW11}.
 
We will make use of the following simple observation 
proved in Section~\ref{sec:otherundir}.

\begin{observation}	\label{obs:MinWtCyc}
Let $G = (V,E,w)$ be a weighted undirected graph.
 Let $C = \langle v_1, v_2, \ldots, v_l \rangle$ be a minimum weight cycle in $G$, and let
$(v_p,v_{p+1})$ be its critical edge with respect to $v_1$.
WLOG assume that $d_G (v_1,v_p) \geq d_G (v_1,v_{p+1})$.
If $G'$ is obtained by removing edge $(v_{p-1},v_p)$ from $G$, then the path $P = \langle v_1, v_l, \ldots, v_{p+1}, v_p \rangle$ is a shortest path from $v_1$ to
$v_p$ in $G'$.
\end{observation}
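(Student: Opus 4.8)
The plan is to verify that $P$ is a legal path in $G'$ of the expected weight, and then to rule out anything shorter in $G'$ by a cycle-exchange (symmetric-difference) argument that would otherwise contradict the minimality of $C$.

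First I would fix notation along the cycle. Write $F = \langle v_1, v_2, \ldots, v_p\rangle$ for the forward arc and let $a = \sum_{j=1}^{p-1} w(v_j,v_{j+1})$ be its weight; the backward arc is exactly the path $P$, of weight $b = w(C) - a$. The critical-edge inequalities give $d_C(v_1,v_p) = a \le \lfloor w(C)/2\rfloor \le \lceil w(C)/2\rceil \le b$, so $a \le b$. The WLOG choice that $v_p$ is the farther endpoint of the critical edge (i.e. $d_G(v_1,v_p)\ge d_G(v_1,v_{p+1})$) guarantees $v_p \ne v_1$, and hence that $(v_{p-1},v_p)$ is a genuine edge of $F$, distinct from the critical edge $(v_p,v_{p+1})$. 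Since $P$ ends with the critical edge $(v_{p+1},v_p)$ and never uses $(v_{p-1},v_p)$, it is a simple path present in $G'$ of weight $b$; thus $d_{G'}(v_1,v_p) \le b = w(P)$.

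For the matching lower bound I would argue by contradiction: suppose some $v_1$-to-$v_p$ path $Q$ in $G'$ has $w(Q) < b$. Both $Q$ and $F$ join $v_1$ to $v_p$, so I would consider the edge set $D = E(Q)\, \triangle\, E(F)$. Since $Q$ and $F$ are each simple paths with the same endpoints, every vertex has even degree in $D$ (each endpoint gets degree $2$ and each internal vertex even degree in the multiset union $E(Q)\uplus E(F)$, and deleting doubled edges preserves parity), so $D$ decomposes into edge-disjoint simple cycles of $G$. Moreover $D$ is nonempty, because $(v_{p-1},v_p)\in E(F)$ while $(v_{p-1},v_p)\notin E(Q)$ (as $Q$ lives in $G'$), so this edge survives in $D$. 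Finally $w(D) \le w(Q)+w(F) = w(Q)+a < b+a = w(C)$.

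Therefore $D$ contains at least one simple cycle, and since edge weights are positive its weight is at most $w(D) < w(C)$ --- contradicting that $C$ is a minimum weight cycle in $G$. Hence no such $Q$ exists, so $d_{G'}(v_1,v_p) \ge b$, and combined with the upper bound we get $d_{G'}(v_1,v_p) = b = w(P)$, i.e. $P$ is a shortest $v_1$-to-$v_p$ path in $G'$. I expect the main obstacle to be precisely this extraction step: $Q$ and $F$ may share vertices and even edges, so one cannot simply glue $Q$ to the reverse of $F$ and declare the result a short cycle. The parity/symmetric-difference argument, together with positivity of the edge weights (which forces each component cycle of $D$ to have positive weight bounded by $w(D)$), is what cleanly produces a genuine simple cycle of weight strictly below $w(C)$.
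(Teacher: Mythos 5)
Your proof is correct. The one structural point worth checking---that a hypothetical shorter $v_1$--$v_p$ path $Q$ in $G'$ together with the forward arc $F$ yields a genuine simple cycle of weight below $w(C)$---is handled cleanly by your symmetric-difference argument: all degrees in $E(Q)\,\triangle\,E(F)$ are even, the set is nonempty because it retains $(v_{p-1},v_p)$, and positivity of the weights bounds each component cycle by $w(Q)+w(F)<w(C)$. Note also that your derivation of $a\le b$ from the critical-edge inequalities is never actually used in the lower-bound argument, so the proof is robust to how one reads $d_C(v_1,v_p)$.

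The paper reaches the same conclusion by a slightly different route: it first proves (Lemma~\ref{lemma:MinWtCycProp-full}) that for any two vertices $x,y$ on a minimum weight cycle, the two arcs of $C$ between them are precisely a shortest path and a second simple shortest path in $G$, and then observes that since every $v_1$--$v_p$ path in $G'$ is a path of $G$ distinct from the forward arc (the removed edge $(v_{p-1},v_p)$ lies on that arc), the backward arc $P$ must be shortest in $G'$. The underlying exchange idea is identical to yours---a shorter competitor would combine with an arc of $C$ to produce a lighter cycle---but the paper's Lemma~\ref{lemma:MinWtCycProp-full} extracts that lighter cycle via an informal ``deviate at $u$, merge at $v$'' argument, which glosses over the possibility that the competitor meets the arc several times and that the resulting closed walk need not be simple. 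Your symmetric-difference decomposition closes exactly that gap, at the cost of not isolating the reusable statement about second simple shortest paths that the paper's lemma provides (and uses again implicitly elsewhere). Both are valid; yours is the more self-contained and airtight version of the same exchange argument.
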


 In our reduction we construct
a collection of graphs $G_{i,j,k}$, each with $2n$ vertices (containing 2 copies of $V$) and $O(m)$ edges,
with the guarantee that, for the minimum weight cycle $C$,
in at least one of the graphs
the edge $(v_{p-1},v_p)$ (in Observation~\ref{obs:MinWtCyc}) will not connect across the two copies of $V$ and the path $P$ of
Observation~\ref{obs:MinWtCyc}
will be present. Then, if a call to 
 APSP
computes $P$  as a shortest path from $v_1$ to $v_p$ (across the two copies of $V$), 
we can verify that edge 
$(v_p,v_{p+1})$ is not the last edge on the computed shortest path from $v_1$ to $v_p$ in $G$,  and so we can form the
concatenation of these two paths as a possible candidate for a minimum weight cycle. 
The challenge is to construct a small collection of graphs where we can ensure that
the  path we identify in one of the
derived graphs is in fact the simple path $P$ in the input graph. 

Each $G_{i,j,k}$ has two
 copies of each vertex $u \in V$, $u^1 \in V_1$ and $u^2 \in V_2$. All edges in $G$ are present 
 on the vertex set $V_1$, but
 there is no edge that connects any pair of vertices within $V_2$. 
 In $G_{i,j,k}$ there is an edge from $u^1 \in V_1$ to $v^2 \in V_2$ iff there is an edge from $u$ to
$v$ in $G$,
 and $u$'s $i$-th bit is $j$, and $\frac{M}{2^k} < w(u,v) \leq \frac{M}{2^{k-1}}$.
All the edges in $G_{i,j,k}$ retain their weights from $G$.
Thus, the edge $(u^1,v^2)$ is present in $G_{i,j,k}$ with weight $w$  only if $(u,v)$ is an edge in $G$ with the same weight $w$ and further, certain conditions (as described above)
hold for the indices $i,j,k$.
Here, 
$1 \leq i \leq \lceil \log n \rceil$, $j \in \{0,1\}$ and $k \in \{1, 2, \ldots, \lceil \log \rho \rceil \}$, so
we have 
$2 \cdot \log n \cdot \log \rho$
graphs. 
Figure \ref{fig:b} depicts the construction of graph $G_{i,j,k}$.

The first condition for an edge $(u^1,v^2)$ to be present in $G_{i,j,k}$ is that $u$'s $i$-th bit must be $j$.
This ensures that there exist a graph where the edge $(v^1_{p-1},v^2_p)$ is absent and the edge $(v^1_{p+1},v^2_p)$ is present (as 
$v_{p-1}$ and $v_{p+1}$ differ on at least 1 bit).
To contrast with a similar step in~\cite{RW11}, we need to find the path $P$ in the sparse derived graph, while in~\cite{RW11} it suffices to look for the 2-edge path
  that represents $P$ in a triangle in their dense reduced graph. 

The second condition --- that an edge $(u^1,v^2)$ is present only if $\frac{M}{2^k} < w(u,v) \leq \frac{M}{2^{k-1}}$ ---
 ensures that there is a graph $G_{i,j,k}$ in which, not only is  edge $(v^1_{p+1},v^2_p)$ is present  and
edge $(v^1_{p-1},v^2_p)$ is absent as noted by the first condition, but also the shortest path from 
$v^1_1$ to $v^2_p$ is in fact the path $P$ in Observation~\ref{obs:MinWtCyc}, and does not correspond to a false path 
where an edge in $G$ is 
traversed twice.
In particular, we show that this second condition allows us to exclude a shortest path from $v^1_1$ to 
$v^2_p$ of the following form: take the shortest path from $v_1$ to $v_p$ in $G$ on vertices in $V_1$,
then take an edge $(v^1_p,x^1)$, and then the edge $(x^1,v^2_p)$. Such a path, which has 
weight $d_C(v_1, v_p) + 2 w(x, v_p)$,  could be shorter 
than the desired path, which has weight  $d_C (v_1,v_{p+1}) + w(v_{p+1},v_p)$.  In our reduction we
avoid selecting this ineligible path by requiring that the weight of the selected path should not exceed 
$d_G(v_1, v_p)$ by more than $M/ 2^{k-1}$. We show 
 that these conditions suffice
to ensure that $P$ is identified in one of the $G_{i,j,k}$, and no spurious path of shorter length is
identified.
Notice that, in contrast to~\cite{RW11}, we do not estimate the MWC weight by computing a
2-approximation. Instead, this second condition allows us to identify the critical edge in the appropriate
graph.

\begin{figure}
\scriptsize
    \centering
        \makebox[.3\textwidth]{
            \begin{tikzpicture}
			\node[ellipse, draw, minimum width = 1cm, minimum height = 3cm](V1) [label=below:$V_1$] at (-3,0) {};
			\node[ellipse, draw, minimum width = 1cm, minimum height = 3cm](V2) [label=below:$V_2$] at (3,0) {};
			\draw[->] (-4,-1) -- (-3,0);
			\node[text width = 0.5cm] at (-4.5,-1) {all edges from E};
			\draw[->] (4,-1) -- (3,0);
			\node[text width = 1.5cm] at (4.5,-1) {no edges between vertices in $V_2$};
			\node[circle,draw,inner sep=0pt] (u1)[label=above:$u^1$] at (-3,1) {}; 
			\node[circle,draw,inner sep=0pt] (v2)[label=above:$v^2$] at (3,1) {}; 
			\draw (u1) -- (v2);
			\node[circle,draw,inner sep=0pt] (a1)[label=above:$a^1$] at (-3,0.5) {}; 
			\node[text width = 0.5cm] at (-4.25,0.75) {$w(u,a)$};
			\draw (u1) -- (a1);
			\draw[->] (-3.5,0.75) -- (-3,0.75);
			\node[circle,draw,inner sep=0pt] (a11) at (3,0.5) {}; 
			\draw (a1) -- (a11);
			\node[circle,draw,inner sep=0pt] (a12) at (3,0.75) {}; 
			\draw (a1) -- (a12);
			\node[circle,draw,inner sep=0pt] (a13) at (3,-0.4) {}; 
			\draw (a1) -- (a13);
			\node[circle,draw,inner sep=0pt] (c1)[label=above:$c^1$] at (-3,-0.5) {}; 
			\node[circle,draw,inner sep=0pt] (c11) at (3,-0.5) {}; 
			\draw (c1) -- (c11);	
			\node[circle,draw,inner sep=0pt] (c12) at (3,-0.25) {}; 
			\draw (c1) -- (c12);	
			\node[circle,draw,inner sep=0pt] (c13) at (3,0.35) {}; 
			\draw (c1) -- (c13);	
			\node[circle,draw,inner sep=0pt] (f1)[label=above:$f^1$] at (-3,-1) {}; 
			\node[circle,draw,inner sep=0pt] (f11) at (3,-1) {};
			\draw (f1) -- (f11); 
			\node[circle,draw,inner sep=0pt] (f12) at (3,-0.75) {};
			\draw (f1) -- (f12); 
			\node[circle,draw,inner sep=0pt] (f13)[label=right:$g^2$] at (3,-1.25) {};
			\draw (f1) -- (f13); 
			\node[text width = 1.5cm] at (0,-1.5) {$w(f,g)$};
			\node[text width = 4cm] at (0,1.5) {$(u^1,v^2)$ present if $(u,v) \in E$ \\ and $u$'s $i$-th bit is $j$ \\ and $\frac{M}{2^k} < w(u,v) \leq \frac{M}{2^{k-1}}$};
            \end{tikzpicture}}
    \caption{Construction of $G_{i,j,k}$.}
    \label{fig:b}
\end{figure}
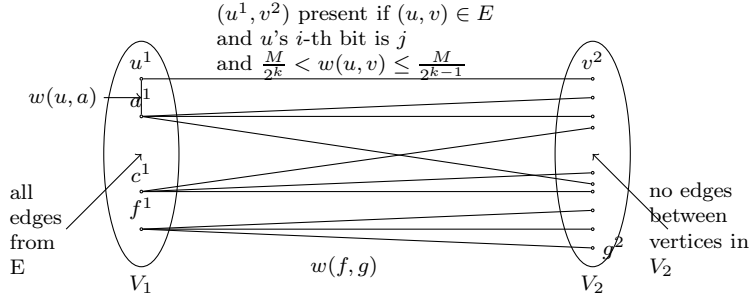

The following lemma establishes the correctness of the resulting sparse reduction described in Algorithm 
 MWC-to-APSP.
 The proof and full details are in Section~\ref{sec:otherundir},
where Figure~\ref{fig:reductionsMWC} illustrates how Observation~\ref{obs:MinWtCyc} applies to the MWC in a suitable $G_{i,j,k}$.

\begin{lemma}	\label{lemma:MinWtCycAPSP}
Let $C = \langle v_1, v_2, \ldots, v_{l} \rangle$ be a minimum weight cycle in $G$ and let $(v_p, v_{p+1})$ be its critical edge with
respect to the start vertex $v_1$.
Assume $d_G (v_1, v_p) \geq d_G (v_1, v_{p+1})$.
Then there exist $i \in \{1,\ldots,\lceil \log n \rceil\}$, $j \in \{0,1\}$
and $k \in \{1, 2, \ldots, \lceil \log \rho \rceil \}$ such that the following conditions hold:\\
(i) $d_{G_{i,j,k}} (v^1_1,v^2_{p}) + d_G (v_1, v_p) = w(C) $;\\
 (ii) $Last_{G_{i,j,k}} (v^1_1,v^2_{p}) \neq Last_G (v_1, v_p)$;\\
(iii) $d_{G_{i,j,k}} (v^1_1,v^2_{p}) \leq d_G (v_1, v_p) + \frac{M}{2^{k-1}}$.

The converse also holds: If there exist vertices $y, z$ in $G$ with the above three properties satisfied for $y=v_1$,
$z=v_p$ for one of the graphs $G_{i,j,k}$ using a weight $wt$ in place of $w(C)$ in part 1, then there exists a cycle in $G$ that passes through  $z$ of weight at most $wt$.
\end{lemma}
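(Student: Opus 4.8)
The plan is to prove both directions around a single structural picture. Write $C=\langle v_1,\ldots,v_l\rangle$ with critical edge $(v_p,v_{p+1})$, let $w^\ast=w(v_p,v_{p+1})$, and split $w(C)=a+w^\ast+b$, where $a$ is the weight of the arc $v_1\to\cdots\to v_p$ and $b$ that of the arc $v_{p+1}\to\cdots\to v_l\to v_1$. Since $C$ is an MWC and each arc has weight at most $\lfloor w(C)/2\rfloor$, each arc is a shortest path in $G$, so $d_G(v_1,v_p)=a$ and $d_G(v_1,v_{p+1})=b$; the hypothesis $d_G(v_1,v_p)\ge d_G(v_1,v_{p+1})$ is exactly $a\ge b$. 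The path $P$ of Observation~\ref{obs:MinWtCyc} is the backward arc followed by the critical edge, of weight $b+w^\ast=w(C)-a$.

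\emph{Forward direction.} First I choose the indices. Pick $k$ so that $\frac{M}{2^k}<w^\ast\le\frac{M}{2^{k-1}}$ (such a $k\le\lceil\log\rho\rceil$ exists since every weight lies in $[m,M]$). As $G$ is simple and $l\ge 3$, $v_{p-1}\ne v_{p+1}$, so they differ in some bit $i$; set $j$ to be $v_{p+1}$'s $i$-th bit, so $v_{p+1}$'s $i$-th bit is $j$ while $v_{p-1}$'s is $1-j$. Then the crossing edge $(v_{p+1}^1,v_p^2)$ is present and $(v_{p-1}^1,v_p^2)$ is absent, so the image $P'=\langle v_1^1,v_l^1,\ldots,v_{p+1}^1,v_p^2\rangle$ of $P$ lives in $G_{i,j,k}$ with weight $b+w^\ast$, giving the upper bound in (i). For the matching lower bound I note that every $v_1^1\to v_p^2$ path ends in a crossing edge $(x^1,v_p^2)$ with $x$ eligible, and that paths within $V_1$ (where all of $E$ is present) cannot be shortened by dipping into the edgeless $V_2$; hence $d_{G_{i,j,k}}(v_1^1,v_p^2)=\min_{x\ \mathrm{eligible}}\big(d_G(v_1,x)+w(x,v_p)\big)$. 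If some eligible $x$ gave a value below $b+w^\ast=w(C)-a$, then concatenating a shortest $v_1\rightsquigarrow x$ path, the edge $(x,v_p)$, and the reversed arc $v_p\to\cdots\to v_1$ (weight $a$) would be a closed walk of weight $<w(C)$ whose two edges at $v_p$ differ (eligibility forces $x\ne v_{p-1}$), and extracting a simple cycle through $v_p$ would contradict minimality of $C$. Thus $d_{G_{i,j,k}}(v_1^1,v_p^2)=b+w^\ast$, which is (i). Condition (iii) is then immediate from $a\ge b$: the excess $b+w^\ast-a\le w^\ast\le\frac{M}{2^{k-1}}$. For (ii), the shortest path enters $v_p^2$ through an eligible vertex (whose $i$-th bit is $j$), while the stored predecessor of $v_p$ for $v_1\to v_p$ in $G$ may be taken to be $v_{p-1}$, whose $i$-th bit is $1-j$; the two differ.

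\emph{Converse direction.} Given $y,z,wt$ and a graph $G_{i,j,k}$ satisfying the three properties, let $Q'$ be a shortest $y^1\to z^2$ path and project it to a walk $W$ in $G$ of the same weight $d_{G_{i,j,k}}(y^1,z^2)$, ending at $z$ via the edge $(x,z)$ with $x=Last_{G_{i,j,k}}(y^1,z^2)$. Append a shortest $z\rightsquigarrow y$ path $S$ in $G$, of weight $d_G(z,y)=d_G(y,z)$, to obtain a closed walk through $z$ of total weight $d_{G_{i,j,k}}(y^1,z^2)+d_G(y,z)=wt$ by the analogue of (i). The analogue of (ii), $x\ne Last_G(y,z)$, guarantees that the edge by which $W$ enters $z$ differs from the edge by which $S$ leaves $z$, so taking the last common vertex of $W$ and $S$ before $z$ carves out a genuine simple cycle through $z$ of weight at most $wt$. (The analogue of (iii) is what rules out $Q'$ being a spurious bounce off a neighbour of $z$, keeping the extracted object a proper cycle rather than an edge traversed twice.)

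\emph{Main obstacle.} The crux is the lower bound in (i): certifying that the shortest path the oracle returns in $G_{i,j,k}$ is really the simple long-way path $P$ and not a cheaper impostor, where two families of impostors must be killed simultaneously but by different devices. A path that reaches $v_p$ in $V_1$ and then bounces to $v_p^2$ through a neighbour costs at least $2\cdot\frac{M}{2^k}=\frac{M}{2^{k-1}}$ extra, so the weight-bucket choice of $k$ together with $a\ge b$ (which caps the legitimate excess at $w^\ast\le\frac{M}{2^{k-1}}$) separates it out; a path taking a genuine shortcut that ends at $v_{p-1}$ is instead killed by the bit choice, which makes $(v_{p-1}^1,v_p^2)$ absent. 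Forcing a single $(i,j,k)$ to defeat both at once is exactly what bit-sampling plus weight-bucketing buys. A secondary nuisance is tie-breaking when $a=w(C)/2$ (so $b+w^\ast=a$ and $(v_{p+1},v_p)$ is also a shortest-path last edge in $G$), which I would resolve by fixing shortest-path representatives consistently so that (ii) still holds.
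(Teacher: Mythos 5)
Your proposal follows essentially the same route as the paper's proof (its Lemmas~\ref{lemma:MinWtCycAPSP1} and \ref{lemma:MinWtCycAPSP2}): the same choice of $(i,j,k)$ from the differing bit of $v_{p-1},v_{p+1}$ and the weight bucket of the critical edge, the same identification of the image $P'$ of $P$ as the shortest $v_1^1$--$v_p^2$ path with a cycle-extraction contradiction for the lower bound, and the same converse via projecting the $G_{i,j,k}$-path, appending a shortest $z$--$y$ path, and using conditions (ii) and (iii) to rule out non-simple degeneracies. If anything you are slightly more explicit than the paper about the two delicate points --- the ``bounce'' impostor killed by the weight bucket and the tie case $a=w(C)/2$ affecting $Last_G$ --- which the paper's own write-up leaves implicit.
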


\begin{algorithm}[H]	
\scriptsize
\caption*{ MWC-to-APSP($G$)}
\begin{algorithmic}[1]	
\State $wt \leftarrow \infty$
\For{$1\leq i\leq \lceil \log n\rceil$, $j \in \{0,1\}$, and $1\leq k\leq \lceil \log \rho \rceil$}
       \State{Compute APSP on $G_{i,j,k}$}
	\For{$y,z \in V$}
		\State {\bf  if} $d_{G_{i,j,k}} (y^1, z^2) \leq d_G (y,z) + \frac{M}{2^{k-1}}$ {\bf then} check if $Last_{G_{i,j,k}} (y^1, z^2) \neq Last_G (y, z)$ 	\label{alg1:check1}
		\State {\bf if} both checks in Step~\ref{alg1:check1}  hold {\bf then} $wt \leftarrow min(wt, d_{G_{i,j,k}} (y^1,z^2) + d_G (y, z))$ 
	\EndFor
\EndFor
\State {\bf return} $wt$
\end{algorithmic} 
\end{algorithm}

\subparagraph*{Bit-sampling and Color Coding:}
Color Coding 
is a method 
introduced 
by
Alon, Yuster and Zwick ~\cite{AYZ97}.
For the special case of 2 colors,
the method constructs 
a collection $C$ of $O(\log n)$ different 2-colorings on an $n$-element set $V$, such that for every 
pair $\{x,y\}$ in $V$, there
 is a 2-coloring in $C$ that assigns different colors to $x$ and $y$.
 When the elements of $V$ have unique $\log n$-bit labels, e.g., by numbering them from 0 to $n-1$,
our bit-sampling 
method on index $i$ (ignoring indices $j$ and $k$)  can be viewed as an explicit construction
of exactly 
$\lceil \log n\rceil$ hash functions for the 2-perfect hash family: the $i$-th hash function  
assigns to each element
the $i$-th bit in its label as its color. 

In our construction we actually use $2\log n$ functions (using both $i$ and $j$) since we need a stronger version of color coding  where,
for any pair of vertices $x$, $y$, there is a hash function that assigns color 0 to $x$ and 1 to $y$ and another that assigns 1 to $x$ and 0 to $y$.
This is needed in order to ensure that when $x=v_{p-1}$ and $y=v_{p+1}$, the edge $(v^1_{p-1},v^2_p)$ is absent and the edge
$(v^1_{p+1},v^2_p)$ is present.
 A different variant of Color Coding  with 2 colors 
 is used in~\cite{RW11} in their dense reduction from 
undirected MWC to Min-Wt-$\Delta$,
and we do not immediately see how to
apply our 
bit-sampling
technique 
there.

\Xomit{
Our bit-fixing method differs from
a `bit-encoding' technique used in some reductions in~\cite{AGW15}, 
which
creates $2 \log n$ new vertices
 with $O(\log n)$ bit labels and adds edges to these vertices
based on their bit patterns, in order
to maintain sparsity
while preserving paths from the original graph.
(We use this bit-encoding technique from~\cite{AGW15} in our sparse reduction from 
2-SiSP to Radius in Section~\ref{sec:dir}.)
However, the bit-fixing technique we use in our reduction
here is different, and it selectively samples the edges from the original graph to be placed in the reduced graph.
}
Our bit-sampling method differs from a `bit-encoding' technique used in some reductions 
in~\cite{AGW15, ACK16}, where the objective is to
 preserve sparsity in the constructed graph while 
 also
 preserving paths from the original graph
$G=(V,E)$. This technique
creates paths between 
two copies of $V$ by adding $\Theta(\log n)$ new
vertices with 
$O(\log n)$ bit labels,
 and using the $O(\log n)$ bit labels on these new vertices to induce the desired paths in the constructed graph.
 The bit-encoding technique (from~\cite{AGW15}) is useful for certain types of reductions, and we use it
in our sparse reduction from 2-SiSP to Radius in Section~\ref{sec:dir},
 and from 2-SiSP to BC in Section~\ref{sec:centrality}.

 However, the bit-sampling technique we use in our reduction here is different.
 Here the objective is to selectively sample the edges from the original graph 
 to be placed in the reduced graph, based on the bit-pattern of the end points and the edge weight.
 In our construction, we create $\Theta(\log n)$ 
different
 graphs, where
in each graph
 the copies of $V$ are connected by single-edge paths, without requiring 
 additional intermediate vertices.
\section{Weighted Directed Graphs: 2-SiSP $\lesssimsp_{m+n}$ Radius }\label{sec:dir}

A sparse $O(n^2)$ reduction  from 2-SiSP to APSP was given in~\cite{GL09}.
Our sparse reduction from  2-SiSP to Radius refines this result and the sub-$mn$ partial order by plugging the Radius 
and Eccentricities problems
within the sparse reduction chain from 2-SiSP to APSP.     
 Also, in Section~\ref{sec:otherdir} we show MWC $\leqsp_{m+n}$ 2-SiSP,
 thus establishing MWC-hardness for both 2-SiSP and Radius.
Our 2-SiSP to Radius reduction here is unrelated to the 
sparse
 reduction in~\cite{GL09}
 from 2-SiSP to APSP.

A sub-cubic reduction from Min-Wt-$\Delta$ to Radius is in~\cite{AGW15}.
This reduction transforms the problem of finding a minimum weight triangle to the problem of computing Radius by creating a 4-partite graph.
However, in order
to use this reduction to 
show 
MWC
hardness of the Radius problem
we 
would
need to show that Min-Wt-$\Delta$ is MWC-hard.
But such a reduction would give an $O(m^{3/2})$ time algorithm for MWC, thus refuting the MWC Conjecture.
Instead, we
present a more complex reduction from 2-SiSP, which
is MWC-hard (as shown in Section~\ref{sec:otherdir}).

 The input is $G = (V,E,w)$,  with source  $s$  and sink $t$ in $V$, and a shortest path $P$ ($s=v_0\rightarrow v_1\rightsquigarrow v_{l-1}\rightarrow v_l=t$). We need to compute a second simple  $s$-$t$
shortest path.

 Figure~\ref{fig:2SiSPRadius} gives an example of our 
 reduction to an input $G''$ to the Radius problem
for $l = 3$.
In $G''$
we first map every edge $(v_j,v_{j+1})$ lying on $P$ to the vertices $z_{j_o}$ and 
$z_{j_i}$ such that the shortest path from $z_{j_o}$ to $z_{j_i}$ corresponds to the shortest path from $s$ to $t$ avoiding the edge
$(v_j,v_{j+1})$.
We then add vertices $y_{j_o}$ and $y_{j_i}$ in the graph and connect them to vertices $z_{j_o}$ and $z_{j_i}$ 
 by adding edges $(y_{j_o}, z_{j_o})$ and $(y_{j_i}, z_{j_i})$, and then additional edges from $y_{j_o}$ to other $y_{k_o}$ and $y_{k_i}$ vertices 
such that the longest
shortest path from $y_{j_o}$ is to the vertex $y_{j_i}$, which in turn corresponds to the shortest path from $z_{j_o}$ to $z_{j_i}$.
In order to preserve sparsity,
 we have an  interconnection from each  $y_{j_o}$ vertex to all $y_{k_i}$ vertices 
(except for $k = j$)
 with a sparse construction
 by using  
 $2\log n$
 additional vertices $C_{r,s}$ in a manner
 similar to a bit-encoding technique used in~\cite{AGW15} in 
 their
 reduction from \MinWeightDelta{}  to Betweenness Centrality 
 (this technique however,  
 is different from the  new `bit-fixing' technique used 
 in Section~\ref{sec:undir}),
and we have two additional vertices $A,B$ with suitable edges to
induce connectivity among the 
$y_{j_o}$
vertices. 
In our construction, we ensure that the center is 
one of the $y_{j_o}$ vertices and hence computing the Radius in the reduced
graph gives the minimum among all the shortest paths from $z_{j_o}$ to $z_{j_i}$.
This corresponds to a shortest replacement path from $s$ to $t$.
Further details are in Section~\ref{sec:otherdir},  where the following three claims are proved.

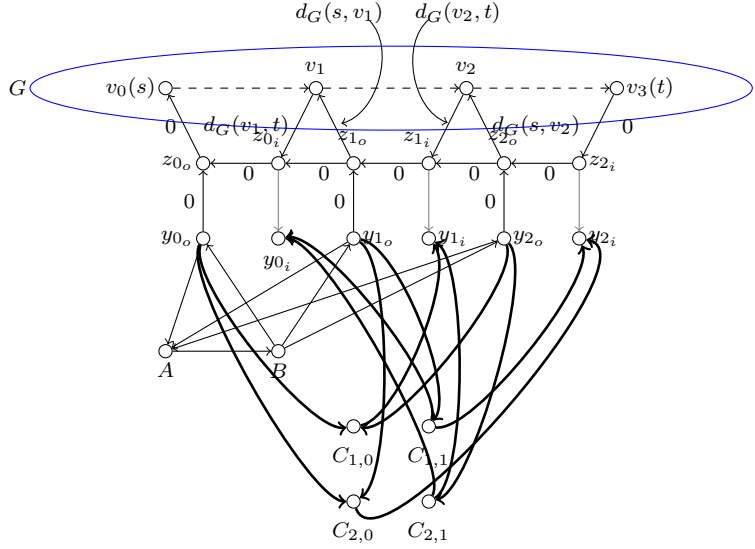
\begin{figure}
\scriptsize
    \centering
        \makebox[.3\textwidth]{
            \begin{tikzpicture}[scale=0.5, every node/.style={circle, draw, inner sep=0pt, minimum width=5pt}]
            \node (v0)[label=left:$v_0(s)$] at (0,0)  {};
            \node (v1)[label=above:$v_1$] at (4,0) {};
            \node (v2)[label=above:$v_2$]	at (8,0)	{};
            \node (v3)[label=right:$v_3(t)$] at (12,0)	{};
            \node[ellipse, color=blue, fit=(v0)(v1)(v2)(v3), inner sep=3mm](G) [label=left:$G$] {};
            \node (z0o)[label=left:$z_{0_o}$] at (1,-2)  {};
            \node (z0i) at (3,-2)  {};
            \node (z1o)[label=above:$z_{1_o}$] at (5,-2)  {};
            \node (z1i) at (7,-2)  {};
            \node (z2o)[label=above:$z_{2_o}$] at (9,-2)  {};
            \node (z2i)[label=right:$z_{2_i}$] at (11,-2)  {};
            \draw[->] (v0) -- (v1) [dashed] {};
            \draw[->] (v1) -- (v2) [dashed] {};
            \draw[->] (v2) -- (v3) [dashed] {};
            \begin{scope}[->,every node/.style={ right}]
            \node (labelz0i)[label=above:$z_{0_i}$] at (2.5,-2) {};
            \node (labelz1i)[label=above:$z_{1_i}$] at (6.5,-2) {};
            \draw[->] (z0o) -- (v0) node[midway, label=left:$0$] {};        
            \draw[->] (z0i) -- (z0o) {};
            \node (labelz0iz0o)[label=below:$0$] at (2,-1.65) {};
            \draw[->] (v1) -- (z0i) node[midway, label=left:${d_G(v_1,t)}$] {};
            \draw[->] (z1o) -- (z0i) {}; 
            \node (labelz1oz0i)[label=below:$0$] at (4,-1.65) {};
            \draw[->] (z1i) -- (z1o) {};
            \node (labelz1iz1o)[label=below:$0$] at (6,-1.65) {};
            \draw[->] (z1o) -- (v1) {};
            \node (labelz1ov1)[label=left:${d_G(s,v_1)}$] at (6,2) {};
            \node (z1ov1start) at (5,2) {};
            \node (z1ov1end) at (4.25,-1) {};
            \draw[->] (z1ov1start) to  [bend left=60, distance=10mm]  (z1ov1end) {};
            \draw[->] (v2) -- (z1i) {};
            \node (labelv2z1i)[label=right:${d_G(v_2,t)}$] at (6,2) {};
            \node (v2z1istart) at (7,2) {};
            \node (v2z1iend) at (7.5,-1) {};
            \draw[->] (v2z1istart) to  [bend right=60, distance=10mm]  (v2z1iend) {};
            \draw[->] (z2o) -- (z1i) {};
            \node (labelz2oz1i)[label=below:$0$] at (8,-1.65) {};
            \draw[->] (z2o) -- (v2) {}; 
            \node (labelz2ov2)[label=right:${d_G(s,v_2)}$] at (8.05,-1) {};
            \draw[->] (v3) -- (z2i) node[midway, label=right:$0$] {}; 
            \draw[->] (z2i) -- (z2o) {};
            \node (labelz2iz2o)[label=below:$0$] at (10,-1.65) {};
            \end{scope}
            \node (y0o)[label=left:$y_{0_o}$] at (1,-4)  {};
            \node (y0i)[label=below:$y_{0_i}$] at (3,-4)  {};
            \node (y1o)[label=right:$y_{1_o}$] at (5,-4)  {};
            \node (y1i)[label=right:$y_{1_i}$] at (7,-4)  {};
            \node (y2o)[label=right:$y_{2_o}$] at (9,-4)  {};
            \node (y2i)[label=right:$y_{2_i}$] at (11,-4)  {};
            \begin{scope}[->,every node/.style={ right}]
            \draw[->] (y0o) -- (z0o) node[midway, label=left:$0$] {};
            \draw[->] (y1o) -- (z1o) node[midway, label=left:$0$] {};
            \draw[->] (y2o) -- (z2o) node[midway, label=left:$0$] {};
            \draw[->, color=gray] (z0i) -- (y0i) {};
            \draw[->, color=gray] (z1i) -- (y1i) {};
            \draw[->, color=gray] (z2i) -- (y2i) {};
            \end{scope}
            \node (A)[label=below:$A$] at (0,-7) {};
            \node (B)[label=below:$B$] at (3,-7) {};
            \begin{scope}[->,every node/.style={ right}]
            \draw[->] (A) -- (B) {};
            \draw[->] (y0o) -- (A) {};
			 \draw[->] (y1o) -- (A) {};  
   			 \draw[->] (B) -- (y0o) {};
			 \draw[->] (B) -- (y1o) {};
			 \draw[->] (y2o) -- (A) {};     
			 \draw[->] (B) -- (y2o) {};
            \end{scope}
            \node (C10)[label=below:$C_{1,0}$] at (5,-9) {};
            \node (C11)[label=below:$C_{1,1}$] at (7,-9) {};
            \node (C20)[label=below:$C_{2,0}$] at (5,-11) {};
            \node (C21)[label=below:$C_{2,1}$] at (7,-11) {};
            \begin{scope}[->,every node/.style={ right}]
            \draw[->, line width=1pt] (y0o) to  [bend right=60, distance=10mm] (C10) {};
            \draw[->, line width=1pt] (y0o) to  [bend right=60, distance=10mm] (C20) {};
            \draw[->, line width=1pt] (C11) to  [bend right=60, distance=10mm] (y0i) {};
            \draw[->, line width=1pt] (C21) to  [bend right=60, distance=10mm] (y0i) {};
            \draw[->, line width=1pt] (y1o) to  [bend left=60, distance=10mm] (C11) {};
            \draw[->, line width=1pt] (y1o) to  [bend left=60, distance=10mm] (C20) {};
            \draw[->, line width=1pt] (C10) to  [bend right=60, distance=10mm] (y1i) {};
            \draw[->, line width=1pt] (C21) to  [bend right=60, distance=10mm] (y1i) {};
            \draw[->, line width=1pt] (y2o) to  [bend left=60, distance=10mm] (C10) {};
            \draw[->, line width=1pt] (y2o) to  [bend left=60, distance=10mm] (C21) {};
            \draw[->, line width=1pt] (C11) to  [bend right=60, distance=10mm] (y2i) {};
            \draw[->, line width=1pt] (C20) to  [bend right=120, distance=20mm] (y2i) {};            
            \end{scope}
            \end{tikzpicture}}
    \caption{$G''$ for $l = 3$. The gray and the bold edges have weight $\frac{11}{9}M'$ and $\frac{1}{3}M'$ respectively. All the outgoing (incoming) edges from (to) $A$ have weight $0$ and the outgoing edges from $B$ have weight $M'$.}
    \label{fig:2SiSPRadius}
\end{figure}

 \textit{(i) For each $0\leq j\leq l-1$, the longest shortest path in $G''$ from $y_{j_o}$ is to the vertex $y_{j_i}$.}\\
\textit{(ii) A shortest path from $z_{j_o}$ to $z_{j_i}$ corresponds to a  replacement path for the edge $(v_j,v_{j+1})$.}\\
\textit{(iii) One of the vertices among $y_{j_o}$'s is a center of $G''$.}

Thus 
after computing
the radius in $G''$,
 we can use $(i)$, $(ii)$, and $(iii)$ to compute the weight of a shortest replacement
path from $s$ to $t$, which is a second simple shortest path from $s$ to $t$.
The cost of this reduction is $O(m + n \log n)$. For full details see Section~\ref{sec:otherdir}.
\section{Conditional Hardness Under $k$-DSH}	\label{sec:seth}

 The following lemma shows that a sub-$mn$ time algorithm for
Diameter in an unweighted graph, either undirected or directed, would falsify 
$k$-DSH. This proves Theorems~\ref{thm:hardness} and Theorem~\ref{thm:seth-hardness}
since the Diameter of a graph can be 
computed in $O(n)$ time after one call to Eccentricities on the same graph.
Diameter is in the $\tilde{O}(mn)$ class, but at this time we do not know if 
Diameter is MWCC-hard.

\begin{lemma}	\label{thm:kDominatingDiameter}
Suppose for some constant $\alpha$  there is an $O(m^{\alpha} \cdot n^{2- \alpha - \epsilon})$
time algorithm, for some $\epsilon >0$,  for solving Diameter in an unweighted $m$-edge $n$-node graph, either undirected or
directed. Then there exists a $k'>0$ such that 
for all $k\geq k'$, the $k$-Dominating Set problem can be solved in $O(n^{k - \epsilon})$ time.
\end{lemma}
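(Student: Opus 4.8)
The plan is to adapt the $k$-Dominating-Set-to-Diameter reduction of \cite{PW10, RW13}, but to parameterize it carefully by \emph{both} $m$ and $n$ so that the loss incurred for odd $k$ is controlled. Given a host graph $H$ on $N$ vertices, write $k = 2t + r$ with $r \in \{0,1\}$ and $t = \lfloor k/2\rfloor$. I will build $N^{r}$ Diameter instances $G_x$, one for each choice of a ``middle'' vertex $x$ when $r=1$ (and a single instance when $r=0$), arranged so that $H$ has a dominating set of size $\le k$ iff some $G_x$ has diameter $3$.

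Each $G_x$ has three main vertex groups: a set $A$ with a node $a_S$ for every $t$-subset $S$ of $V(H)$, a set $B$ with a node $b_T$ for every $t$-subset $T$, and a copy $W=\{w_v : v \in V(H)\}$. I place an edge $a_S \sim w_v$ (resp.\ $b_T \sim w_v$) exactly when $v$ is \emph{not} dominated by $S\cup\{x\}$ (resp.\ not dominated by $T\cup\{x\}$). The key distance identity is that $\mathrm{dist}(a_S,b_T)=2$ iff $a_S$ and $b_T$ share a $W$-neighbor, i.e.\ iff some vertex is dominated by neither $S\cup\{x\}$ nor $T\cup\{x\}$, i.e.\ iff $S\cup T\cup\{x\}$ is not dominating. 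To cap all \emph{other} pairwise distances at $2$ and the $a_S$--$b_T$ distances at $3$, I add two apex vertices $p$ (adjacent to $A\cup W$) and $q$ (adjacent to $B\cup W$); then $a_S \sim w_{v'} \sim q \sim b_T$ is a length-$3$ path whenever $S$ alone fails to dominate $V(H)$ (which holds WLOG, since a $t$-set dominating all of $V(H)$ already yields a size-$\le k$ dominating set, detectable separately and not on the critical path). Hence $G_x$ has diameter $3$ iff $S\cup T\cup\{x\}$ dominates $H$ for some $S,T$, and diameter $2$ otherwise. Since every dominating set of size $\le k$ splits as $S\cup\{x\}\cup T$ with $|S|,|T|\le t$ (padding to the exact sizes only helps domination), the union over all $x$ decides whether $H$ has a dominating set of size $\le k$. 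The directed case is handled analogously, orienting the bipartite layers as $a_S \to w_v \to b_T$ and adding directed apex gadgets so that every ordered-pair distance except the critical $a_S \rightsquigarrow b_T$ ones is finite and at most $3$, with the same $2$-vs-$3$ dichotomy.

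For the running-time accounting, each $G_x$ has $n=O(N^{t})$ vertices and $m=O(N^{t+1})$ edges (the $A$--$W$ and $B$--$W$ adjacencies dominate). Running the hypothesized $O(m^{\alpha}n^{2-\alpha-\epsilon})$ algorithm on each of the $N^{r}$ instances costs
\[
N^{r}\cdot O\!\left((N^{t+1})^{\alpha}\,(N^{t})^{2-\alpha-\epsilon}\right)
= O\!\left(N^{\,r+\alpha+2t-t\epsilon}\right).
\]
Since $2t=k-r$, the exponent equals $k+\alpha-t\epsilon$, and requiring $k+\alpha-t\epsilon \le k-\epsilon$ together with $t=\lfloor k/2\rfloor \ge (k-1)/2$ yields $k\ge 3+2\alpha/\epsilon$. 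Thus for every such $k$ the total time is $O(N^{k-\epsilon})$, which solves $k$-Dominating Set faster than $N^{k}$ and contradicts $k$-DSH; taking $k'=\lceil 3+2\alpha/\epsilon\rceil$ proves the lemma.

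The main obstacle, and exactly the point where this strengthens the even-$k$ result of \cite{RW13}, is the \emph{odd-$k$} case. A direct split into $\lceil k/2\rceil$- and $\lfloor k/2\rfloor$-subsets would force $n=\Theta(N^{(k+1)/2})$, inflating the $n^{2-\alpha-\epsilon}$ factor and costing an extra $\approx 1$ in the $N$-exponent, which only gives the much weaker threshold $k\gtrsim 2\alpha/\epsilon + 2/\epsilon$. Iterating over the single middle vertex $x$ instead keeps both sides at $N^{t}$ with $t=\lfloor k/2\rfloor$ and pays only the benign factor $N^{r}=N$, recovering the clean threshold $3+2\alpha/\epsilon$. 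The only other delicate point is the apex gadget: it must simultaneously pull all non-critical distances down to $2$ (guaranteeing connectivity and a genuine diameter of $2$ in the NO case) while \emph{never} creating a spurious length-$2$ route between $a_S$ and $b_T$ --- which is precisely why $p$ attaches only to $A\cup W$ and $q$ only to $B\cup W$, so that the sole length-$2$ $A$--$B$ paths must pass through $W$.
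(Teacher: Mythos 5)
Your proof is correct and follows essentially the same route as the paper's: the \cite{RW13}-style subset-versus-vertex ``diameter $2$ vs.\ $3$'' gadget for even $k$, iteration over a single middle vertex $x$ with $\lfloor k/2\rfloor$-subsets to handle odd $k$ without inflating the vertex count, and the identical $N^{r}\cdot (N^{t+1})^{\alpha}(N^{t})^{2-\alpha-\epsilon}$ accounting yielding the threshold $k\ge 3+2\alpha/\epsilon$. The remaining differences are cosmetic: you use two subset families $A,B$ with apex vertices $p,q$ where the paper uses one subset family plus a clique on the original vertex set, you fold $x$ into the domination test where the paper instead deletes from $V_2$ the vertices $x$ dominates, and for the directed case the paper simply replaces each edge by two oppositely oriented arcs, which is simpler and fully specified compared to your one-way orientation sketch.
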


\begin{proof}
When $k$ is even we use a construction in~\cite{RW13}.
To determine if  undirected graph $G = (V,E)$ has a
 $k$-dominating set
 we form 
 $G' = (V',E')$, where $V'= V_1 \cup V_2$,  with $V_1$ containing a vertex for each subset of $V$ of size $k/2$  and $V_2=V$. We add an edge from a vertex $v \in V_1$ to a vertex $x \in V_2$ if the subset corresponding to $v$ does not dominate $x$.
We induce a clique in the vertex partition $V_2$. As shown in~\cite{RW13}, $G'$ has diameter 3 if $G$ 
has a dominating set of size $k$ and has diameter 2 otherwise, and this gives the reduction
 when $k$ is even.
 
 If $k$ is odd, so $k = 2r+1$, we 
 make $n$ calls to graphs derived from $G' = (V',E')$ as follows, where now each vertex in $V_1$ represents a subset 
 of $r$ vertices in $V$.
For each $x\in V$ let $V_x$ be the set $\{x\} \cup \{$neighbors of $x$ in $G\}$, and let $G_x$ be the subgraph
of $G'$ induced on $V-V_x$. 
 If $G$ has a dominating set $D$ of size $k$ that includes vertex $x$ then
consider any partition of the remaining $2r$ vertices in $D$ into two subsets of size $r$ each, and let
$u$ and $v$ be the vertices corresponding to these two sets in $V_1$.  Since all paths from $u$ to $v$
in $G_x$ pass through $V_2 -V_x$, there is no path of
length 2 from $u$ to $v$ since every vertex in $V_2 - V_x$ is covered by either $u$ or $v$. Hence
the diameter of $G_x$ is greater than 2 in this case. But if there is no dominating set of size $k$ that includes
$x$ in $G$, then for any $u, v \in V_1$, at least one vertex in $V_2 - V_x$ is not covered by both $u$ and $v$ and hence there is a path of length 2 from $u$ to $v$. If we now compute the diameter in each of  
graphs $G_x$, $x\in V$, we will
detect a graph with diameter greater than 2 if and only if $G$ has a dominating set of size $k$.

Each graph $G_x$ has $N=O(n^r)$ vertices and $M=O(n^{r+1})$ edges.
If we now assume that  Diameter can be computed in time $O(M^{\alpha} \cdot N^{2-\alpha - \epsilon})$, 
then the above algorithm for $k$ Dominating Set runs in time
$O(n \cdot M^{\alpha} \cdot N^{2-\alpha - \epsilon}) = O(n^{2r+1 - \epsilon r + \alpha})$, which is 
$O(n^{k - \epsilon})$ time when $k \geq 3 +  \frac{2\alpha}{\epsilon}$.
  The analysis is similar for $k$ even.
In the directed case, we get the same result by replacing every edge in $G'$ with two directed edges in
opposite directions. 
\end{proof}

\section{Time Bounds for Sparse Graphs}\label{sec:sparse-time-bound}

Let  $T(m,n)$ be a function  which is defined for $m \geq n-1$.
We will interpret $T(m,n)$ as a time for an algorithm on a connected graph and we will refer to $T(m,n)$ as a {\it time bound} for a graph
problem.
We now focus on formalizing the notion of a time bound  $T(m,n)$ being smaller than another time
bound $T'(m,n)$ for sparse graphs. 

If the time bounds $T(m,n)$ and $T'(m,n)$ are of the form $m^{\alpha}n^{\beta}$, then 
one possible way to check if $T(m,n)$ is smaller than $T'(m,n)$ is to check if the exponents of $m$ and $n$ in $T(m,n)$ are individually
smaller than the corresponding exponents in $T'(m,n)$.
But using this approach, we would not be able to compare between time bounds $m^{1/2}n$ and $mn^{1/2}$.
Another possible way is to use a direct extrapolation from the single variable case and define $T(m,n)$ to be (polynomially) smaller than $T'(m,n)$ 
if $T(mn) = O((T'(m,n))^{1-\epsilon})$ for some constant $\epsilon>0$. But such a definition would completely ignore the dependence of
the functions on each of their two variables. 
We would want our definition to take into account the sparsity of the graph, i.e., as a graph becomes sparser, the smaller time bound has
smaller running time.
To incorporate this idea, our definition below asks for
$T(m,n)$ to be a factor of $m^{\epsilon}$ smaller than $T'(m,n)$, for some $\epsilon >0$. Further, this requirement is placed only on
{\it sufficiently sparse graphs} (and for a weakly smaller
 time bound, we also require a certain minimum edge density).
The consequence of this definition is that 
when one time bound is not dominated by the other for all values of $m$, the domination needs to hold for sufficiently sparse graphs in order
for the dominated function to be a smaller 
 time bound for sparse graphs.
 
\begin{definition}[\textbf{Comparing Time Bounds for Sparse Graphs}]\label{def:faster-sparse-time} 
Given two time bounds $T(m,n)$ and $T'(m,n)$,
\vspace{-0.1in}
\begin{enumerate}
\item $T(m,n)$ is a \emph{smaller time bound than} $T'(m,n)$  \emph{for sparse graphs} if there exist constants  $\gamma, \epsilon >0$ such that 
$T(m,n) = O\left(\frac{1}{m^{\epsilon} }\cdot T'(m,n)\right)$
for all values of $m=O(n^{1+\gamma})$.
\item $T(m,n)$ is a \emph{weakly smaller  time bound than} $T'(m,n)$  \emph{for sparse graphs}  if there exists a positive constant  $\gamma$ such that for any constant $\delta$ with $\gamma > \delta >0$,
there exists an $\epsilon > 0$ such that $T(m,n) = O\left(\frac{1}{m^{\epsilon} }\cdot T'(m,n)\right)$ for all values of $m$ in the range $m=O(n^{1+\gamma})$ and $m= \Omega (n^{1+\delta})$.
\end{enumerate}
\end{definition}

Part (i) in above definition  requires a polynomially smaller (in $m$) bound for $T(m,n)$ relative to $T'(m,n)$ for sufficiently sparse graphs.
For example, $\frac{m^3}{n^2}$ is a smaller time bound than
$m^{3/2}$, which in turn is a smaller time bound than $mn$;  
$m^2$ is a smaller time bound than $n^3$.
A time bound of $n \sqrt m$ is a weakly smaller 
bound than $m \sqrt n$
 for sparse graphs 
 by part (ii) but not a smaller
  bound since the two bounds coincide when $m=O(n)$.

Our definition for comparing time bounds for sparse graphs is quite strong as it allows us to compare a wide range of time bounds.
For example, using this definition we can say that $n \sqrt m$ is a weakly smaller bound than $m \sqrt n$ for sparse graphs.
Whereas if we use the possible approaches that we discussed before then we would not be able to compare these two time bounds.

With Definition~\ref{def:faster-sparse-time} in hand,
 the following lemma is straightforward.

\begin{lemma}	\label{lemma:smallerTimeBound}
Let $T_1(m,n) = O(m^{\alpha_1}n^{\beta_1})$ and $T_2(m,n) = O(m^{\alpha_2}n^{\beta_2})$  be two 
time bounds, where $\alpha_1, \beta_1, \alpha_2, \beta_2$ are constants.\\
(i) $T_1(m,n)$ is a smaller time bound than $T_2(m,n)$ for sparse graphs if $\alpha_2 + \beta_2 > \alpha_1 + \beta_1$.\\
(ii) $T_1(m,n)$ is a weakly smaller time bound than $T_2(m,n)$ for sparse graphs if $\alpha_2 + \beta_2 = \alpha_1 + \beta_1$, and $\alpha_2 > \alpha_1$.
\end{lemma}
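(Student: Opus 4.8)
The plan is to unfold Definition~\ref{def:faster-sparse-time} in both parts and reduce each to a single statement: that a monomial $m^a n^b$ is $O(1)$ over the relevant range of edge densities. Setting $a := \alpha_1 + \epsilon - \alpha_2$ and $b := \beta_1 - \beta_2$, the key identity is
\[
\frac{T_1(m,n)}{m^{-\epsilon}\,T_2(m,n)} \;=\; m^{\alpha_1 - \alpha_2 + \epsilon}\, n^{\beta_1 - \beta_2} \;=\; m^{a}\, n^{b},
\]
so $T_1(m,n) = O\!\left(m^{-\epsilon}\,T_2(m,n)\right)$ holds over a density range precisely when $m^{a} n^{b} = O(1)$ there. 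Since the graph is connected we always have $m \ge n-1$, so it suffices to bound $m^a n^b$ over $m \in [\,c_1 n^{1+\delta},\, c_2 n^{1+\gamma}\,]$ for the appropriate $\delta, \gamma$. Parametrizing $m = n^{1+s}$, the exponent of $n$ in $m^a n^b$ equals $g(s) := (a+b) + a s$, which is \emph{linear} in $s$; hence its maximum over any interval is attained at an endpoint, and making $g$ (strictly) negative at the governing endpoint forces $m^a n^b = O(1)$, the hidden constants $c_1, c_2$ being absorbed into the final $O(1)$.

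For part (i), I would put $\Delta := (\alpha_1 + \beta_1) - (\alpha_2 + \beta_2) < 0$ and choose $\epsilon := |\Delta|/2$, so that $a + b = \Delta + \epsilon = -|\Delta|/2 < 0$; this controls the left endpoint $g(0) = a+b < 0$. For the right endpoint I would take $\gamma$ small: if $a \le 0$ then $g$ is nonincreasing and any $\gamma > 0$ works, while if $a > 0$ then $g(\gamma) = (a+b) + a\gamma \le 0$ whenever $\gamma \le |a+b|/a$, so e.g.\ $\gamma := |a+b|/(2a)$ suffices. With $\epsilon, \gamma$ fixed this way, $m^a n^b = O(1)$ for all $m = O(n^{1+\gamma})$, which is exactly the condition in Definition~\ref{def:faster-sparse-time}(i).

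For part (ii) the hypotheses give $\Delta = 0$ and $c_0 := \alpha_2 - \alpha_1 > 0$. Here I would fix $\gamma$ up front (any positive constant, say $\gamma = 1$) and, for a given $\delta \in (0,\gamma)$, choose $\epsilon$ as a function of $\delta$. Now $a = -c_0 + \epsilon$, which is negative for small $\epsilon$, and $a + b = \epsilon > 0$, so $g(s) = \epsilon + a s$ is strictly decreasing; its maximum over $s \in [\delta, \gamma]$ sits at $s = \delta$, giving $g(\delta) = \epsilon - (c_0 - \epsilon)\delta$. Requiring $g(\delta) < 0$ amounts to $\epsilon < c_0\delta/(1+\delta)$, so I would set $\epsilon := c_0\delta / \big(2(1+\delta)\big)$, which is positive and yields $g(\delta) = -c_0\delta/2 < 0$. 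Then $m^a n^b = O\!\left(n^{-c_0\delta/2}\right) = O(1)$ throughout $m = \Omega(n^{1+\delta}),\, m = O(n^{1+\gamma})$, establishing Definition~\ref{def:faster-sparse-time}(ii).

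The step I expect to require the most care is the quantifier structure in part (ii): $\gamma$ must be chosen once and for all, independently of $\delta$, whereas $\epsilon$ is permitted to depend on $\delta$. The choice above respects this, since $\gamma$ is a fixed constant and only $\epsilon$ varies with $\delta$; one also has to verify $\epsilon < c_0$ (immediate from $c_0\delta/(2(1+\delta)) < c_0$) so that $a < 0$, which is what makes $g$ monotone and lets the single endpoint $s = \delta$ govern the whole range. Beyond this, everything reduces to the elementary fact that a linear function on a closed interval is maximized at an endpoint.
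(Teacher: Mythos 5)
Your proof is correct. The paper offers no proof of this lemma at all (it is declared ``straightforward'' immediately after Definition~\ref{def:faster-sparse-time}), and your argument --- unfolding the definition, writing the ratio as $m^{a}n^{b}$ with $m=n^{1+s}$, and checking the sign of the linear exponent $g(s)=(a+b)+as$ at the governing endpoint, with the quantifier order in part (ii) ($\gamma$ fixed first, $\epsilon$ depending on $\delta$) handled exactly as the definition requires --- is precisely the routine verification the authors intend.
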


Definition~\ref{def:faster-sparse-time}, in conjunction with
Lemma~\ref{lemma:smallerTimeBound} and 
 Theorem~\ref{thm:seth-hardness}, lead to
the following 
{\it provable separation}
 of time bounds for sparse graph problems
  in the sub-cubic equivalence class:

\begin{theorem}[\textbf{Split of Time Bounds for Sparse Graphs.}] \label{thm:sparsetime}
 Under either SETH or $k$-DSH,
triangle finding problems in the sub-cubic equivalence class have algorithms with a smaller 
 time bound for sparse graphs
than \emph{any algorithm} we can design for
Eccentricities.
\end{theorem}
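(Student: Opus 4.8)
The plan is to combine three ingredients: the $O(m^{3/2})$ upper bound for triangle finding~\cite{IR78}, the $k$-DSH (hence SETH) hardness of Eccentricities from Theorems~\ref{thm:seth-hardness} and~\ref{thm:hardness}, and the sparse-graph comparison machinery of Definition~\ref{def:faster-sparse-time} together with Lemma~\ref{lemma:smallerTimeBound}. First I would record that every triangle finding problem in the sub-cubic class runs in time $T_{\triangle}(m,n) = O(m^{3/2})$, so in the notation of Lemma~\ref{lemma:smallerTimeBound} it has exponent sum $\alpha_1 + \beta_1 = 3/2$; the goal is then to show that any Eccentricities time bound $T(m,n)$ satisfies the factor-$m^{\epsilon}$ domination demanded by Definition~\ref{def:faster-sparse-time}(i).

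The heart of the argument is to upgrade the hardness so that it constrains Eccentricities on \emph{sparse} graphs, not just on arbitrary graphs --- a purely global ``no sub-$mn$ bound'' statement would not rule out an algorithm that is fast only on sparse inputs. Here the edge density of the reduction in Lemma~\ref{thm:kDominatingDiameter} is essential: for odd $k = 2r+1$ each graph $G_x$ has $N = \Theta(n^{r})$ vertices and $M = \Theta(n^{r+1})$ edges, so $M = \Theta(N^{1+1/r})$, and the family becomes arbitrarily sparse as $r$ grows. Thus for any target sparsity exponent $\gamma > 0$, taking $r \geq 1/\gamma$ places every $G_x$ inside the regime $M = O(N^{1+\gamma})$. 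I would then show that an Eccentricities (equivalently Diameter) algorithm running in $O(M^{3/2+\epsilon})$ on this regime solves $k$-Dominating Set, by the counting in Lemma~\ref{thm:kDominatingDiameter}, in time $O(n \cdot M^{3/2+\epsilon}) = O(n^{\,1 + (r+1)(3/2+\epsilon)})$. The exponent is $\tfrac{3r}{2} + \tfrac{5}{2} + \epsilon(r+1)$, which for any fixed $\epsilon < 1/2$ drops below $k - 1 = 2r$ once $r \geq (\tfrac{5}{2}+\epsilon)/(\tfrac{1}{2}-\epsilon)$, refuting $k$-DSH (and the analogous even-$k$ construction from~\cite{RW13}, where $M = \Theta(N^{1+2/k})$, behaves the same way). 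Consequently, under the hypothesis there is a fixed $\epsilon > 0$ and a $\gamma > 0$ such that no Eccentricities algorithm runs in $O(m^{3/2+\epsilon})$ on graphs with $m = O(n^{1+\gamma})$.

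Finally I would translate this into Definition~\ref{def:faster-sparse-time}(i). Since every Eccentricities time bound $T(m,n)$ must be $\Omega(m^{3/2+\epsilon})$ on the sparse regime $m = O(n^{1+\gamma})$, while $m^{3/2} = O(m^{-\epsilon}\cdot m^{3/2+\epsilon}) = O(m^{-\epsilon}\cdot T(m,n))$ there, the required factor-$m^{\epsilon}$ separation holds, and hence $m^{3/2}$ is a strictly smaller time bound than $T(m,n)$ for sparse graphs. (When $T(m,n)$ has the polynomial form $m^{\alpha}n^{\beta}$, the same sparse hardness forces $\alpha + \beta \geq 2 > 3/2$, so Lemma~\ref{lemma:smallerTimeBound}(i) gives the conclusion directly.) Since refuting $k$-DSH also refutes SETH~\cite{PW10}, the separation holds under either hypothesis.

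I expect the main obstacle to be the second paragraph: ensuring the lower bound survives restriction to the sparse band rather than holding only globally. The delicate point is the simultaneous choice of $r$ (equivalently $k$) large relative to both $1/\gamma$, so that the density $N^{1+1/r}$ of $G_x$ lands inside the target sparse band, and relative to $1/\epsilon$, so that the induced $k$-Dominating Set running time provably beats $n^{k-\epsilon'}$. A secondary, more technical point is to read ``time bound'' as a monotone, polynomially-growing function, so that ruling out $O(m^{3/2+\epsilon})$ genuinely yields the $\Omega(m^{3/2+\epsilon})$ lower bound that Definition~\ref{def:faster-sparse-time}(i) requires.
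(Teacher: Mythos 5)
Your proposal is correct, and its skeleton is the one the paper intends: the paper offers no standalone proof of Theorem~\ref{thm:sparsetime}, stating only that it follows by combining the $O(m^{3/2})$ upper bound for triangle finding with Theorem~\ref{thm:seth-hardness} via Definition~\ref{def:faster-sparse-time} and Lemma~\ref{lemma:smallerTimeBound}(i) (exponent sum $3/2$ versus $\geq 2$). Where you genuinely add something is the second paragraph: the paper's cited hardness statement is a global ``no sub-$mn$ bound'' over all densities, and you are right that, read literally, this does not by itself preclude an algorithm that is fast only on the sparse band $m=O(n^{1+\gamma})$, which is exactly the regime Definition~\ref{def:faster-sparse-time}(i) quantifies over. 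Your observation that the hard instances of Lemma~\ref{thm:kDominatingDiameter} have density $M=\Theta(N^{1+1/r})$ (resp.\ $N^{1+2/k}$ for even $k$), so that taking $k$ large relative to both $1/\gamma$ and $1/\epsilon$ places them inside any prescribed sparse band while still beating $n^{k-\epsilon'}$, is the correct way to localize the hardness; your arithmetic ($r > (\tfrac52+\epsilon)/(\tfrac12-\epsilon)$ suffices to get exponent below $2r = k-1$) checks out. The paper gets this essentially for free because it compares only closed-form bounds $m^{\alpha}n^{\beta}$, for which a violation at any single density scale with $r\to\infty$ already forces $\alpha+\beta\geq 2$; your version handles arbitrary time bounds $T(m,n)$ at the cost of the two caveats you yourself flag, namely that the reduction produces hard instances at one density scale per choice of $r$ (not densely throughout the band) and that converting ``not $O(m^{3/2+\epsilon})$'' into ``$\Omega(m^{3/2+\epsilon})$ throughout the band'' requires restricting to monotone, polynomially-behaved time bounds --- an assumption the paper also makes implicitly in Lemma~\ref{lemma:smallerTimeBound}. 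In short: same ingredients, but your write-up makes explicit (and repairs) the sparsity-localization step that the paper's one-line justification glosses over.
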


\section{ Reduction Details for Undirected Graphs}	\label{sec:otherundir}

In Section ~\ref{sec:undir}, we gave an overview of our sparse reduction from MWC to 
APSP.
Here in Section~\ref{sec:fullMinWtCyctoAPSP}, we provide full details of this reduction,
and refine it to a reduction to 
 APSD.
We then describe a $\tilde{O}(n^2)$ sparse reduction from ANSC to 
APSP
in unweighted undirected graphs in Section~\ref{sec:ANSCtoAPSP}.

We 
start with 
stating
from~\cite{RW11}
the notion of a `critical edge'.

\begin{lemma} [\cite{RW11}]
Let $G = (V,E,w)$ be a weighted undirected graph, where $w: E \rightarrow \mathcal{R}^{+}$, and let
$C = \langle v_1, v_2, \ldots, v_l \rangle$ be a cycle in $G$.
There exists an edge $(v_i,v_{i+1})$ on $C$ such that
$\lceil \frac{w(C)}{2} \rceil - w(v_i,v_{i+1}) \leq d_C (v_1,v_i) \leq \lfloor \frac{w(C)}{2} \rfloor$ and 
$\lceil \frac{w(C)}{2} \rceil - w(v_i,v_{i+1}) \leq d_C (v_{i+1},v_1) \leq \lfloor \frac{w(C)}{2} \rfloor$.\\
The edge $(v_i,v_{i+1})$ is called the \emph{critical edge} of $C$ with respect to the start vertex $v_1$.
\end{lemma}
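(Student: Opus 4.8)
The plan is to identify the critical edge explicitly as the edge at which the cumulative distance around the cycle, measured in the forward direction from $v_1$, first exceeds $\lfloor w(C)/2 \rfloor$, and then to verify the four required inequalities by a short arithmetic argument. Concretely, traversing $C$ in the direction $v_1 \to v_2 \to \cdots \to v_l \to v_1$, I would set $F_1 = 0$ and $F_r = \sum_{j=1}^{r-1} w(v_j, v_{j+1})$ for $2 \le r \le l$, with the convention $v_{l+1} = v_1$ and $F_{l+1} = w(C)$. Since all edge weights are positive, the sequence $F_1 < F_2 < \cdots < F_{l+1} = w(C)$ is strictly increasing, so there is a unique index $i \in \{1,\ldots,l\}$ with $F_i \le \lfloor w(C)/2 \rfloor < F_{i+1}$ (it exists because $F_1 = 0 \le \lfloor w(C)/2 \rfloor$ and $F_{l+1} = w(C) > \lfloor w(C)/2 \rfloor$). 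I claim $(v_i, v_{i+1})$ is the desired critical edge.

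The first key step is to pin down which arc realizes each cyclic distance. For integer edge weights, as the floor/ceiling in the statement presumes, the strict inequality $F_{i+1} > \lfloor w(C)/2 \rfloor$ forces $F_{i+1} \ge \lceil w(C)/2 \rceil$, since for an integer $W$ one always has $\lfloor W/2 \rfloor + 1 \ge \lceil W/2 \rceil$ (equality when $W$ is odd, and the left side exceeds the right when $W$ is even). Hence $F_i \le \lfloor w(C)/2 \rfloor \le w(C)/2 \le \lceil w(C)/2 \rceil \le F_{i+1}$, so the forward arc is the shorter one to $v_i$ and the backward arc is the shorter one to $v_{i+1}$; that is, $d_C(v_1,v_i) = F_i$ and $d_C(v_{i+1},v_1) = w(C) - F_{i+1}$.

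The second step is purely arithmetic, using $F_{i+1} = F_i + w(v_i, v_{i+1})$ together with $F_i \le \lfloor w(C)/2 \rfloor$ and $F_{i+1} \ge \lceil w(C)/2 \rceil$, and the identity $\lfloor w(C)/2 \rfloor + \lceil w(C)/2 \rceil = w(C)$. For $v_i$: the upper bound is $d_C(v_1,v_i) = F_i \le \lfloor w(C)/2 \rfloor$, and the lower bound follows from $F_i = F_{i+1} - w(v_i,v_{i+1}) \ge \lceil w(C)/2 \rceil - w(v_i,v_{i+1})$. For $v_{i+1}$: the upper bound is $d_C(v_{i+1},v_1) = w(C) - F_{i+1} \le w(C) - \lceil w(C)/2 \rceil = \lfloor w(C)/2 \rfloor$, and the lower bound is $w(C) - F_{i+1} = w(C) - F_i - w(v_i,v_{i+1}) \ge w(C) - \lfloor w(C)/2 \rfloor - w(v_i,v_{i+1}) = \lceil w(C)/2 \rceil - w(v_i,v_{i+1})$. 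This establishes all four inequalities.

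I expect the main obstacle to be the first step rather than the arithmetic: one must be careful that the minimum defining each cyclic distance $d_C$ is attained by the anticipated arc, which is exactly where integrality of the weights is used (a real-weighted $w(C)$ with $F_{i+1}$ strictly between $\lfloor w(C)/2 \rfloor$ and $\lceil w(C)/2 \rceil$ would break the claim $d_C(v_{i+1},v_1) = w(C) - F_{i+1}$). I would therefore state the integer-weight hypothesis explicitly and handle the even and odd parity of $w(C)$ uniformly through the single inequality $\lfloor w(C)/2 \rfloor + 1 \ge \lceil w(C)/2 \rceil$. Existence and uniqueness of the crossing index, which rely only on strict monotonicity of the $F_r$, are then routine.
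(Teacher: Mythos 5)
Your proof is correct. Note first that the paper itself supplies no proof of this lemma: it is imported verbatim from~\cite{RW11} as a known fact, so there is no in-paper argument to compare against. Your argument is the natural one and it checks out: defining the prefix sums $F_r$ along the cycle, locating the unique crossing index $i$ with $F_i \leq \lfloor w(C)/2 \rfloor < F_{i+1}$, showing that the forward arc realizes $d_C(v_1,v_i)$ and the backward arc realizes $d_C(v_{i+1},v_1)$, and then deriving all four inequalities from $F_{i+1}=F_i+w(v_i,v_{i+1})$ and $\lfloor w(C)/2\rfloor + \lceil w(C)/2\rceil = w(C)$. Your observation about integrality is also a genuine (and correctly resolved) subtlety: the lemma as printed takes $w: E \rightarrow \mathcal{R}^{+}$ yet uses floors and ceilings of $w(C)/2$, and as you note the step $F_{i+1} > \lfloor w(C)/2\rfloor \Rightarrow F_{i+1} \geq \lceil w(C)/2\rceil$ (and hence the identification of which arc is shorter) genuinely needs integer weights --- indeed one can build small real-weighted counterexamples to the literal statement, whereas replacing both $\lfloor w(C)/2\rfloor$ and $\lceil w(C)/2\rceil$ by $w(C)/2$ makes your argument go through verbatim for arbitrary positive reals. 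Making that hypothesis explicit, as you propose, is the right call.
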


\subsection{Reducing Minimum Weight Cycle to APSD}	\label{sec:fullMinWtCyctoAPSP}

We first describe a useful property of a minimum weight cycle.

\begin{lemma}	\label{lemma:MinWtCycProp-full}
Let $C$ be a minimum weight cycle in weighted undirected graph $G$.
Let $x$ and $y$ be two vertices lying on the cycle $C$ and let $\pi^{1}_{x,y}$ and $\pi^{2}_{x,y}$ be the paths from $x$ to $y$ in $C$.
WLOG assume that $w(\pi^{1}_{x,y}) \leq w(\pi^{2}_{x,y})$.
Then 
$\pi^{1}_{x,y}$ is a shortest path from $x$ to $y$ and
$\pi^{2}_{x,y}$ is a second simple shortest path from $x$ to $y$, i.e. a path from $x$ to $y$ that is shortest among all paths 
from $x$ to $y$ that are not identical to $\pi^{1}_{x,y}$.
\end{lemma}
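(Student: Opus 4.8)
The plan is to reduce both assertions to a single \emph{replacement claim}: replacing the arc $\pi^1_{x,y}$ in $C$ by any \emph{different} simple $x$--$y$ path that is too cheap would produce a cycle strictly lighter than $C$, contradicting minimality. Concretely, I would first establish the following Main Claim: every simple $x$--$y$ path $R \neq \pi^1_{x,y}$ satisfies $w(R) \ge w(\pi^2_{x,y})$. Both conclusions of the lemma fall out of this claim.

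To prove the Main Claim, suppose for contradiction that some simple path $R \neq \pi^1_{x,y}$ has $w(R) < w(\pi^2_{x,y})$. Since $w(\pi^1_{x,y}) \le w(\pi^2_{x,y})$ by the WLOG assumption, this gives $w(R) + w(\pi^1_{x,y}) < w(\pi^1_{x,y}) + w(\pi^2_{x,y}) = w(C)$. I would then pass to the edge-set symmetric difference $H = R \,\triangle\, \pi^1_{x,y}$. Because $R$ and $\pi^1_{x,y}$ are both $x$--$y$ paths (with $x \neq y$), every vertex has even degree in $H$: the contributions at $x$ and $y$ are $1+1$, while every other vertex contributes $0$ or $2$ from each path, so all degrees are even. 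Thus $H$ is a nonempty even subgraph, nonempty precisely because $R \neq \pi^1_{x,y}$. An even subgraph with at least one edge has minimum degree at least $2$ on its support and hence contains a simple cycle $C'$, which has length at least $3$ since $G$ is simple. Using positivity of the weights, $0 < w(C') \le w(H) \le w(R) + w(\pi^1_{x,y}) < w(C)$, contradicting the minimality of $C$. This proves the Main Claim.

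Given the Main Claim, both parts follow immediately. For part~1, any $x$--$y$ path is either $\pi^1_{x,y}$ itself (weight $w(\pi^1_{x,y})$) or some $R \neq \pi^1_{x,y}$ with $w(R) \ge w(\pi^2_{x,y}) \ge w(\pi^1_{x,y})$; since a shortest path is attained by a simple path under positive weights, $\pi^1_{x,y}$ is a shortest $x$--$y$ path. For part~2, the two arcs of a simple cycle through distinct vertices $x,y$ are distinct simple paths, so $\pi^2_{x,y} \neq \pi^1_{x,y}$ is a legitimate competitor for the second simple shortest path; the Main Claim shows every simple $R \neq \pi^1_{x,y}$ has $w(R) \ge w(\pi^2_{x,y})$, and $\pi^2_{x,y}$ attains this bound, so $\pi^2_{x,y}$ is indeed a second simple shortest path from $x$ to $y$.

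The step I expect to be the main obstacle is the undirected structural argument inside the Main Claim. In an undirected graph one cannot simply assert that a positive-weight closed walk contains a simple cycle, since a back-and-forth traversal of a single edge is a positive-weight closed walk with no cycle. The symmetric-difference/even-subgraph formulation is exactly what repairs this: it produces a cancellation-free even subgraph whose weight is bounded by $w(R)+w(\pi^1_{x,y})$ and which is guaranteed to contain a bona fide simple cycle. I would take particular care to verify that $x \neq y$, that the extracted cycle has strictly positive weight and length at least $3$ (using that $G$ is simple so there are no parallel edges or loops), and that the strict inequality $w(R)+w(\pi^1_{x,y}) < w(C)$ rests essentially on the assumption $w(\pi^1_{x,y}) \le w(\pi^2_{x,y})$.
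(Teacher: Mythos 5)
Your proof is correct, and it takes a genuinely different route for the one step that carries all the weight. The paper argues by contradiction directly on a hypothetical better competitor $\pi^3_{x,y}$: it picks a vertex $u$ where $\pi^3_{x,y}$ deviates from $\pi^1_{x,y}$ and a vertex $v$ where it merges back, and asserts that the two $u$--$v$ subpaths form a cycle of weight at most $w(\pi^1_{x,y})+w(\pi^3_{x,y}) < w(C)$. That argument is sound if $u,v$ are taken as consecutive common vertices (so the two subpaths are internally disjoint and really do form a simple cycle), but the paper leaves this implicit and does not explicitly address part~1 of the lemma at all. Your symmetric-difference formulation replaces this with a clean structural fact: $R \,\triangle\, \pi^1_{x,y}$ is a nonempty even subgraph (using $x\neq y$ and that a simple path is determined by its edge set), hence contains a simple cycle of weight at most $w(R)+w(\pi^1_{x,y}) < w(C)$. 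Both proofs ultimately rest on the same inequality $w(\pi^1_{x,y})+w(R) < w(\pi^1_{x,y})+w(\pi^2_{x,y}) = w(C)$, which is where the WLOG ordering is used; what your version buys is (a) a single Main Claim from which both conclusions of the lemma follow uniformly, and (b) immunity to the undirected-graph pitfall you correctly flag, namely that a positive-weight closed walk need not contain a simple cycle, so one cannot just ``close up'' the two paths and extract a cycle without an argument like yours or a careful choice of $u,v$.
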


\begin{proof}
Assume to the contrary that $\pi^3_{x,y}$ is a second simple shortest path from $x$ to $y$ of weight less than $w(\pi^2_{x,y})$.
Let the path $\pi^3_{x,y}$ deviates from the path $\pi^1_{x,y}$ at some vertex $u$ and then it merges back at some vertex $v$.
Then the subpaths from $u$ to $v$ in $\pi^{1}_{x,y}$ and $\pi^{3}_{x,y}$ together form a cycle of weight strictly less than $w(C)$,
resulting in a contradiction as $C$ is a minimum weight cycle in $G$.
\end{proof}

Observation~\ref{obs:MinWtCyc} follows, since the path $P$ there
must be either a shortest path or a second simple shortest path in
$G$ by the above lemma, so in $G'$ it must be a shortest path.

Consider the graphs $G_{i,j,k}$ as described in Section~\ref{sec:undir}.
In the following two lemmas we identify three key properties of a path  
$\pi$ 
from $y^1$ to $z^2$ ($y \neq z$)
 in a $G_{i,j,k}$ that (I) will be satisfied by the path $P$ in 
 Observation~\ref{obs:MinWtCyc}
 for $y^1=v_1^1$ and $z^2=v_p^2$
in some $G_{i,j,k}$ (Lemma~\ref{lemma:MinWtCycAPSP1}),
and (II) will cause a simple cycle in $G$ to be contained in the
concatenation of $\pi$  with the shortest path from $y$ to $z$
computed by 
APSP 
(Lemma~\ref{lemma:MinWtCycAPSP2}).
Once we have these two Lemmas in hand, it gives us a method to find a minimum weight cycle in $G$ by calling 
 APSP
on each $G_{i,j,k}$ and then
identifying all pairs $y^1, z^2$ in each graph that satisfy these properties. Since the path $P$ is guaranteed to be one of the pairs, and no spurious path will be identified, the minimum weight
cycle can be identified.  We now fill in the details.

\begin{figure}
\scriptsize
	\centering
    	\begin{subfigure}{.4\textwidth}
            	\begin{tikzpicture}[scale=0.5, every node/.style={circle, draw, inner sep=0pt, minimum width=5pt, align=center}]
            	\node[circle,draw,inner sep=0pt] (v1)[label=below:$v_1$] at (0,2) {}; 
            	\node[circle,draw,inner sep=0pt] (vp-1)[label=below:$v_{p-1}$] at (4,0) {}; 
            	\node[circle,draw,inner sep=0pt] (vp)[label=below:$v_p$] at (6,2) {}; 
            	\node[circle,draw,inner sep=0pt] (vp+1)[label=below:$v_{p+1}$] at (4,4) {}; 
            	\draw[line width=1pt, decorate, decoration={zigzag}] (v1) to [bend right=10] (vp-1) {};
            	\draw[line width=1pt] (vp-1) -- (vp) {};
            	\draw (vp) -- (vp+1) {};
            	\draw[decorate, decoration={zigzag}] (v1) to [bend left=10]  (vp+1) {};
            	\begin{scope}[->,every node/.style={ right}]
               \node[text width = 3.5cm] at (8,4) {};
               \node at (-1,4) {$G$} ;
               \end{scope}
            	\end{tikzpicture}
            	\caption{MWC $C$ in $G$. Shortest path \\ from $v_1$ to $v_p$ is highlighted.}
            	\label{fig:MWCCycle}
		\end{subfigure}    
		~
		\begin{subfigure}{.4\textwidth}
			\begin{tikzpicture}[scale=0.5, every node/.style={circle, draw, inner sep=0pt, minimum width=5pt, align=center}]
            	\node[circle,draw,inner sep=0pt] (v1)[label=below:$v^1_1(y^1)$] at (0,2) {}; 
            	\node[circle,draw,inner sep=0pt] (vp-1)[label=below:$v^1_{p-1}$] at (4,0) {}; 
            	\node[circle,draw,inner sep=0pt] (vp)[label=below:$v^2_p(z^2)$] at (6,2) {}; 
            	\node[circle,draw,inner sep=0pt] (vp+1)[label=below:$v^1_{p+1}$] at (4,4) {}; 
            	\draw[decorate, decoration={zigzag}] (v1) to [bend right=10] (vp-1) {};
            	\draw[dashed, line width=1pt] (vp-1) -- (vp) {};
            	\draw[line width=1pt] (vp) -- (vp+1) {};
            	\draw[line width=1pt, decorate, decoration={zigzag}] (v1) to [bend left=10]  (vp+1) {};
            	\begin{scope}[->,every node/.style={ right}]
            \node[text width = 4 cm] at (8,2) {$\bullet$ $i$, $j$, $k$ are such that: \\ 1. $i$-th bit of $v_{p+1}$ is $j$, \\ 2. $i$-th bit of $v_{p-1}$ is $\overline{j}$ and \\ 3. $\frac{M}{2^k} < w(v_p,v_{p+1}) \leq \frac{M}{2^{k-1}}$. \\ \vspace{.05in} $\bullet$ Alg. MWC-to-APSP' will compute $w(C)$ in this $G_{i,j,k}$ in $wt$ (line 6)};
            \node at (-1,4) {$G_{i,j,k}$} ;
            \end{scope}
            	\end{tikzpicture}
            \caption{Shortest path from \\ $y^1$ to $z^2$ in $G_{i,j,k}$.}
   	        \label{fig:MWCGijk}
   	  \end{subfigure}
    \caption{Here Figure (a) represent the MWC $C$ in $G$. The path $\pi_{y,z}$ (in bold) is the shortest path from $y$ to $z$ in $G$. The path $\pi_{y^1,z^2}$ (in bold) in Figure (b) is the shortest path from $y^1$ to $z^2$ in $G_{i,j,k}$: where the edge $(v^1_{p-1},v^2_p)$ is absent due to $i, j$ bits. The paths $\pi_{y,z}$ in $G$ and $\pi_{y^1,z^2}$ in $G_{i,j,k}$ together comprise the MWC $C$.}
    \label{fig:reductionsMWC}
\end{figure}
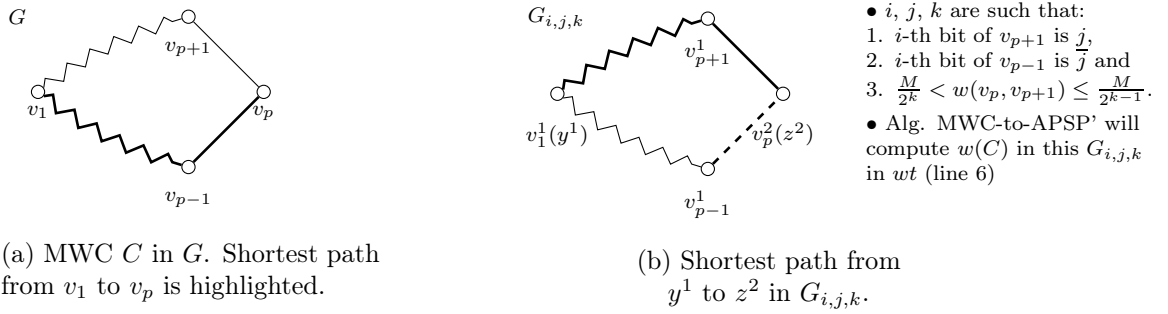
\begin{lemma}	\label{lemma:MinWtCycAPSP1}
Let $C = \langle v_1, v_2, \ldots, v_{l} \rangle$ be a minimum weight cycle in $G$ and let $(v_p, v_{p+1})$ be its critical edge with
respect to the start vertex $v_1$.
WLOG assume that $d_G (v_1, v_p) \geq d_G (v_1, v_{p+1})$.
Then there exists an $i \in \{1,\ldots,\lceil \log n \rceil\}$, $j \in \{0,1\}$
and $k \in \{1, 2, \ldots, \lceil \log \rho \rceil \}$ such that the following conditions hold:
\vspace{-.175in}
\begin{enumerate}
\item $d_{G_{i,j,k}} (v^1_1,v^2_{p}) + d_G (v_1, v_p) = w(C)$
\vspace{-.075in}
\item $Last_{G_{i,j,k}} (v^1_1,v^2_{p}) \neq Last_G (v_1, v_p)$ 
\vspace{-.075in}
\item $d_{G_{i,j,k}} (v^1_1,v^2_{p}) \leq d_G (v_1, v_p) + \frac{M}{2^{k-1}}$
\end{enumerate}
\end{lemma}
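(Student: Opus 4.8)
The plan is to fix, for the given minimum weight cycle $C$ and its critical edge $(v_p,v_{p+1})$, a single triple $(i,j,k)$ and to show that in $G_{i,j,k}$ the shortest path from $v^1_1$ to $v^2_p$ is exactly the image of the path $P$ of Observation~\ref{obs:MinWtCyc}. First I would choose $i,j,k$. Since a simple cycle has length at least three, $v_{p-1}\neq v_{p+1}$; as these carry distinct $\lceil\log n\rceil$-bit labels they differ in some bit position $i$, and I set $j$ to the $i$-th bit of $v_{p+1}$, so that the crossing edge $(v^1_{p+1},v^2_p)$ can be present while $(v^1_{p-1},v^2_p)$ is absent. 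I then take $k$ to be the unique index with $\frac{M}{2^k}<w(v_p,v_{p+1})\le\frac{M}{2^{k-1}}$; since every edge weight lies in $[M/\rho,M]$ this $k$ is in range. Writing $a=d_G(v_1,v_p)$ for the weight of the arc of $C$ from $v_1$ to $v_p$ through $v_{p-1}$ and $b=w(C)-a=w(P)$ for the complementary arc through $v_{p+1}$, I would record the structural facts I need: by Lemma~\ref{lemma:MinWtCycProp-full} each arc is a shortest (or second shortest) $v_1$--$v_p$ path in $G$; the two critical-edge inequalities force the arc through $v_{p-1}$ to be the shorter one, so that $a=d_G(v_1,v_p)\le b$ and $d_G(v_1,v_{p+1})=b-w(v_p,v_{p+1})$; and the assumption $d_G(v_1,v_p)\ge d_G(v_1,v_{p+1})$ sharpens this to $b-a\le w(v_p,v_{p+1})$.

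The core step is condition (1). Since $V_2$ is independent, any $v^1_1$--$v^2_p$ path ends in a crossing edge $(x^1,v^2_p)$ with $x$ a \emph{valid entry} (a neighbour of $v_p$ in $G$ whose $i$-th bit is $j$ and with $w(x,v_p)\in(M/2^k,M/2^{k-1}]$), so $d_{G_{i,j,k}}(v^1_1,v^2_p)=\min_x\,[\,d_G(v_1,x)+w(x,v_p)\,]$ over valid entries $x$. For the upper bound I use $x=v_{p+1}$, which is valid by the choice of $(i,j,k)$ and yields cost $d_G(v_1,v_{p+1})+w(v_p,v_{p+1})=b$; this is the image of $P$. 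For the matching lower bound I show $d_G(v_1,x)+w(x,v_p)\ge b$ for \emph{every} valid entry $x$. Concatenating a shortest $v_1$--$x$ path, the edge $(x,v_p)$, and the shortest $v_p$--$v_1$ path (the arc of weight $a$) gives a closed walk; because $x\neq v_{p-1}$ the edge $(x,v_p)$ differs from the unique $v_p$-edge of that arc, so after cancelling repeated edges a simple cycle through $(x,v_p)$ survives, and minimality of $C$ forces its weight to be at least $w(C)=a+b$, whence $d_G(v_1,x)+w(x,v_p)\ge b$. The two bounds give $d_{G_{i,j,k}}(v^1_1,v^2_p)=b$, and therefore $d_{G_{i,j,k}}(v^1_1,v^2_p)+d_G(v_1,v_p)=b+a=w(C)$.

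Conditions (2) and (3) then follow quickly. Condition (3) holds since $d_{G_{i,j,k}}(v^1_1,v^2_p)=b=a+(b-a)\le a+w(v_p,v_{p+1})\le d_G(v_1,v_p)+M/2^{k-1}$, using $b-a\le w(v_p,v_{p+1})\le M/2^{k-1}$. Condition (2) follows from a distance mismatch: $Last_{G_{i,j,k}}(v^1_1,v^2_p)$ is some valid entry $x$ with $d_G(v_1,x)+w(x,v_p)=b$, whereas $Last_G(v_1,v_p)$ is the predecessor of $v_p$ on a shortest $v_1$--$v_p$ path of weight $a$; were these the same vertex $u$, then $d_G(v_1,u)+w(u,v_p)$ would equal both $a$ and $b$, impossible once $a<b$.

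I expect the lower bound in condition (1) to be the main obstacle, for two reasons. First, the ``cancel repeated edges and extract a cycle'' argument must be made rigorous when the shortest $v_1$--$x$ path itself runs through $v_p$ or reuses the edge $(x,v_p)$; here the weight-class constraint is essential, since a valid entry has $w(x,v_p)>M/2^k$, so any path traversing the edge at $v_p$ twice costs at least $a+2w(x,v_p)>a+M/2^{k-1}\ge b$, ruling out exactly the spurious ``out-and-back'' paths that could otherwise beat $P$. Second, the borderline balanced case $a=b$ (both arcs shortest $v_1$--$v_p$ paths) can make the two $Last$ entries coincide; I would dispose of it by assuming unique shortest paths via a standard generic perturbation or lexicographic tie-breaking, under which the critical-edge inequalities exclude $b<a$ and leave only the clean case $a<b$.
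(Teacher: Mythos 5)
Your proof is correct and follows essentially the same route as the paper's: the same choice of $(i,j,k)$ (a bit position where $v_{p-1}$ and $v_{p+1}$ differ, $j$ set to $v_{p+1}$'s value, and $k$ the weight class of the critical edge), followed by showing that the image of the path $P$ of Observation~\ref{obs:MinWtCyc} is a shortest $v^1_1$--$v^2_p$ path in $G_{i,j,k}$, from which conditions (1) and (3) are immediate and (2) follows from the missing edge $(v^1_{p-1},v^2_p)$. Your explicit lower bound over ``valid entries'' --- in particular using the weight-class constraint to rule out the out-and-back path through $(x,v_p)$, and flagging the tie case $a=b$ --- supplies precisely the details that the paper's one-line ``a shorter path would create a shorter cycle'' contradiction glosses over and effectively defers to the companion Lemma~\ref{lemma:MinWtCycAPSP2}.
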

\vspace{-.2in}
\begin{proof}
Let $i$, $j$ and $k$ be such that: $v_{p-1}$ and $v_{p+1}$ differ on $i$-th bit and $j$ be the $i$-th bit of $v_{p+1}$ and $k$ be such
that $\frac{M}{2^{k}} < w(v_p, v_{p+1}) \leq \frac{M}{2^{k-1}}$. Hence, edge $(v^1_{p-1}, v^2_p)$ is not present and the edge $(v^1_{p+1}, v^2_p)$ is present
in $G_{i,j,k}$ and so $Last_{G_{i,j,k}} (v^1_1,v^2_{p}) \neq Last_G (v_1, v_p)$, satisfying part 2 of the lemma.

Let us map the path $P$ in Observation~\ref{obs:MinWtCyc}
to the path $P'$ in $G_{i,j,k}$, such that all vertices except $v_p$ are mapped to $V_1$ and $v_p$ is mapped to $V_2$
 (bold path from $v^1_1$ to $v^2_p$ in Figure~\ref{fig:MWCGijk}).
Then, if $P'$ is a shortest path from $v_1^1$ to $v_p^2$ in $G_{i,j,k}$, both parts 1 and 3 of the lemma will hold. So it remains to show that $P'$ is a shortest path. But if not, an
actual shortest path from $v_1^1$ to $v_p^2$ in $G_{i,j,k}$ would create a shorter cycle in $G$ than $C$, and if that cycle were not simple, one could extract from it an even shorter
cycle, contradicting the fact that $C$ is a minimum weight cycle in $G$.
\end{proof}

\begin{lemma}	\label{lemma:MinWtCycAPSP2}
If there exists an $i \in \{1,\ldots,\lceil \log n \rceil\}$, $j \in \{0,1\}$ and
$k \in \{1, 2, \ldots, \lceil \log \rho \rceil \}$ and $y, z \in V$ such that the following conditions hold:

\vspace{-.175in}
\begin{enumerate}
\item $d_{G_{i,j,k}} (y^1,z^2) + d_G (y, z) = wt$ for some $wt$
\vspace{-.075in}
\item $Last_{G_{i,j,k}} (y^1, z^2) \neq Last_G (y, z)$
\vspace{-.075in}
\item $d_{G_{i,j,k}} (y^1, z^2) \leq d_G (y,z) + \frac{M}{2^{k-1}}$
\end{enumerate}
\vspace{-.175in}
Then there exists a simple cycle $C$ containing $z$ of weight at most $wt$ in $G$.	
\end{lemma}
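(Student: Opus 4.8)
The plan is to analyze the shortest $y^1$-to-$z^2$ path in $G_{i,j,k}$, glue it to a shortest $y$-$z$ path in $G$ to obtain a closed walk of weight exactly $wt$, and then extract the required simple cycle from that walk, using condition~3 only at one delicate point. First I would pin down the last edge of the shortest path reaching $z^2$. Since $z^2 \in V_2$ and $V_2$ carries no internal edges, the predecessor $c := Last_{G_{i,j,k}}(y^1,z^2)$ lies in $V_1$, and the edge $(c^1,z^2)$ exists only if $(c,z)\in E$ with $\tfrac{M}{2^k} < w(c,z) \le \tfrac{M}{2^{k-1}}$; I will record both the upper bound and, crucially, the strict lower bound $w(c,z) > \tfrac{M}{2^k}$. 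Next I would note that distances between $V_1$-copies agree with distances in $G$: any path in $G_{i,j,k}$ projects (by dropping superscripts) to a walk of equal weight in $G$, and conversely every shortest path of $G$ lives inside $V_1$, so $d_{G_{i,j,k}}(y^1,c^1)=d_G(y,c)$. Decomposing the shortest path to $z^2$ through its predecessor $c^1$ then gives $d_{G_{i,j,k}}(y^1,z^2)=d_G(y,c)+w(c,z)$, and with condition~1 we obtain $wt = d_G(y,c)+w(c,z)+d_G(y,z)$.

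Second, I would form the closed walk $W$ consisting of a shortest $y$-$c$ path in $G$, then the edge $(c,z)$, then a shortest $z$-$y$ path in $G$; its total weight is exactly $wt$ and it passes through $z$. At $z$ the incoming edge is $(c,z)$ while the outgoing edge is $(z,d)$ with $d := Last_G(y,z)$, and condition~2 guarantees $c \neq d$, so $z$ sits on $W$ with two distinct incident edges.

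Third, I would extract a simple cycle through $z$ of weight at most $wt$. Reading $W$ as two walks out of $z$, namely $Q_A = z \to c \rightsquigarrow y$ (the critical edge followed by the reversed $y$-$c$ path) and $Q_B = z \to d \rightsquigarrow y$ (the reversed $y$-$z$ path), the portion $c \rightsquigarrow y$ is simple, so the only way $Q_A$ can fail to be simple is if $z$ itself reappears on it. If $z$ does not reappear, then $Q_A$ and $Q_B$ are two simple $z$-$y$ paths with distinct first edges, and the standard first-common-vertex argument splices a prefix of $Q_A$ with a prefix of $Q_B$ into a simple cycle through $z$ of weight at most $w(Q_A)+w(Q_B)=wt$. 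If $z$ does reappear on $Q_A$, first at index $i_0 \ge 2$, then the initial loop $z \to c \rightsquigarrow z$ is itself a simple cycle through $z$ of weight at most $w(Q_A) \le wt$, provided $i_0 \ge 3$.

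The main obstacle, and the only place condition~3 is actually needed, is the degenerate possibility $i_0 = 2$: here $z$ immediately precedes $c$ on the shortest $y$-$c$ path, so the candidate loop $z \to c \to z$ reuses the single edge $\{z,c\}$ and is not a cycle in a simple graph. I would rule this out quantitatively. The case $i_0=2$ forces $d_G(y,c)=d_G(y,z)+w(z,c)$, whence $d_{G_{i,j,k}}(y^1,z^2)=d_G(y,z)+2w(c,z)$; substituting into condition~3 yields $2w(c,z) \le \tfrac{M}{2^{k-1}}$, i.e. $w(c,z) \le \tfrac{M}{2^k}$, contradicting the strict lower bound $w(c,z) > \tfrac{M}{2^k}$ recorded at the outset. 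Thus the bad case cannot arise, and in every remaining case we obtain a simple cycle through $z$ of weight at most $wt$, which completes the proof. The subtle point throughout is that the weight-bucketing index $k$ is calibrated so that condition~3 excludes precisely the spurious path that would traverse the critical edge $\{z,c\}$ twice.
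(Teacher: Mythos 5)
Your proof is correct and follows essentially the same route as the paper's: both isolate the last crossing edge $(c^1,z^2)$, use condition~3 together with the bucket lower bound $w(c,z)>M/2^k$ to exclude exactly the degenerate walk that traverses $(c,z)$ twice, use condition~2 to get distinct edges at $z$, and extract a simple cycle from the weight-$wt$ closed walk. Your version is slightly more explicit than the paper's in the final extraction step (the first-common-vertex splice guarantees the resulting cycle actually contains $z$, which the paper leaves implicit), but the underlying argument is the same.
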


\vspace{-.25in}
\begin{proof}
Let $\pi_{y,z}$ be a shortest path from $y$ to $z$ in $G$
 (see Figure~\ref{fig:MWCCycle})
 and let $\pi_{y^1,z^2}$ be a shortest path from $y^1$ to $z^2$ in $G_{i,j,k}$
 (Figure~\ref{fig:MWCGijk}).
Let $\pi^{'}_{y,z}$ be the path corresponding to $\pi_{y^1,z^2}$ in $G$.

Now we need to show that the path $\pi^{'}_{y,z}$ is simple.
Assume that $\pi^{'}_{y,z}$ is not simple.
It implies that the path $\pi_{y^1,z^2}$
must contain $x^1$ and $x^2$ for some $x \in V$.
Now if $x \neq z$, then we can remove the subpath from $x^1$ to $x^2$ (or from $x^2$ to $x^1$)
 to obtain an even shorter path from $y^1$ to $z^2$.

It implies that the path $\pi_{y^1,z^2}$ contains $z^1$ as an internal vertex.
Let $\pi_{z^1,z^2}$ be the subpath of $\pi_{y^1,z^2}$ from vertex $z^1$ to $z^2$.
If $\pi_{z^1,z^2}$ contains at least 2 internal vertices then 
this would be a simple cycle of weight less than $wt$, and we are done.
Otherwise, the path $\pi_{z^1,z^2}$ contains exactly one internal vertex (say $x^1$).
Hence path $\pi_{z^1,z^2}$ corresponds to the edge $(z,x)$ traversed twice in graph $G$.
But the weight of the edge $(x,z)$ must be greater than $\frac{M}{2^k}$ (as the edge $(x^1,z^2)$ is present in $G_{i,j,k}$).
Hence $w(\pi_{z^1,z^2}) > \frac{M}{2^{k-1}} $ and hence 
$d_{G_{i,j,k}} (y^1, z^2) \geq d_G (y, z) + w(\pi_{z^1,z^2}) > d_G (y, z) + \frac{M}{2^{k-1}}$, resulting in a contradiction as
condition 3 states otherwise. 
(It is for this property that the index $k$ in $G_{i,j,k}$ is used.)
Thus path $\pi_{y^1,z^2}$ does not contain $z^1$ as an internal vertex and hence $\pi^{'}_{y,z}$ is simple. 

If the paths $\pi_{y,z}$ and $\pi^{'}_{y,z}$ do not have any internal vertices in common, then
$\pi_{y,z} \circ \pi^{'}_{y,z}$ corresponds to a simple cycle $C$ in $G$ of weight $wt$ that passes through $y$ and $z$.
Otherwise, we can extract from $\pi_{y,z} \circ \pi^{'}_{y,z}$ a cycle of weight smaller than $wt$. This establishes the lemma.
\end{proof}

\begin{proof}[Proof of Theorem~\ref{thm1}:]
To compute the weight of a minimum weight cycle in $G$ in 
$\tilde O(n^2 + T_{APSP})$, 
we 
use the procedure 
MWC-to-APSP
described in Section~\ref{sec:undir}.
By Lemmas~\ref{lemma:MinWtCycAPSP1}
and \ref{lemma:MinWtCycAPSP2}, the value $wt$ returned by this algorithm is the weight of a minimum
weight cycle in $G$.
\end{proof}

\vspace{-.3in}
\subparagraph*{Sparse Reduction to APSD:} 
We now describe how to avoid using the  
$Last$
matrix  in the reduction.
A  2-approximation algorithm for 
finding a cycle of weight at most $2t$, where $t$ is 
such that the minimum-weight cycle's weight lies in the range $(t, 2t]$, as well as
distances between pairs of vertices within distance at most $t$, was given
by Lingas and Lundell~\cite{LL09}. 
This algorithm can also compute
the last edge on each shortest path it computes,
and its running time is $\tilde{O}(n^2\log (n \rho))$.
For a minimum weight cycle $C = \langle v_1, v_2, \ldots, v_l \rangle$ where the edge $(v_p,v_{p+1})$ is a critical
edge with respect to the start vertex $v_1$, the shortest path length
 from $v_1$ to $v_p$ or to $v_{p+1}$ is at most $t$.
Thus using this algorithm, we can compute the last edge on a shortest path for
 such pair of vertices in $\tilde{O}(n^2\log (n \rho))$ time.

In our reduction to 
APSD,
we first run the 2-approximation algorithm on the input graph $G$ to obtain
the $Last(y,z)$  for certain pairs of vertices.
Then,  in Step 5 we  check if $Last_{G_{i,j,k}} (y^1, z^2) \neq Last_G (y, z)$ {\it only if} $Last_G (y, z)$ has been
computed (otherwise the current path is not a candidate for computing a minimum weight cycle).
It appears from the algorithm that the $Last$ values are also needed in the $G_{i,j,k}$. However,
instead of computing the $Last$ values in each $G_{i,j,k}$,
 we check for the shortest path from $y$ to $z$ only in those $G_{i,j,k}$ graphs
where the $Last_G (y,z)$ has been computed, and the edge is not present in $G_{i,j,k}$.
In other words, if 
$Last(y,z) = q$,
we will only consider the shortest paths from $y^1$ to $z^2$ in those graphs $G_{i,j,k}$ 
where $q$'s $i$-th
bit is not equal to $j$.
Thus our reduction to 
APSD
goes through without needing 
APSP
to output the $Last$ matrix.
 $\square$

\subsection{Reducing ANSC to APSP in Unweighted Undirected Graphs}	\label{sec:ANSCtoAPSP}

For our sparse $\tilde O(n^2)$ reduction from ANSC to 
APSP
 in unweighted undirected graphs, we
use the graphs from the previous section, but we do not use 
the index $k$, since the graph is unweighted. 

Our reduction exploits the fact that in unweighted graphs, every edge in a cycle is a critical edge with respect to some vertex.
Thus we construct $2\lceil \log n \rceil$ graphs $G_{i,j}$, and in order to construct a shortest cycle
through vertex $z$ in $G$, we will set $z= v_p^2$ in the reduction in the previous section. Then, by letting
one of the two edges incident on $z$ in the shortest cycle through $z$ be the critical  edge for the cycle,
the construction from the previous section will allow us to find the length of a minimum length cycle
through $z$, for each $z\in V$, with the following post-processing algorithm.

\begin{algorithm}[H]	
\scriptsize
\caption*{ANSC-to-APSP}
\begin{algorithmic}[1]
\For{each vertex $z \in V$}	
    \State $wt[z] \leftarrow \infty$
\EndFor
\For{$1\leq i\leq \lceil \log n\rceil$, $j \in \{0,1\}$}
	\State{Compute $\mathrm{APSP'}$ on $G_{i,j,}$}
	\For{$y,z \in V$}
		\State {\bf  if} $d_{G_{i,j,}} (y^1, z^2) \leq d_G (y,z) + 1$ {\bf then} check if $Last_{G_{i,j,}} (y^1, z^2) \neq Last_G (y, z)$ 	\label{algansc:check1} 
		\State {\bf if} both checks in Step~\ref{algansc:check1}  hold {\bf then} $wt[z] \leftarrow min(wt[z], d_{G_{i,j,}} (y^1,z^2) + d_G (y, z))$ 
	\EndFor
\EndFor
\State {\bf return} $wt$ array
\end{algorithmic} 
\end{algorithm}
\vspace{-.15in}

Correctness of the above sparse reduction follows from the following two lemmas, which
are similar to Lemmas~\ref{lemma:MinWtCycAPSP1}
and \ref{lemma:MinWtCycAPSP2}.

\begin{lemma}	\label{lemma:ANSCtoAPSP1}
Let $C = \langle z, v_2, v_3, \ldots, v_q \rangle$ be a minimum length cycle passing through vertex $z \in V$.
Let $(v_p,v_{p+1})$ be its critical edge such that $p = \lfloor \frac{q}{2} \rfloor + 1$.
Then there exists an $i \in \{1,\ldots, \lceil \log n \rceil\}$ and $j \in \{0,1\}$ such that the following conditions hold:
\vspace{-.175in}
\begin{enumerate}[(i)]
\item $d_{G_{i,j}} (v^1_p,z^2) + d_G  (v_p,z) = len(C)$
\vspace{-.075in}
\item $Last_{G_{i,j}} (v^1_p,z^2) \neq Last_G (v_p,z)$
\vspace{-.075in}
\item $d_{G_{i,j}} (v^1_p,z^2) \leq d_G (v_p,z) + 1$
\end{enumerate}
\end{lemma}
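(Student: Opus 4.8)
The plan is to follow the proof of Lemma~\ref{lemma:MinWtCycAPSP1} almost verbatim, with two specializations forced by the unweighted setting: the index $k$ and the weight slack $M/2^{k-1}$ collapse to the additive slack $+1$, and the two vertex copies are arranged so that $z$ sits in $V_2$ while the critical-edge endpoint $v_p$ sits in $V_1$. First I would set up the two arcs of $C$ joining $v_p$ to $z$. Since $p=\lfloor q/2\rfloor+1$, the backward arc $\langle v_p,v_{p-1},\dots,v_2,z\rangle$ has $\lfloor q/2\rfloor$ edges and enters $z$ through $v_2$, while the forward arc $\langle v_p,v_{p+1},\dots,v_q,z\rangle$ (the one containing the critical edge $(v_p,v_{p+1})$) has $\lceil q/2\rceil$ edges and enters $z$ through $v_q$. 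By an argument analogous to the shortest/second-shortest dichotomy of Lemma~\ref{lemma:MinWtCycProp-full} (here for lengths rather than weights), minimality of $C$ among cycles through $z$ forces the shorter backward arc to be a shortest $v_p$-to-$z$ path, so $d_G(v_p,z)=\lfloor q/2\rfloor$ with a shortest path ending in the edge $(v_2,z)$.

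Next I would pick the bit indices. Because $C$ is a simple cycle, $q\ge 3$ and the two neighbors $v_2,v_q$ of $z$ are distinct, so they differ in some bit position $i$; I set $j$ to be the $i$-th bit of $v_q$. By the construction of $G_{i,j}$ the crossing edge $(v_q^1,z^2)$ is then present while $(v_2^1,z^2)$ is absent. This gives condition (ii) at once: a shortest $v_p^1$-to-$z^2$ path in $G_{i,j}$ cannot enter $z^2$ through the absent edge from $v_2$, whereas $Last_G(v_p,z)=v_2$, so the two recorded predecessors differ. Mapping the forward arc into $G_{i,j}$ (keeping $v_p,v_{p+1},\dots,v_q$ in $V_1$ and crossing to $z^2$ along the present edge $(v_q^1,z^2)$) exhibits a $v_p^1$-to-$z^2$ path of length $\lceil q/2\rceil$, hence $d_{G_{i,j}}(v_p^1,z^2)\le\lceil q/2\rceil$. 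Together with $d_G(v_p,z)=\lfloor q/2\rfloor$ this already yields condition (iii), since $\lceil q/2\rceil\le\lfloor q/2\rfloor+1$.

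The crux is the matching lower bound $d_{G_{i,j}}(v_p^1,z^2)\ge\lceil q/2\rceil$, which then delivers condition (i) because $\lceil q/2\rceil+\lfloor q/2\rfloor=q=len(C)$. Here I would argue by contradiction as in Lemma~\ref{lemma:MinWtCycAPSP1}: a strictly shorter $v_p^1$-to-$z^2$ path projects to a $v_p$-to-$z$ walk in $G$ of length $<\lceil q/2\rceil$, from which I extract a simple $v_p$-to-$z$ path $W$. Since the edge $(v_2^1,z^2)$ is absent, $W$ must enter $z$ through an edge other than $(v_2,z)$, i.e.\ through an edge different from the one by which the backward arc enters $z$. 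Following $W$ out of $z$ until it first rejoins the backward arc then carves out a genuine simple cycle through $z$ of length strictly below $\lceil q/2\rceil+\lfloor q/2\rfloor=q$, contradicting the minimality of $C$.

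I expect the main obstacle to be exactly this extraction step: guaranteeing that the shorter closed walk contains a simple cycle that still passes through $z$, rather than merely a shorter cycle elsewhere in $G$. The deliberate blocking of $(v_2^1,z^2)$ is what supplies this guarantee, by ensuring the candidate path and the backward arc leave $z$ along distinct edges; this is why the bit choice behind condition (ii) and the lower-bound argument behind condition (i) cannot be decoupled. A secondary subtlety, present already in Lemma~\ref{lemma:MinWtCycAPSP1}, is tie-breaking in the $Last$ arrays: the claim $Last_G(v_p,z)=v_2$ presumes the recorded shortest path is the backward arc. I would either fix the convention that $Last_G(v_p,z)$ reflects this arc, or observe that the algorithm only needs the two predecessors to differ, which holds as soon as $G_{i,j}$ is forced off the edge $(v_2,z)$ while some shortest path in $G$ uses it.
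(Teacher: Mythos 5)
Your proof is correct and follows exactly the route the paper intends: the paper states this lemma without proof, remarking only that it is ``similar to'' Lemmas~\ref{lemma:MinWtCycAPSP1} and~\ref{lemma:MinWtCycAPSP2}, and your adaptation (the backward arc as a shortest $v_p$-to-$z$ path, a bit position separating $v_2$ from $v_q$, the forward arc for the upper bound, and cycle extraction through $z$ for the lower bound) is precisely that adaptation specialized to the unweighted, fixed-$z$ setting. The $Last_G$ tie-breaking subtlety you flag at the end is genuine (for even $q$ one should really choose $i,j$ to separate $Last_G(v_p,z)$ from whichever of $v_2,v_q$ it is not, rather than always separating $v_2$ from $v_q$), but the identical issue is present and unaddressed in the paper's own proof of Lemma~\ref{lemma:MinWtCycAPSP1}, so it is not a gap relative to the paper.
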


\begin{lemma}	\label{lemma:ANSCtoAPSP2}
If there exists an $i \in \{1, \ldots, \lceil \log n \rceil \}$ and $j \in \{0,1\}$ and $y, z \in V$ such that the following conditions hold:
\vspace{-.175in}
\begin{enumerate}[(i)] 
\item $d_{G_{i,j}} (y^1,z^2) + d_G (y,z) = q$ for some $q$ where $d_G (y,z) = \lfloor \frac{q}{2} \rfloor$ 
\vspace{-.075in}
\item $Last_{G_{i,j}} (y^1,z^2) \neq Last_G (y,z)$
\vspace{-.075in}
\item $d_{G_{i,j}} (y^1,z^2) \leq d_G (y,z) + 1$
\end{enumerate}
\vspace{-.1in}
Then there exists a simple cycle $C$ passing through $z$ of length at most $q$ in $G$.
\end{lemma}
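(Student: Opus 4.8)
The plan is to mirror the proof of Lemma~\ref{lemma:MinWtCycAPSP2}, exploiting the fact that in the unweighted setting the weight index $k$ is no longer needed: the role played there by the lower bound $M/2^{k-1}$ is taken over here by the additive slack of $1$ in condition (iii). I would let $\pi_{y,z}$ be a shortest $y$--$z$ path in $G$ and let $\pi_{y^1,z^2}$ be a shortest $y^1$--$z^2$ path in $G_{i,j}$ (both simple, as returned by $\mathrm{APSP'}$). Let $\pi'_{y,z}$ be the image of $\pi_{y^1,z^2}$ in $G$ obtained by dropping superscripts. Since every edge of $G_{i,j}$ projects to an edge of $G$, the walk $\pi'_{y,z}$ has length $d_{G_{i,j}}(y^1,z^2)$, and any $y^1$--$z^1$ subwalk projects to a $y$--$z$ walk of the same length, hence of length at least $d_G(y,z)$.

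First I would show that either $\pi'_{y,z}$ is simple or we are already done. Because $\pi_{y^1,z^2}$ is simple while its projection need not be, any non-simplicity forces some $x\in V$ to occur on $\pi_{y^1,z^2}$ as both $x^1$ and $x^2$. If $x\ne z$, the successor $w^1$ of the internal vertex $x^2$ satisfies $(x,w)\in E$, so the $V_1$ edge $(x^1,w^1)$ is present and we can reroute through $x^1$ to obtain a strictly shorter $y^1$--$z^2$ path, contradicting minimality exactly as in Lemma~\ref{lemma:MinWtCycAPSP2}. Hence the only remaining source of non-simplicity is that $z^1$ occurs as an internal vertex; let $\pi_{z^1,z^2}$ be the corresponding subpath. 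If $\pi_{z^1,z^2}$ has at least two internal vertices, the same rerouting argument shows these internal vertices do not repeat, so its projection is a simple cycle through $z$ of length at most $d_{G_{i,j}}(y^1,z^2)\le q$, and the lemma already holds. If instead $\pi_{z^1,z^2}$ has exactly one internal vertex $x^1$, it is the two-edge walk $z^1\to x^1\to z^2$ corresponding to traversing $(z,x)$ twice, whence $d_{G_{i,j}}(y^1,z^2)\ge d_G(y,z)+2 > d_G(y,z)+1$, contradicting condition (iii). This last step is precisely where the unweighted case simplifies the weighted argument, and I expect it to be the one point needing the most care.

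Finally, assuming $\pi'_{y,z}$ is simple, I would build the desired cycle. By condition (ii) the two simple $y$--$z$ paths $\pi_{y,z}$ and $\pi'_{y,z}$ enter $z$ through distinct last edges. Tracing both paths backward from $z$, let $v$ be the vertex closest to $z$ at which they first meet again (such $v$ exists since both originate at $y$); the two $v$--$z$ subpaths are internally disjoint and together form a simple cycle through $z$, of length at most $d_G(y,z)+d_{G_{i,j}}(y^1,z^2)=q$ by condition (i). The main obstacle in this final step is guaranteeing that the extracted cycle actually passes through $z$ rather than some arbitrary shared portion of the two paths; choosing $v$ as the reconvergence point nearest to $z$ is exactly what forces $z$ onto the cycle, and the distinctness of the last edges ensures the cycle is nondegenerate. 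This yields a simple cycle through $z$ of length at most $q$, as required.
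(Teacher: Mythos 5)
Your proof is correct and follows essentially the same route the paper intends: the paper gives no separate proof of this lemma, stating only that it is proved like Lemma~\ref{lemma:MinWtCycAPSP2}, and your argument is exactly that adaptation, with the two-edge subpath $z^1\to x^1\to z^2$ ruled out by the $+1$ slack in condition (iii) in place of the $M/2^{k-1}$ threshold. If anything you are more careful than the paper in the final extraction step, where ensuring the cycle actually passes through $z$ (via the reconvergence point nearest to $z$) is glossed over in the original but genuinely matters for the ANSC application.
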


\vspace{-.175in}
\begin{proof}[Proof of Theorem~\ref{thm2}:]
We now show that the  entries
in the $wt$ array returned by the above algorithm correspond  to the ANSC output
for $G$.
Let $z \in V$ be an arbitrary vertex in $G$ and let $q = wt[z]$.
Let $y'$ be the vertex in Step 5 for which we obtain this value of $q$.
Hence by Lemma~\ref{lemma:ANSCtoAPSP2}, there exists a simple cycle $C$ passing through $z$ of length at most $q$ in $G$. If there were a cycle through $z$ of length $q'<q$ then 
by Lemma~\ref{lemma:ANSCtoAPSP1}, there exists a vertex $y''$ such that conditions in Step~\ref{algansc:check1} hold for $q'$, and the algorithm would have returned a smaller value than
$wt[z]$, which is a contradiction.
This is a sparse $\tilde{O}(n^2)$ reduction since it makes $O(\log n)$ calls to 
 APSP,
and spends $\tilde{O}(n^2)$ additional time.
\end{proof}

It would be interesting to see if we can obtain a reduction from {\it  weighted} ANSC to 
 APSD or APSP.
The
above reduction does not work for the weighted case since it exploits the fact that for any cycle $C$ through
a vertex $z$, an edge in $C$ that is incident on $z$ is a critical edge for some vertex in $C$. 
However, this property
need not hold in the weighted case.
\section{ Fine-grained Reductions for Directed Graphs}\label{sec:otherdir}

In Section~\ref{sec:dir}, we gave an overview of our tilde-sparse $O(m+n\log n)$ reduction from directed $2$-SiSP to the Radius problem.
Here we give full details of this reduction along with the rest of our sparse reductions for directed graphs
 (except
for
 the reductions related to Centrality problems which are described in Section~\ref{sec:centrality}).

\subparagraph{I. \underline{2-SiSP to Radius and $s$-$t$ Replacement Paths to Eccentricities:}}
Here we give the details of the sparse reduction from 2SiSP to Radius described in Section~\ref{sec:dir}
and a related sparse reduction from $s$-$t$ Replacement Paths to Eccentricities.

We start by pointing out some differences between the sparse reduction we give below and an earlier sub-cubic reduction to Radius in~\cite{AGW15}.
In that sub-cubic reduction to Radius, the starting problem is Min-Wt-$\Delta$.
This reduction 
constructs
a 4-partite graph 
that contains paths that correspond to triangles
in the original graph.
However, Min-Wt-$\Delta$ can be solved in $O(m^{3/2})$ time and hence 
to use this reduction to show 
MWC hardness for Radius,
we first need to show that Min-Wt-$\Delta$ is MWC Hard.
 But 
 this
 would achieve a major
breakthrough by giving an $\tilde{O}(m^{3/2})$ time algorithm for 
MWC
(and would refute MWCC).
If we tried to adapt this reduction in~\cite{AGW15}
to the $mn$ class by starting from MWC instead of Min-Wt-$\Delta$, it appears that we would need to have an $n$-partite graph,
since a minimum weight cycle could pass through all $n$ vertices in the graph. This would not be a sparse reduction.
Hence, here we instead present a more complex reduction from 2-SiSP.
As in~\cite{AGW15},
 we also use
 bit-encoding to preserve 
sparsity in the reduced graph.
However the rest of the reduction is different.

\begin{lemma}	\label{lemma:2SiSPRadius}
In weighted directed graphs, 2-SiSP $\lesssimsp_{m+n}$ Radius and $s$-$t$ Replacement Paths $\lesssimsp_{m+n}$ Eccentricities
\end{lemma}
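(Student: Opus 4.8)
The plan is to exploit the standard fact that a second simple shortest $s$-$t$ path must omit at least one edge of the given shortest path $P=\langle v_0,\dots,v_l\rangle$, so its weight equals $\min_{0\le j\le l-1}RP(j)$, where $RP(j)$ is the weight of a shortest $s$-$t$ path avoiding the edge $(v_j,v_{j+1})$. Both reductions then follow from one graph $G''$ satisfying properties (i)--(iii): computing the \emph{radius} of $G''$ extracts $\min_j RP(j)$ (i.e.\ 2-SiSP), while computing \emph{all eccentricities} of $G''$ extracts every individual $RP(j)$ (i.e.\ all $s$-$t$ replacement paths). First I would fix the offset $M'=3nM$ (an upper bound forcing $RP(j)<\tfrac49 M'$ for every finite $RP(j)$, since each $RP(j)\le (n-1)M$), precompute $d_G(s,v_a)$ and $d_G(v_a,t)$ for all $a$ with two single-source shortest-path calls (one from $s$, one reverse from $t$), and then build $G''$ exactly as in Section~\ref{sec:dir}: the $z$-backbone with entry edges $z_{a_o}\to v_a$ of weight $d_G(s,v_a)$ and exit edges $v_c\to z_{(c-1)_i}$ of weight $d_G(v_c,t)$; the vertices $y_{j_o},y_{j_i}$ with $y_{j_o}\to z_{j_o}$ of weight $0$ and the gray edge $z_{j_i}\to y_{j_i}$ of weight $\tfrac{11}{9}M'$; the vertices $A,B$ inducing connectivity among the $y_{j_o}$'s; and the $2\lceil\log n\rceil$ hubs $C_{r,b}$ realizing, by bit-encoding, a two-hop weight-$\tfrac23 M'$ route $y_{j_o}\rightsquigarrow y_{k_i}$ exactly when $j\neq k$.

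Next I would prove claim (ii), namely $d_{G''}(z_{j_o},z_{j_i})=RP(j)$. The structural fact I would invoke is that a shortest $s$-$t$ path avoiding a single $P$-edge $(v_j,v_{j+1})$ deviates from $P$ in one contiguous detour: it follows $P$ up to some $v_a$ with $a\le j$, then takes a sub-path using no edge of $P$, and rejoins $P$ at some $v_c$ with $c\ge j+1$ (any $P$-edge reused in the middle could be replaced by the corresponding sub-path of $P$, contradicting minimality). The $z$-backbone realizes precisely the minimization $\min_{a\le j,\;c\ge j+1}\big(d_G(s,v_a)+d_{G\setminus E_P}(v_a,v_c)+d_G(v_c,t)\big)$, where $E_P$ is the edge set of $P$: the zero-weight backbone lets $z_{j_o}$ reach any entry $z_{a_o}$ with $a\le j$ and lets any exit target $z_{(c-1)_i}$ with $c\ge j+1$ slide down to $z_{j_i}$, the entry/exit weights pay for the $P$-prefix and $P$-suffix, and the embedded copy of $G$ (with $P$-edges excluded) contributes the off-$P$ detour. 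I would also verify there is no cheaper route between $z_{j_o}$ and $z_{j_i}$ through the $y/A/B/C$ vertices: the only $y$--$z$ adjacencies, $y_{j_o}\to z_{j_o}$ and $z_{j_i}\to y_{j_i}$, cannot be chained to shortcut inside the $z$-layer.

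The heart of the argument is the weight engineering behind claims (i) and (iii). With $M'=3nM$ the genuine route $y_{j_o}\to z_{j_o}\rightsquigarrow z_{j_i}\to y_{j_i}$ of weight $RP(j)+\tfrac{11}{9}M'$ is strictly shorter than the only alternative route to $y_{j_i}$ (through $A,B$, some $y_{k_o}$, and a hub $C$, of weight $\ge\tfrac53 M'$), so $d_{G''}(y_{j_o},y_{j_i})=RP(j)+\tfrac{11}{9}M'$; every other $y_{k_i}$ sits at distance $\tfrac23 M'$ through the hubs, and every $z$- and $v$-vertex is reachable cheaply inside the $z$-layer (with no $\tfrac{11}{9}M'$ term), so the farthest vertex from $y_{j_o}$ is indeed $y_{j_i}$, giving (i) and $\mathrm{ecc}(y_{j_o})=RP(j)+\tfrac{11}{9}M'$. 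For (iii) I would note that the only finite-eccentricity vertices are the $y_{j_o}$'s together with $A$ and $B$: the $y_{k_i}$ (sinks), the hubs $C_{r,b}$, and all $z$- and $G$-vertices cannot reach any $y_{k_o}$ and hence not $A$ or $B$, so they have infinite out-eccentricity, while $\mathrm{ecc}(A)=\mathrm{ecc}(B)=\tfrac53 M'\ge \tfrac{11}{9}M'+\min_j RP(j)$ precisely because $\min_j RP(j)<\tfrac49 M'$. Thus some $y_{j_o}$ is a center and $\mathrm{Radius}(G'')=\tfrac{11}{9}M'+\min_j RP(j)$.

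Finally I would read off the answers: 2-SiSP $=\mathrm{Radius}(G'')-\tfrac{11}{9}M'$ for the first reduction, and $RP(j)=\mathrm{ecc}(y_{j_o})-\tfrac{11}{9}M'$ for every $j$ for the second, with the capped value $\tfrac53 M'$ signalling a non-existent replacement path (a bridge). Since $|V''|=O(n)$ and $|E''|=O(m+n\log n)$ by the bit-encoding, and $G''$ together with the two shortest-path precomputations is built in $\tilde O(m+n)$ time using a single oracle call, this gives 2-SiSP $\lesssimsp_{m+n}$ Radius and $s$-$t$ Replacement Paths $\lesssimsp_{m+n}$ Eccentricities. I expect the main obstacle to be exactly the case analysis of the previous paragraph: showing that the $\tfrac{11}{9}M'$ gray route strictly dominates every competing path to $y_{j_i}$ and that every non-$y_{j_o}$ vertex has strictly larger (or infinite) eccentricity, uniformly over all $j$ and including the degenerate cases where some $RP(j)$ is infinite. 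It is this balance among the constants $\tfrac{11}{9},\tfrac13,\tfrac23,\tfrac53$ and the threshold $\tfrac49 M'$ that makes the radius/eccentricity read-off exact.
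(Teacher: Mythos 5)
Your proposal is essentially the paper's own proof: the same reduced graph $G''$ (the $z$-backbone paying $d_G(s,v_a)$ on entry and $d_G(v_c,t)$ on exit over the copy of $G$ with $P$'s edges deleted, the $y_{j_o},y_{j_i}$ layer, the $A,B$ connectivity gadget, and the $2\lceil\log n\rceil$ bit-encoded hubs $C_{r,s}$), the same three claims (farthest vertex from $y_{j_o}$ is $y_{j_i}$; $d_{G''}(z_{j_o},z_{j_i})$ equals the replacement-path weight $RP(j)$; some $y_{j_o}$ is a center), and the same read-off of Radius for 2-SiSP and of all eccentricities for $s$-$t$ Replacement Paths. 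The only differences are cosmetic or mildly in your favor: you take $M'=3nM$ instead of $9nM$, you justify claim (ii) via the explicit minimization $\min_{a\le j,\,c\ge j+1}\bigl(d_G(s,v_a)+d_{G\setminus E_P}(v_a,v_c)+d_G(v_c,t)\bigr)$ rather than the paper's one-directional contradiction, and you address the bridge case $RP(j)=\infty$, which the paper does not. One concrete remark that applies equally to your write-up and to the paper's: the eccentricity case analysis for $y_{j_o}$ never accounts for the hub vertices $C_{r,s}$ themselves. The hub $C_{r,\bar s}$, where $\bar s$ is the complement of $j$'s $r$-th bit, is reachable from $y_{j_o}$ only via $y_{j_o}\to A\to B\to y_{k_o}\to C_{r,\bar s}$ at cost $\tfrac{4}{3}M'$, which exceeds $\tfrac{11}{9}M'+RP(j)$ exactly when $RP(j)<\tfrac{1}{9}M'$ (always true with the paper's $M'=9nM$, and possible with your $M'=3nM$); in that regime the farthest vertex from $y_{j_o}$ is a hub rather than $y_{j_i}$ and the radius no longer encodes $\min_j RP(j)$. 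So claim (i) as stated needs the constants retuned, e.g., by raising the weight of the $z_{j_i}\to y_{j_i}$ edge strictly above $\tfrac{4}{3}M'$ while keeping it below $\tfrac{5}{3}M'-\max_j RP(j)$. You flagged precisely this case analysis as the delicate point, so this is a repair within your (and the paper's) approach rather than a different one.
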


\vspace{-.2in}
\begin{proof}
We are given an input graph $G = (V,E)$, a source vertex $s$ and a sink/target vertex $t$ and we wish to compute the second simple 
shortest path from $s$ to $t$.
Let $P$ ($s=v_0\rightarrow v_1\rightsquigarrow v_{l-1}\rightarrow v_l=t$) be the shortest path from $s$ to $t$ in $G$.

\textit{Constructing the reduced graph $G''$:}
We first create the graph $G'$, which contain $G$ and $l$ additional vertices $z_0$,$z_1$,$\ldots$,$z_{l-1}$.
We remove the edges lying on $P$ from $G'$.
For each $0\leq i\leq l-1$, we add an edge from $z_i$ to $v_i$ of weight $d_G (s,v_i)$ and an edge from $v_{i+1}$ to $z_i$ of weight 
$d_G (v_{i+1},t)$.
Also for each $1\leq i\leq l-1$, we add a zero weight edge from $z_i$ to $z_{i-1}$.

Now form $G''$ from $G'$.
For each $0\leq j\leq l-1$, we replace vertex $z_j$ by vertices $z_{j_i}$ and $z_{j_o}$ and we place a directed edge of weight $0$ from
$z_{j_i}$ to $z_{j_o}$, and we also replace each incoming edge to (outgoing edge from) $z_j$ with an incoming edge to $z_{j_i}$ 
(outgoing edge from $z_{j_o}$) in $G'$.

Let $M$ be the largest edge weight in $G$ and let $M' = 9nM$.
For each $0\leq j\leq l-1$, we add additional vertices $y_{j_i}$ and $y_{j_o}$ and we place a directed edge of weight $0$ 
from $y_{j_o}$ to $z_{j_o}$ and an edge of weight $\frac{11}{9}M'$ from $z_{j_i}$ to $y_{j_i}$.

We add 2 additional vertices $A$ and $B$, and we place a directed edge from $A$ to $B$ of weight $0$.
We also add $l$ incoming edges to $A$ (outgoing edges from $B$) from (to) each of the $y_{j_o}'s$ of weight $0$ ($M'$).

We also add edges of weight $\frac{2M'}{3}$ from $y_{j_o}$ to $y_{k_i}$ (for each $k\neq j$).
But due to the addition of $O(n^2)$ edges, graph $G'$ becomes dense.
To solve this problem, we add a gadget in our construction that ensures that $\forall 0\leq j\leq l-1$, we have at least one path of length 2
and weight equal to $\frac{2M'}{3}$ from $y_{j_o}$ to $y_{k_i}$ (for each $k\neq j$) 
(similar to \cite{AGW15}).
In this gadget, 
we add $2\lceil \log n\rceil$ vertices of the form $C_{r,s}$ for $1\leq r\leq \lceil \log n\rceil$ and $s \in \{0,1\}$.
Now for each $0\leq j\leq l-1$, $1\leq r\leq \lceil \log n\rceil$ and $s\in \{0,1\}$, we add an edge of weight $\frac{M'}{3}$ from 
$y_{j_o}$ to $C_{r,s}$ if $j's$ $r$-th bit is equal to $s$.
We also add an edge of weight $\frac{M'}{3}$ from $C_{r,s}$ to $y_{j_i}$ if $j$'s $r$-th bit is not equal to $s$.
So overall we add $2n\log n$ edges that are incident to $C_{r,s}$ vertices; 
for each $y_{j_o}$ we add $\log n$ outgoing edges to $C_{r,s}$ vertices and 
for each $y_{j_i}$ we add $\log n$ incoming edges from $C_{r,s}$ vertices.

We can observe that for $0\leq j\leq l-1$, there is at least one path of weight $\frac{2M'}{3}$ from $y_{j_o}$ to $y_{k_i}$ (for each 
$k\neq j$) and the gadget does not add any new paths from $y_{j_o}$ to $y_{j_i}$. 
The reason is that for every distinct $j,k$, there is at least one bit (say $r$) where $j$ and $k$ differ and let $s$ be the $r$-th bit of $j$.
Then there must be an edge from $y_{j_o}$ to $C_{r,s}$ and an edge from $C_{r,s}$ to $y_{k_i}$, resulting in a path of weight 
$\frac{2M'}{3}$ from $y_{j_o}$ to $y_{k_i}$.
And by the same argument we can also observe that this gadget does not add any new paths from $y_{j_o}$ to $y_{k_i}$.

We call this graph as $G''$.
Figure~\ref{fig:2SiSPRadius} depicts the full construction of $G''$ for $l = 3$.
We now establish the following three properties.

{\it (i) For each $0\leq j\leq l-1$, the longest shortest path in $G''$ from $y_{j_o}$ is to the vertex $y_{j_i}$.}
It is easy to see that
the shortest path from $y_{j_o}$ to any of the vertices in $G$ or any of the $z$'s has weight at most 
$nM$.
And the shortest paths from $y_{j_o}$ to the vertices $A$ and $B$ have weight $0$.
For $k\neq j$, the shortest path from $y_{j_o}$ to $y_{k_o}$ and $y_{k_i}$ has weight $M'$ and $\frac{2}{3}M'$
respectively.
Whereas the shortest path from $y_{j_o}$ to $y_{j_i}$ has weight at least $10nM$ as it includes the last edge 
$(z_{j_i}, y_{j_i})$ of weight $\frac{11}{9}M' = 11nM$.
It is easy to observe that the shortest path from $y_{j_o}$ to $y_{j_i}$ corresponds to the shortest path from $z_{j_o}$ to $z_{j_i}$.

{\it (ii) The shortest path from $z_{j_o}$ to $z_{j_i}$ corresponds to the  replacement path for the edge $(v_j,v_{j+1})$ lying on $P$.}
Suppose not and let $P_j$ ($s\rightsquigarrow v_h\rightsquigarrow v_k\rightsquigarrow t$) (where $v_h$ is the vertex where $P_j$
separates from $P$
and $v_k$ is the vertex where it joins $P$) be the replacement path from $s$ to $t$ for the edge $(v_j,v_{j+1})$.
But then the path $\pi_j$ ($z_{j_o}\rightarrow z_{{j-1}_i}\rightsquigarrow z_{h_o}\rightarrow v_{h} \circ P_j(v_h,v_k) \circ v_k
\rightarrow z_{k_i}\rightarrow z_{k_o}\rightsquigarrow z_{j_i}$) (where $P_j(v_h,v_k)$ is the subpath of $P_j$ from $v_j$ to $v_k$)
from $z_{j_o}$ to $z_{j_i}$ has weight equal to $wt(P_j)$, resulting in a contradiction as the shortest path from $z_{j_o}$ to $z_{j_i}$ has weight greater than that of $P_j$.

{\it (iii) One of the vertices among $y_{j_o}$'s is a center of $G''$.}
It is easy to
see that none of the vertices in $G$ could be a center of the graph $G''$ as there is no path from any 
$v \in V$ to any of the $y_{j_o}$'s  in $G''$.
Using a similar argument, we can observe that none of the $z$'s, or the vertices $y_{j_i}$'s
could be a potential candidate for the center of $G''$.
For vertices $A$ and $B$, the shortest path to any of the $y_{j_i}$'s has weight exactly $\frac{5}{3}M' = 15nM$, which is strictly greater 
than the weight of the largest shortest path from any of the $y_{j_o}$'s.
Thus one of the vertices among $y_{j_o}$'s is a center of $G''$.

Thus by computing the radius in $G''$, from 
$(i)$, $(ii)$, and $(iii)$,
we can compute the weight of the shortest replacement
path from $s$ to $t$, which by definition of 2-SiSP, is the second simple shortest path from $s$ to $t$.
This completes the proof
of 2-SiSP $\lesssimsp_{m+n}$ Radius.

Now, if instead of computing Radius in $G''$ we compute
 the Eccentricities of all vertices in $G''$, then from 
 $(i)$ and $(ii)$
 we can compute the weight of the 
replacement path for every edge $(v_j,v_{j+1})$ lying on $P$, thus solving the replacement paths problem.
So $s$-$t$ Replacement Paths $\lesssimsp_{m+n}$ Eccentricities.

Constructing $G''$ takes $O(m+n\log n)$ time since we add $O(n)$ additional vertices and $O(m+n\log n)$ additional 
edges, and given the output of Radius (Eccentricities), we can compute 2-SiSP 
($s$-$t$ Replacement Paths) in $O(1)$ ($O(n)$) time and hence 
the cost of both reductions is $O(m+n\log n)$.
\end{proof}

\subparagraph{II. \underline{ANSC and Replacement Paths:}}

We first describe a sparse reduction from directed MWC to 2-SiSP,
which we will use 
for reducing ANSC to the $s$-$t$ replacement paths problem.
This reduction is adapted from a sub-cubic 
non-sparse reduction from Min-Wt-$\Delta$ to 2-SiSP in~\cite{WW10}.
The reduction in~\cite{WW10}
reduces Min-Wt-$\Delta$ to 2-SiSP
  by creating a 
tripartite graph. 
Since starting from Min-Wt-$\Delta$ is not appropriate for our results (as discussed in our sparse reduction to directed Radius), we start instead from MWC, and
instead of the tripartite graph 
used in~\cite{WW10}
we use the original graph $G$ 
with every vertex $v$ replaced 
with 2 copies,
$v_i$ and $v_o$.    

In this reduction,
as in~\cite{WW10},
 we first create a path of length $n$ with vertices labeled from $p_0$ to $p_n$,
 which will be the initial shortest path.
We then map every edge $(p_i,p_{i+1})$ to the vertex $i$ in the original graph $G$ such that the replacement path from
$p_0$ to $p_n$ for the edge $(p_i,p_{i+1})$ corresponds to the shortest cycle passing through $i$ in $G$.
Thus computing 2-SiSP (i.e., the shortest replacement path) from $p_0$ to $p_n$ in the constructed graph corresponds to the minimum weight cycle in the original graph.
The details of the reduction are in the proof of the lemma below.

\begin{lemma}	\label{lemma:MinWtCyc2SiSP}
In weighted directed graphs, MWC $\leqsp_{m+n}$ 2-SiSP
\end{lemma}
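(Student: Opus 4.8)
The plan is to reduce MWC to 2-SiSP by building a directed graph $H$ whose shortest $s$--$t$ path is a trivial ``spine'' of weight $0$ and whose second simple shortest path is forced to run once around a minimum weight cycle of $G$. Label the vertices of $G$ as $0,1,\dots,n-1$ and split each vertex $v$ into an out-copy $v_o$ and an in-copy $v_i$. I embed $G$ by placing, for every edge $(u,v)\in E$, an arc $u_o\to v_i$ of weight $w(u,v)$, together with a zero-weight internal arc $v_i\to v_o$ for each $v$, so that every directed walk $a\leadsto b$ in $G$ lifts to a path $a_o\leadsto b_i$ in $H$ of the same weight. The spine is a path $p_0\to p_1\to\cdots\to p_n$ of zero-weight arcs with $s=p_0$, $t=p_n$, attached to the embedded copy of $G$ by an entry arc $p_j\to j_o$ and an exit arc $j_i\to p_{j+1}$ for each $j$; this realizes the intended correspondence in which the spine edge $(p_j,p_{j+1})$ is mapped to vertex $j$.

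The easy part is to check that the spine $P$ is the (unique) shortest $s$--$t$ path: it has weight $0$, all arc weights are nonnegative, and since every $G$-edge weight is strictly positive one cannot reach any in-copy, and hence re-enter the spine, at zero cost. I then analyze an arbitrary simple path $Q\neq P$ directly (equivalently, use $\text{2-SiSP}=\min_{e\in P}RP(e)$, where $RP(e)$ is the shortest simple $s$--$t$ path avoiding $e$). Such a $Q$ must leave the spine at some $p_a$ through $p_a\to a_o$, traverse the embedded $G$, and re-enter at some $p_{b+1}$ through $b_i\to p_{b+1}$; simplicity forces $b\ge a$, and the embedded portion is a cycle through $a$ when $b=a$ and an open walk $a\leadsto b$ when $b>a$.

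The main obstacle is exactly this dichotomy: a \emph{mismatched} excursion with $b\neq a$ carries only an open shortest-path weight $d_G(a,b)$, which may be far smaller than any cycle weight and would corrupt the reduction if left unchecked. To suppress it I would weight the entry/exit arcs by a potential. Set the entry arc $p_j\to j_o$ to weight $(n-j)D$ and the exit arc $j_i\to p_{j+1}$ to weight $jD$, where $D=nM+1$ exceeds the weight of every simple cycle of $G$ (at most $n$ edges of weight $\le M$). Then an excursion entering at $a$ and leaving at $b$ contributes exactly $(n-a)D+bD=nD+(b-a)D$ from its entry/exit arcs, so the whole path has weight $nD+(b-a)D+w_G(a\leadsto b)$, where $w_G$ is the embedded-walk weight. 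For $b>a$ this is at least $nD+D>nD+nM\ge nD+\mathrm{MWC}$, strictly larger than the best matched option; for $b=a$ it equals $nD$ plus the weight of a simple cycle through $a$. A second excursion adds a further $\ge nD$, so the optimum uses a single matched excursion, giving $\text{2-SiSP}=nD+\mathrm{MWC}(G)$, whence MWC is recovered by subtracting the known constant $nD$.

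Finally I would bound the cost. $H$ has $3n+1$ vertices and $4n+m=O(m+n)$ arcs ($n$ spine, $n$ entry, $n$ exit, $n$ internal, and $m$ embedded), with polynomially bounded weights, is built in $O(m+n)$ time, and requires a single oracle call to 2-SiSP on a graph of the same sparsity, yielding $\text{MWC}\leqsp_{m+n}\text{2-SiSP}$. I expect the delicate step to be precisely the potential-weight argument that simultaneously rules out mismatched excursions (open walks masquerading as cycles) and multi-excursion paths; the embedding of $G$ into the two-copy structure and the identification of the spine as the shortest path are routine by comparison.
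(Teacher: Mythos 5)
Your proposal is correct and follows essentially the same construction as the paper: split each vertex into in/out copies, attach a zero-weight spine $p_0\to\cdots\to p_n$, and use telescoping entry/exit weights (the paper uses $(n-j+1)Q$ and $jQ$ with $Q=nM$, matching your $(n-j)D$ and $jD$ up to an index shift) so that only a single matched excursion, i.e., a simple cycle through one vertex, can realize the second shortest path. Your explicit potential-weight analysis ruling out mismatched and multiple excursions is a welcome elaboration of the paper's terse ``it can be seen that,'' but it is not a different route.
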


\begin{proof}
To compute MWC in $G$, we first create the graph $G'$, where we replace every vertex $z$ by vertices $z_i$ and $z_o$, and we
place a directed edge of weight 0 from $z_i$ to $z_o$, and we replace each incoming edge to (outgoing edge from) $z$ with an incoming
edge to $z_i$ (outgoing edge from $z_o$).
We also add a path $P$ ($p_0 \rightarrow p_1 \rightsquigarrow p_{n-1} \rightarrow p_n$)  of length $n$ and weight $0$.

Let $Q = n \cdot$ \Wmax,
where \Wmax is the maximum weight of any edge in $G$.  
For each $1 \leq j \leq n$,  we add an
edge of weight $(n-j+1)Q$ from $p_{j-1}$ to $j_o$ and an edge of weight $jQ$ from $j_i$ to $p_j$ in $G'$  to form $G''$. 
Figure \ref{fig:hardA} depicts the full construction of $G''$ for $n=3$. This is an $(m+n)$ reduction, and it can be seen that the second simple shortest path from $p_0$ to $p_n$ in $G''$ corresponds to a minimum weight cycle in $G$. 

\begin{figure}
\scriptsize
    \centering
        \makebox[.3\textwidth]{
            \begin{tikzpicture}[scale=0.5, every node/.style={circle, draw, inner sep=0pt, minimum width=5pt}]
            \node (1o)[label=above:$1_o$] at (0,0)  {};
            \node (1i)[label=above:$1_i$] at (1,0) {};
            \node (2o)[label=above:$2_o$]	at (0,-1)		{};
            \node (2i)[label=above:$2_i$] at (1,-1)	{};
            \node (3o)[label=above:$3_o$] at (0,-2)	{};
            \node (3i)[label=above:$3_i$] at (1,-2) {};
            \node (p0)[label=above:$p_0$] at (-3,-4) {};
            \node (p1)[label=above:$p_1$] at (-1,-4) {};
            \node (p2)[label=above:$p_2$] at (1,-4) {};
            \node (p3)[label=above:$p_3$] at (3,-4) {};
            \node[ellipse, color=blue, fit=(1o)(1i)(2o)(2i)(3o)(3i), inner sep=3mm](G') [label=left:$G'$] {};
            \begin{scope}[->,every node/.style={ right}]
            \draw[->] (p0) -- (p1) node[midway, label=below:$0$] {};
            \draw[->] (p1) -- (p2) node[midway, label=below:$0$] {};
            \draw[->] (p2) -- (p3) node[midway, label=below:$0$] {};
            \draw[->,color=red] (p0) -- (1o) node[midway, label=left:$3W$] {};
            \draw[->,color=red] (p1) -- (2o) node[midway, label=left:$2W$] {};
            \draw[->,color=red] (p2) -- (3o) node[midway, label=left:$W$] {};
            \draw[->,color=black] (1i) to  [bend left=120, distance=15mm]  (p1) [dashed] node[label=above:$W$] at (2,-0.5) {};
            \draw[->,color=black] (2i) to  [bend left=90, distance=15mm]  (p2) [dashed] node[label=right:$2W$] at (2,-1.5) {};
            \draw[->,color=black] (3i) to  [bend left=60, distance=10mm]  (p3) [dashed] node[label=right:$3W$] at (3,-3) {};
            \end{scope}
            \end{tikzpicture}}
    \caption{$G''$ for $n = 3$ in the reduction: directed MWC $\leqsp_{m+n}$ 2-SiSP}
    \label{fig:hardA}
\end{figure}
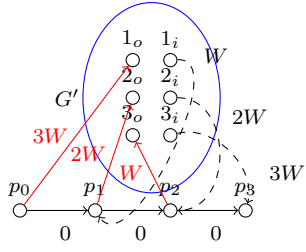
\vspace{-.2in}
\end{proof}

We now establish the equivalence between ANSC and the $s$-$t$ replacement paths problem under $(m+n)$-reductions by first showing an 
$(m+n)$-sparse reduction from $s$-$t$ replacement paths problem to ANSC. 
We then describe a sparse reduction from ANSC to the $s$-$t$ replacement paths problem, which is similar to the reduction from 
MWC to 2-SiSP as described in Lemma~\ref{lemma:MinWtCyc2SiSP}.

\begin{lemma}	\label{lemma:ReplacementANSC}
In weighted directed graphs, $s$-$t$ replacement paths  $\equivsp_{m+n}$ ANSC
\end{lemma}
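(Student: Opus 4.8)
The plan is to prove both directions, $s$-$t$ replacement paths $\leqsp_{m+n}$ ANSC and ANSC $\leqsp_{m+n}$ $s$-$t$ replacement paths, each with $O(m+n)$ construction time and a single oracle call, which together give the claimed $\equivsp_{m+n}$. Throughout I rely on the classical single-deviation structure of replacement paths: since $P=\langle v_0,\dots,v_l\rangle$ is a shortest $s$-$t$ path, the replacement path $R_j$ for an edge $e_j=(v_j,v_{j+1})$ follows $P$ up to some $v_a$ with $a\le j$, takes a detour using no edge of $P$, rejoins $P$ at some $v_b$ with $b\ge j+1$, and then follows $P$ to $t$. Writing $\alpha_a=d_G(s,v_a)$ and $\beta_b=d_G(v_b,t)$ (both obtainable as prefix/suffix sums along the given $P$, so no shortest-path call is needed), this yields $w(R_j)=\min_{a\le j<b}\big[\alpha_a+\delta^-(v_a,v_b)+\beta_b\big]$, where $\delta^-$ is shortest-path weight in $G^-=G\setminus E(P)$ (all $P$-edges deleted, all vertices kept); a one-line exchange argument shows that letting the detour pass through interior $P$-vertices via non-$P$ edges never lowers this minimum.

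For the first reduction I build a single ANSC instance $H$ whose shortest cycle through a marker $m_j$ equals $w(R_j)$ exactly. I split every vertex $v$ into $v^{i},v^{o}$ with a $0$-weight edge $v^{i}\to v^{o}$ and route each edge of $G^-$ from the out-copy to the in-copy, so that a path $v_a^{o}\rightsquigarrow v_b^{i}$ in $H$ is precisely a detour from $v_a$ to $v_b$ avoiding all of $E(P)$ (in particular $e_j$). I then add a \emph{prefix ladder} $P_l\to P_{l-1}\to\cdots\to P_0$ of $0$-weight edges with entry edges $P_a\to v_a^{o}$ of weight $\alpha_a$, and a \emph{suffix ladder} $S_l\to S_{l-1}\to\cdots\to S_0$ of $0$-weight edges with exit edges $v_b^{i}\to S_b$ of weight $\beta_b$. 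Finally each marker $m_j$ receives a single outgoing edge $m_j\to P_j$ and a single incoming edge $S_{j+1}\to m_j$, both of weight $0$. Because $m_j$ has exactly this one out- and one in-edge, every cycle through $m_j$ must descend the prefix ladder from $P_j$ to some $P_a$ (forcing $a\le j$), pay $\alpha_a$ to cross into $G^-$, detour to some $v_b^{i}$, pay $\beta_b$ to land on $S_b$, and descend the suffix ladder to $S_{j+1}$ (forcing $b\ge j+1$); the only bridge out of $G^-$ is to the suffix ladder, so the window is pinned. Its weight is thus $\min_{a\le j<b}[\alpha_a+\delta^-(v_a,v_b)+\beta_b]=w(R_j)$ with no additive offset, and with nonnegative weights this shortest cycle is automatically simple. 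Reading the ANSC value at each $m_j$ recovers all replacement paths, and $H$ has $O(n)$ extra vertices and $O(m+n)$ edges.

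For the reverse reduction I reuse, essentially verbatim, the graph $G''$ from the reduction MWC $\leqsp_{m+n}$ 2-SiSP of Lemma~\ref{lemma:MinWtCyc2SiSP}: split each vertex of $G$, lay down the path $p_0\to\cdots\to p_n$, and add the balanced connectors $p_{j-1}\to j_o$ of weight $(n-j+1)Q$ and $j_i\to p_j$ of weight $jQ$ with $Q=n\cdot M$. There the \emph{minimum} replacement path over the edges of $p_0\cdots p_n$ was shown to equal $(n+1)Q$ plus the minimum cycle weight; the identical analysis read edge by edge shows that the replacement path for $(p_{j-1},p_j)$ equals $(n+1)Q$ plus the weight of the shortest cycle through vertex $j$ in $G$. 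Since $Q=nM$ dominates any single-path weight, an entry/exit index mismatch costs at least one extra $Q$; hence a replacement-path value below $(n+2)Q$ certifies a cycle through $j$ of weight (value $-(n+1)Q$), while a value $\ge (n+2)Q$ certifies that no cycle passes through $j$. Thus one call to $s$-$t$ Replacement Paths on $G''$ plus $O(n)$ thresholding yields the full ANSC output, completing the equivalence.

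The main obstacle is the first reduction. The naive attempt collapses because $\alpha_b+\beta_b=w(P)$ for every $b$ on a shortest path, so any gadget that lets the detour re-use $P$ makes \emph{every} marker report $w(P)$. The crux is therefore the twin-ladder construction, which must simultaneously (i) enforce the deviation/rejoin window $a\le j<b$ per marker using only $O(1)$ edges each, (ii) confine the detour to $G^-$ so that $e_j$ (and all of $P$) is avoided automatically, and (iii) preserve sparsity. The step requiring the most care is verifying that no spurious cheaper cycle through $m_j$ exists, i.e. that the single out-/in-edge at $m_j$ together with the two directed ladders truly pin down the window and force exactly one passage through $G^-$.
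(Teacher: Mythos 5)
Your proof is correct and follows essentially the same route as the paper: the forward direction is the paper's $z$-chain construction (your two ladders plus markers are an unfolded version of the paper's single descending chain $z_{l-1}\to\cdots\to z_0$, whose vertex $z_j$ simultaneously plays the role of your marker $m_j$, entry edge $z_j\to v_j$ of weight $d_G(s,v_j)$, and exit edge $v_{j+1}\to z_j$ of weight $d_G(v_{j+1},t)$), and the reverse direction reuses the very same graph $G''$ from the MWC-to-2-SiSP reduction that the paper reuses. The only nit is your threshold in the reverse direction: a simple cycle can weigh up to $nM=Q$, so a matched value $(n+1)Q+\mathrm{cyc}(j)$ can reach $(n+2)Q$ and your rule would misreport it as ``no cycle''; take $Q>nM$ (or invoke the strictly positive edge weights assumed in the paper) to separate the two cases.
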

\vspace{-.2in}
\begin{proof}
We are given an input graph $G = (V,E)$, a source vertex $s$ and a sink vertex $t$ and we wish to compute the replacement paths for all
the edges lying on the shortest path from $s$ to $t$.
Let $P$($s=v_0\rightarrow v_1\rightsquigarrow v_{l-1}\rightarrow v_l=t$) be the shortest path from $s$ to $t$ in $G$.

$(i)$ \textit{Constructing $G'$:}
We first create the graph $G'$, 
as described in the proof of Lemma~\ref{lemma:2SiSPRadius}.
Figure~\ref{fig:hardB} depicts the full construction of $G'$ for $l=3$.

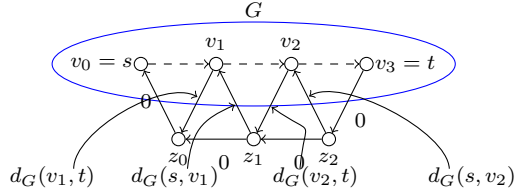
\begin{figure}
\scriptsize
    \centering
        \makebox[.3\textwidth]{
            \begin{tikzpicture}[scale=0.5, every node/.style={circle, draw, inner sep=0pt, minimum width=5pt}]
            \node (v0)[label=left:${v_0=s}$] at (0,0)  {};
            \node (v1)[label=above:$v_1$] at (2,0) {};
            \node (v2)[label=above:$v_2$]	at (4,0)	{};
            \node (v3)[label=right:${v_3=t}$] at (6,0)	{};
            \node[ellipse, color=blue, fit=(v0)(v1)(v2)(v3), inner sep=3mm](G) [label=above:$G$] {};
            \node (z0)[label=below:$z_0$] at (1,-2)  {};
            \node (z1)[label=below:$z_1$] at (3,-2)  {};
            \node (z2)[label=below:$z_2$] at (5,-2)  {};
            \draw[->] (v0) -- (v1) [dashed] {};
            \draw[->] (v1) -- (v2) [dashed] {};
            \draw[->] (v2) -- (v3) [dashed] {};
            \begin{scope}[->,every node/.style={ right}]
            \draw[->] (z0) -- (v0) node[midway, label=left:$0$] {};          
            \draw[->] (v1) -- (z0) {};
            \node (labelv1z0)[label=left:${d_G(v_1,t)}$] at (-1,-3) {};
            \node (v1z0start) at (-2,-3) {};
            \node (v1z0end) at (1.5,-1) {};
            \draw[->] (v1z0start) to  [bend left=60, distance=10mm]  (v1z0end) {};
            \draw[->] (z1) -- (z0) node[midway, label=below:$0$] {}; 
            \draw[->] (z1) -- (v1) {}; 
            \node (labelz1v1)[label=right:${d_G(s,v_1)}$] at (-0.9,-3) {};
            \node (z1v1start) at (1,-3) {};
            \node (z1v1end) at (2.5,-1) {};
            \draw[->] (z1v1start) to  [bend left=30, distance=10mm]  (z1v1end) {};
            \draw[->] (v2) -- (z1) {};
            \node (labelv2z1)[label=left:${d_G(v_2,t)}$] at (6,-3) {};
            \node (v2z1start) at (4,-3) {};
            \node (v2z1end) at (3.05,-1) {};
            \draw[->] (v2z1start) to  [bend right=30, distance=15mm]  (v2z1end) {};
            \draw[->] (z2) -- (z1) node[midway, label=below:$0$] {};
            \draw[->] (z2) -- (v2) {}; 
            \node (labelz2v2)[label=right:${d_G(s,v_2)}$] at (7,-3) {};
            \node (z2v2start) at (8,-3) {};
            \node (z2v2end) at (4.05,-1) {};
            \draw[->] (z2v2start) to  [bend right=60, distance=10mm]  (z2v2end) {};
            \draw[->] (v3) -- (z2) {}; 
            \node (v3z2label)[label=right:$0$] at (5.05,-1.5) {};
            \end{scope}
            \end{tikzpicture}}
    \caption{$G'$ for $l = 3$ in the reduction: directed $s$-$t$ Replacement Paths  $\leqsp_{m+n}$ ANSC}
    \label{fig:hardB}
\end{figure}

$(ii)$ \textit{We now show that for each $0 \leq i \leq l-1$, the replacement path from $s$ to $t$ for the edge $(v_i, v_{i+1})$ lying on 
$P$ has weight equal to the shortest cycle passing through $z_i$.}
 If not, assume that for some $i$ ($0\leq i\leq l-1$), the weight of the replacement path from $s$ to $t$ for the edge 
$(v_i,v_{i+1})$ is not equal to the weight of the shortest cycle passing through $z_i$.

Let $P_i$ ($s\rightsquigarrow v_j\rightsquigarrow v_k\rightsquigarrow t$) (where $v_j$ is the vertex where $P_i$ separates from $P$
and $v_k$ is the vertex where it joins $P$) be the replacement path from $s$ to $t$ for the edge $(v_i,v_{i+1})$ and let $C_i$ 
($z_i\rightsquigarrow z_p\rightarrow v_p\rightsquigarrow v_q\rightarrow z_q\rightsquigarrow z_i$) be the shortest cycle passing through
$z_i$ in $G'$.

If $wt(P_i) < wt(C_i)$, then the cycle $C'_i$ ($z_i\rightarrow z_{i-1}\rightsquigarrow z_j\rightarrow v_{j} \circ P_i(v_j,v_k) \circ v_k
\rightarrow z_k\rightarrow z_{k-1}\rightsquigarrow z_i$) (where $P_i(v_j,v_k)$ is the subpath of $P_i$ from $v_j$ to $v_k$) passing 
through $z_i$ has weight equal to $wt(P_i) < wt(C_i)$, resulting in a contradiction as $C_i$ is the shortest cycle passing through $z_i$ in
$G'$.

Now if $wt(C_i) < wt(P_i)$, then the path $P'_i$ ($s\rightsquigarrow v_p \circ C_i(v_p,v_q) \circ v_q\rightsquigarrow v_l$) where 
$C_i(v_p,v_q)$ is the subpath of $C_i$ from $v_p$ to $v_q$, is also a path from $s$ to $t$ avoiding the edge $(v_i,v_{i+1})$, and has 
weight equal to $wt(C_i) < wt(P_i)$, resulting in a contradiction as $P_i$ is the shortest replacement path from $s$ to $t$ for the edge 
$(v_i,v_{i+1})$.

We then compute ANSC in $G'$. 
And by (ii), the shortest cycles  for each of the vertices $z_0, z_1, \ldots, z_{l-1}$ gives us the replacement paths from $s$ to
$t$.
This leads to an $(m+n)$ sparse reduction from $s$-$t$ replacement paths problem to ANSC.

Now for the other direction, we are given an input graph $G = (V,E)$ and we wish to compute the ANSC in $G$.
We first create the graph $G^{''}$, as described in Lemma~\ref{lemma:MinWtCyc2SiSP}.
We can see that the shortest path from $p_0$ to $p_n$ avoiding edge $(p_{j-1},p_j)$ corresponds to a shortest cycle passing through $j$
in $G$.
This gives us an $(m+n)$-sparse reduction from ANSC to $s$-$t$ replacement paths problem.
\end{proof}

\section{Betweenness Centrality: Reductions}	\label{sec:centrality}

In this section, we consider sparse reductions for Betweenness Centrality and related problems.
In its full generality, the Betweenness Centrality of a vertex $v$ is the sum, across all pairs of vertices $s,t$,
 of the fraction of shortest paths from $s$ to $t$ 
 that contain $v$ as an internal vertex.
 This problem has a $\tilde{O}(mn)$ time algorithm
 due to Brandes~\cite{Brandes01}. 
   Since there
 can be an exponential (in $n$) number of shortest paths from one vertex to another,
 this general problem can deal with very large numbers. In~\cite{AGW15}, a simplified variant
 was considered, where
 it is assumed that there is a unique shortest path for each pair of vertices, and 
  the Betweenness Centrality of vertex $v$, $BC(v)$, is defined as
the number of vertex pairs $s, t$  such that  $v$ is an internal vertex
 on the unique shortest path from $s$ to
$t$. We will also restrict our attention to this variant here.

A  number of sparse reductions relating to the following problems were given in~\cite{AGW15}.

\vspace{-.1in}
\begin{itemize}
\item {\it Betweenness Centrality (BC)} of a vertex $v$, $BC(v)$.

\item {\it Positive Betweenness Centrality (Pos BC) of $v$:} determine whether $BC(v) > 0$.

\item {\it All Nodes Betweenness Centrality (ANBC):} compute, for each $v$, the value of $BC(v)$.

\item {\it Positive All Nodes Betweenness Centrality (Pos ANBC):} determine, for each $v$, whether $BC(v) > 0$.

\item {\it Reach Centrality (RC)} of $v$:
compute $\max_{s,t \in V: \\ d_G (s,v) + d_G (v,t) = d_G (s,t)} {\min (d_G (s,v), d_G (v,t))}$.
\end{itemize}

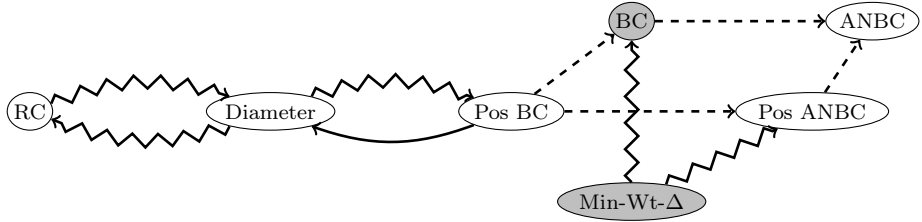
\begin{figure}
\scriptsize
	\centering
            	\begin{tikzpicture}[scale=0.4, every node/.style={circle, draw, inner sep=0pt, minimum width=5pt, align=center}]
            	\node[ellipse, minimum size=0.5cm] (RC) at (0,0) {RC};
            	\node[ellipse, minimum size=0.5cm] (Diameter) at (8,0) {Diameter};
            	\node[ellipse, minimum size=0.5cm] (BCPos) at (16,0) { Pos BC};
            	\node[ellipse, minimum size=0.5cm] (ANBCPos) at (26,0) {Pos ANBC};
            	\node[ellipse, minimum size=0.5cm,fill=lightgray] (BC) at (20,3) {BC};
            	\node[ellipse, minimum size=0.5cm] (ANBC) at (28,3) {ANBC};            	
            	\node[ellipse, minimum size=0.5cm,fill=lightgray] (MWT) at (20,-3) {Min-Wt-$\Delta$};
            	\draw[->, line width=1pt, dashed] (BCPos) -- (ANBCPos) {};
            	\draw[->, line width=1pt, dashed] (BCPos) -- (BC) {};            	            	
            	\draw[->, line width=1pt, dashed] (BC) -- (ANBC) {};            	            	
            	\draw[->, line width=1pt, dashed] (ANBCPos) -- (ANBC) {};        
            	\draw[->, line width=1pt, decorate, decoration={zigzag}] (MWT) -- (BC) {};            	            	    	            	
             	\draw[->, line width=1pt, decorate, decoration={zigzag}] (MWT) -- (ANBCPos) {};            	            	    	            	
             	\draw[->,  line width=1pt, decorate, decoration={zigzag}] (RC) to [bend left=20] (Diameter) {};
            	\draw[->,  line width=1pt, decorate, decoration={zigzag}] (Diameter) to [bend left=20] (RC) {};         
            	\draw[->,  line width=1pt, decorate, decoration={zigzag}] (Diameter) to [bend left=20] (BCPos) {};            
            	\draw[->,  line width=1pt] (BCPos) to [bend left=20] (Diameter) {};          	
            	\end{tikzpicture}
        \caption{Known sparse reductions for centrality problems, all from~\cite{AGW15}. The regular edges represent sparse $O(m+n)$ reductions, the squiggly edges represent tilde-sparse $O(m+n)$ reductions, and the dashed edges represent reductions that are trivial. BC and Min-Wt-$\Delta$ (shaded with gray) are known to be sub-cubic equivalent to APSP~\cite{AGW15,WW10}.}
    \label{fig:priorReductionsBC}
\end{figure}

Figure~\ref{fig:priorReductionsBC} gives an overview of the previous fine-grained results given in~\cite{AGW15} for Centrality problems. 
In this figure, BC is the only centrality problem that is known to be sub-cubic equivalent to APSP, and hence
is shaded in the figure (along with Min-Wt-$\Delta$).  None of these sparse reductions 
in~\cite{AGW15}
imply MWCC hardness for any of the centrality problems since Diameter is not MWCC-hard (or
even sub-cubic equivalent to APSP), and Min-Wt-$\Delta$ has an $\tilde{O}(m^{3/2})$ time algorithm,
and so will falsify MWCC if it is MWCC-hard.
On other hand, Diameter is known to be both SETH-hard~\cite{RW13} and $k$-DSH Hard (Section~\ref{sec:seth}) and
hence all these problems in Figure~\ref{fig:priorReductionsBC} (except Min-Wt-$\Delta$) are also SETH and $k$-DSH hard.

In this section, we give a sparse reduction from 2-SiSP to BC, establishing MWCC-hardness for BC.
We also give a tilde-sparse reduction from ANSC to Pos ANBC, and thus we have MWCC-hardness for both Pos ANBC and for ANBC,
though neither problem is known to be in the sub-cubic equivalence class.  (Both
 have $\tilde{O}(mn)$ time algorithms, and have APSP-hardness under sub-cubic reductions.)

Figure~\ref{fig:reductionsBC} gives 
an updated
partial order 
of our sparse reductions
for weighted directed graphs;
this figure augments Figure~\ref{fig:reductions} by including the sparse reductions for BC problems
given in this section.

 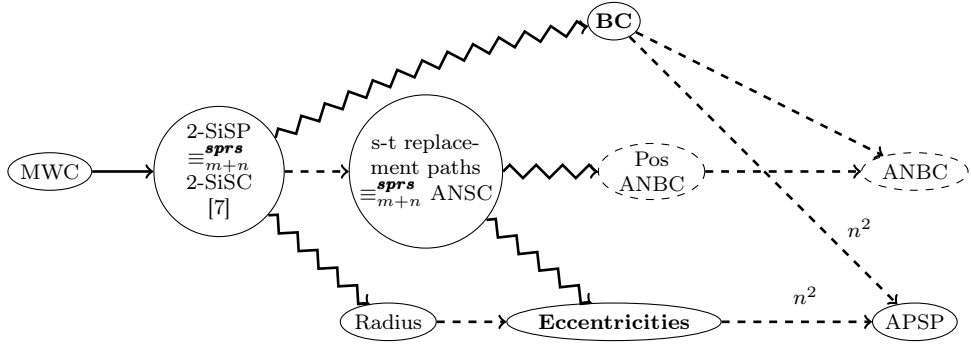
\begin{figure}
\scriptsize
	\centering
\Xomit{
    	\begin{subfigure}{.6\textwidth}
            	\begin{tikzpicture}[scale=0.4, every node/.style={circle, draw, inner sep=0pt, minimum width=5pt, align=center}]
            	\node[ellipse, minimum size=0.5cm] (MWC) at (0,0) {MWC};
            	\node[ellipse, minimum size=0.5cm] (ANSC) at (22,0) {ANSC};
            	\node[ellipse, minimum size=0.5cm] (Radius) at (0,-3) {Radius};
            	\node[ellipse, minimum size=0.5cm] (Eccentricities) at (11,-3) {Eccentricities};
            	\node[ellipse, minimum size=0.5cm] (APSP) at (22,-3) {APSP};
            	\node[ellipse, minimum size=0.8cm] (SETH) at (0,-6) {SETH};
            	\draw[->, line width=1pt, dashed] (MWC) -- (ANSC) {};
            	\begin{scope}[->,every node/.style={ right}]
            	\draw[->,  line width=1pt,decorate, decoration={zigzag}] (MWC) -- (APSP) {};
            	\node (l) at (15,-1.5) {$\tilde{O}(n^2)$};
            	\draw[->, line width=1pt, dashed] (Eccentricities) -- (APSP) node[midway, label=below:$n^2$] {};
            	\end{scope}
            	\draw[->, line width=1pt, dashed] (Radius) -- (Eccentricities) {};
            	\draw[->] (SETH) -- (Eccentricities) {};
            	\end{tikzpicture}
            	\caption{Sparse Reductions for Weighted Undirected Graphs. }
            	\label{fig:reductionsBCUndirected}
		\end{subfigure}    
		~
		\begin{subfigure}{.7\textwidth}
		}
				\begin{tikzpicture}[scale=0.5, every node/.style={circle, draw, inner sep=0pt, minimum width=5pt, align=center}]
            	\node[ellipse, minimum size=0.5cm] (MWC) at (-1,0) {MWC};
            	\node[circle] (2SiSP2SiSC) [text width = 1.2cm] at (3.5,0) {2-SiSP $\equivsp_{m+n}$ 2-SiSC ~\cite{AR16}};
            	\node[circle] (ReplacementANSC) [text width =1.8cm] at (9,0) {s-t replacement paths $\equivsp_{m+n}$ ANSC};
            	\node[ellipse, minimum size=0.5cm,dashed] (ANPBC) [text width = 1cm] at (15,0) {Pos ANBC};
            	\node[ellipse, minimum size=0.5cm,dashed] (ANBC) [text width =1cm] at (22,0) {ANBC};
            	\node[ellipse, minimum size=0.5cm] (Radius) at (8,-4) {Radius};
               \node[ellipse, minimum size=0.5cm] (Eccentricities) at (14,-4) {{\bf Eccentricities}};      
            	\node[ellipse, minimum size=0.5cm] (APSP) at (22,-4) {APSP};
            	\node[ellipse, minimum size=0.5cm] (BC) at (14,4) {{\bf BC}};
            	\draw[->, line width=1pt] (MWC) -- (2SiSP2SiSC) {};
            	\draw[->, line width=1pt, dashed] (2SiSP2SiSC) -- (ReplacementANSC) {};
            	\draw[->, line width=1pt, decorate, decoration={zigzag}] (ReplacementANSC) -- (ANPBC)  {};
            	\draw[->, line width=1pt, dashed] (ANPBC) -- (ANBC) {};
            	\draw[->, line width=1pt, decorate, decoration={zigzag}] (2SiSP2SiSC) -- (Radius) {};
            	\draw[->, line width=1pt, dashed] (Radius) -- (Eccentricities) {};
            	\draw[->, line width=1pt, decorate, decoration={zigzag}] (ReplacementANSC) -- (Eccentricities) {};
            	\begin{scope}[->,every node/.style={ right}]
            	\draw[->, line width=1pt, dashed] (Eccentricities) -- (APSP) node[midway, label=above:$n^2$] {};
            	\node(l) at (20,-1.5) {$n^2$} ;
            	\draw[->, line width=1pt, dashed] (BC) -- (APSP) {};
            	\end{scope}
            	\draw[->, line width=1pt, decorate, decoration={zigzag}] (2SiSP2SiSC) to [bend left=10] (BC)  {};
            	\draw[->, line width=1pt, dashed] (BC) --  (ANBC) {};
            	\end{tikzpicture}
       \caption{Sparse reductions for weighted directed graphs. The regular edges represent sparse $O(m+n)$ reductions, the squiggly edges represent tilde-sparse $O(m+n)$ reductions, and the dashed edges represent reductions that are trivial. All problems except APSP are MWCC-hard. Eccentricities, BC, ANBC and Pos ANBC are also SETH/ $k$-DSH hard. ANBC and Pos ANBC (problems inside the dashed circles) are both not known to be subcubic equivalent to APSP.}
    \label{fig:reductionsBC}
\end{figure}

\subparagraph{I. \underline{2-SiSP to BC}:}
Our sparse reduction from 2-SiSP to BC is similar to the reduction from 2-SiSP to Radius described in Section~\ref{sec:dir}.
In our reduction, we first map every edge $(v_j,v_{j+1})$ to new vertices $y_{j_o}$ and $y_{j_i}$ such that the shortest path from 
$y_{j_o}$ to $y_{j_i}$  corresponds to the replacement path from $s$ to $t$ for the edge $(v_j,v_{j+1})$.
We then add an additional vertex $A$ and connect it to vertices $y_{j_o}$'s and $y_{j_i}$'s.
We also ensure that the only shortest paths passing through $A$ are from $y_{j_o}$ to $y_{j_i}$.
We then do binary search on the edge weights for the edges going from $A$ to $y_{j_i}$'s with oracle calls to the Betweenness Centrality problem,
to compute the  weight of the shortest replacement path from $s$ to $t$, which by definition of 2-SiSP, is the second simple shortest path
from $s$ to $t$.

\begin{lemma}	\label{lemma:2SiSPBetweennessCentrality}
In weighted directed graphs, $2$-SiSP $\lesssimsp_{m+n}$ Betweenness Centrality
\end{lemma}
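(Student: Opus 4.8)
The plan is to reuse the $z$--$y$ gadget from the reduction of Lemma~\ref{lemma:2SiSPRadius} so that, for each edge $(v_j,v_{j+1})$ on the known shortest path $P=\langle s=v_0,\dots,v_l=t\rangle$, the constructed graph contains a source vertex $y_{j_o}$ and a sink vertex $y_{j_i}$ with the property that the shortest $y_{j_o}\rightsquigarrow y_{j_i}$ path avoiding the hub (introduced below) has weight exactly $\tfrac{11}{9}M'+R_j$, where $R_j$ denotes the weight of the replacement path from $s$ to $t$ for the edge $(v_j,v_{j+1})$ and $M'=9nM$. By definition of $2$-SiSP, the quantity we must output is $\min_{0\le j\le l-1} R_j$. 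As in Lemma~\ref{lemma:2SiSPRadius} the cross pairs $(y_{j_o},y_{k_i})$ with $k\ne j$ are joined, through the bit-encoding vertices $C_{r,s}$, by a path of the much smaller weight $\tfrac{2}{3}M'=6nM$, while each $y_{j_i}$ is a sink and each $y_{j_o}$ is a source.

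To this skeleton I would add a single hub vertex $A$, together with a weight-$0$ edge $y_{j_o}\to A$ for every $j$ and a weight-$(\tfrac{11}{9}M'+\theta)$ edge $A\to y_{j_i}$ for every $j$, where $\theta$ is a tunable parameter. Since the only in-edges of $A$ come from the (source) vertices $y_{j_o}$ and the only out-edges of $A$ go to the (sink) vertices $y_{j_i}$, the vertex $A$ can be internal on a shortest path only for a pair of the form $(y_{j_o},y_{k_i})$. For such a pair the $A$-route has weight $\tfrac{11}{9}M'+\theta\ge 11nM$; when $k\ne j$ it is dominated by the direct cross path of weight $6nM$, so $A$ is never internal on a cross pair. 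For $k=j$ the $A$-route competes with the direct route of weight $\tfrac{11}{9}M'+R_j$, and is strictly shorter (hence the unique shortest path, being the only two-edge route through $A$) exactly when $\theta<R_j$. Consequently $BC(A)=|\{j:\theta<R_j\}|$, a nonincreasing step function of $\theta$ that equals $l$ iff $\theta<\min_j R_j$.

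The reduction then binary searches for $\min_j R_j$: choosing half-integer values $\theta=t+\tfrac12$ with $t\in\{0,1,\dots,nM\}$ (the half-integer offset avoids the degenerate tie $\theta=R_j$), it recomputes the $A\to y_{j_i}$ edge weights and queries the oracle for $BC(A)$; the least $\theta$ at which $BC(A)$ drops below $l$ pins down $\min_j R_j$, i.e.\ the weight of the second simple shortest path. Building the reduced graph costs $O(m+n\log n)$, the hub adds only $O(n)$ edges, and the search makes $O(\log(nM))=O(\log(n\rho))$ oracle calls on graphs with $O(n)$ vertices and $O(m+n\log n)$ edges, so this is a tilde-sparse $O(m+n)$ reduction with the permitted $\log\rho$ factor.

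The main obstacle is the precondition of this $BC$ variant, namely that \emph{every} pair of vertices in the queried graph has a \emph{unique} shortest path; without it the count $BC(A)$ is not even well defined. I would enforce uniqueness by a standard lexicographic tie-breaking perturbation: scale all weights by a large factor $N$ and add to each edge a distinct sub-$N$ perturbation derived from the $\lceil\log n\rceil$-bit labels of its endpoints, so that the perturbed shortest path between any pair is unique and is also a shortest path under the original weights; the true integer value $\min_j R_j$ is then recovered by rounding. The delicate point to verify is that these perturbations neither reorder the critical comparison of $\theta$ against $R_j$ at the $(y_{j_o},y_{j_i})$ pairs nor let a perturbed cross route or a perturbed hub-free route slip below the intended signal; keeping each perturbation smaller than the minimum genuine weight gap makes this routine, and completes the proof that $2$-SiSP $\lesssimsp_{m+n}$ Betweenness Centrality.
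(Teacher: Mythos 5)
Your construction is essentially the paper's: the paper likewise builds the $z$--$y$ gadget from the Radius reduction (dropping $B$), attaches a single hub $A$ with weight-$0$ edges from the $y_{j_o}$'s and weight-$(M'+q)$ edges to the $y_{j_i}$'s, observes that $BC(A)=l$ iff $q$ is below every replacement-path weight, and binary searches on $q$ with $O(\log nM)$ oracle calls. Your extra care about the unique-shortest-path precondition (half-integer thresholds and lexicographic perturbation) is a reasonable tightening of a point the paper glosses over, but it does not change the argument.
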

\vspace{-.25in}
\begin{proof}
We are given an input graph $G = (V,E)$, a source vertex $s$ and a sink/target vertex $t$ and we wish to compute the second simple 
shortest path from $s$ to $t$.
Let $P$ ($s=v_0\rightarrow v_1\rightsquigarrow v_{l-1}\rightarrow v_l=t$) be the shortest path from $s$ to $t$ in $G$.

\textit{(i) Constructing $G''$:}
We first construct the graph $G''$, as described in the proof of Lemma~\ref{lemma:2SiSPRadius}, without the vertices $A$ and $B$.
For each $0\leq j\leq l-1$, we change the weight of the edge from $z_{j_i}$ to $y_{j_i}$ to $M'$ (where $M' = 9nM$ and $M$ is the 
largest edge weight in $G$).

We add an additional vertex $A$ and for each $0\leq j\leq l-1$, we add an incoming (outgoing) edge from (to) $y_{j_o}$ ($y_{j_i}$). 
We assign the weight of the edges from $y_{j_o}$'s to $A$ as $0$ and from $A$ to $y_{j_i}$'s as $M' + q$ (for some $q$ in the range 
$0$ to $nM$).

Figure~\ref{fig:2SiSPBetweennessCentrality} depicts the full construction of $G''$ for $l = 3$.

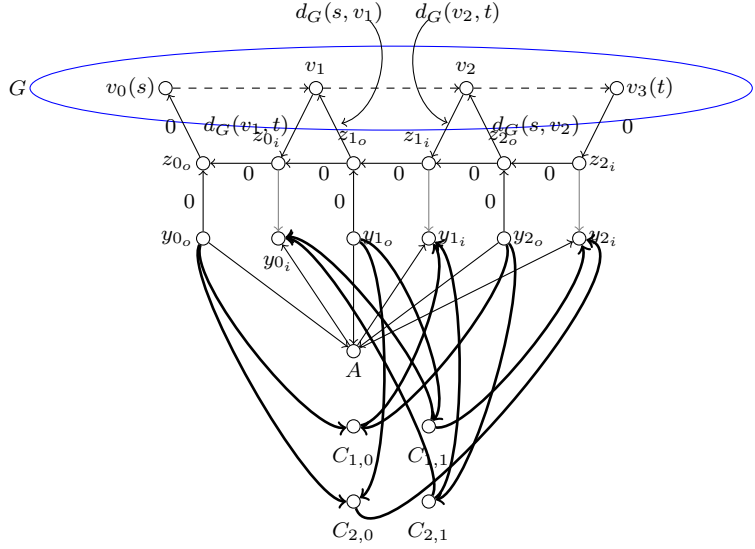
\begin{figure}
\scriptsize
    \centering
        \makebox[.3\textwidth]{
            \begin{tikzpicture}[scale=0.5, every node/.style={circle, draw, inner sep=0pt, minimum width=5pt}]
            \node (v0)[label=left:$v_0(s)$] at (0,0)  {};
            \node (v1)[label=above:$v_1$] at (4,0) {};
            \node (v2)[label=above:$v_2$]	at (8,0)	{};
            \node (v3)[label=right:$v_3(t)$] at (12,0)	{};
            \node[ellipse, color=blue, fit=(v0)(v1)(v2)(v3), inner sep=3mm](G) [label=left:$G$] {};
            \node (z0o)[label=left:$z_{0_o}$] at (1,-2)  {};
            \node (z0i) at (3,-2)  {};
            \node (z1o)[label=above:$z_{1_o}$] at (5,-2)  {};
            \node (z1i) at (7,-2)  {};
            \node (z2o)[label=above:$z_{2_o}$] at (9,-2)  {};
            \node (z2i)[label=right:$z_{2_i}$] at (11,-2)  {};
            \draw[->] (v0) -- (v1) [dashed] {};
            \draw[->] (v1) -- (v2) [dashed] {};
            \draw[->] (v2) -- (v3) [dashed] {};
            \begin{scope}[->,every node/.style={ right}]
            \node (labelz0i)[label=above:$z_{0_i}$] at (2.5,-2) {};
            \node (labelz1i)[label=above:$z_{1_i}$] at (6.5,-2) {};
            \draw[->] (z0o) -- (v0) node[midway, label=left:$0$] {};        
            \draw[->] (z0i) -- (z0o) {};
            \node (labelz0iz0o)[label=below:$0$] at (2,-1.65) {};
            \draw[->] (v1) -- (z0i) node[midway, label=left:${d_G(v_1,t)}$] {};
            \draw[->] (z1o) -- (z0i) {}; 
            \node (labelz1oz0i)[label=below:$0$] at (4,-1.65) {};
            \draw[->] (z1i) -- (z1o) {};
            \node (labelz1iz1o)[label=below:$0$] at (6,-1.65) {};
            \draw[->] (z1o) -- (v1) {};
            \node (labelz1ov1)[label=left:${d_G(s,v_1)}$] at (6,2) {};
            \node (z1ov1start) at (5,2) {};
            \node (z1ov1end) at (4.25,-1) {};
            \draw[->] (z1ov1start) to  [bend left=60, distance=10mm]  (z1ov1end) {};
            \draw[->] (v2) -- (z1i) {};
            \node (labelv2z1i)[label=right:${d_G(v_2,t)}$] at (6,2) {};
            \node (v2z1istart) at (7,2) {};
            \node (v2z1iend) at (7.5,-1) {};
            \draw[->] (v2z1istart) to  [bend right=60, distance=10mm]  (v2z1iend) {};
            \draw[->] (z2o) -- (z1i) {};
            \node (labelz2oz1i)[label=below:$0$] at (8,-1.65) {};
            \draw[->] (z2o) -- (v2) {}; 
            \node (labelz2ov2)[label=right:${d_G(s,v_2)}$] at (8.05,-1) {};
            \draw[->] (v3) -- (z2i) node[midway, label=right:$0$] {}; 
            \draw[->] (z2i) -- (z2o) {};
            \node (labelz2iz2o)[label=below:$0$] at (10,-1.65) {};
            \end{scope}
            \node (y0o)[label=left:$y_{0_o}$] at (1,-4)  {};
            \node (y0i)[label=below:$y_{0_i}$] at (3,-4)  {};
            \node (y1o)[label=right:$y_{1_o}$] at (5,-4)  {};
            \node (y1i)[label=right:$y_{1_i}$] at (7,-4)  {};
            \node (y2o)[label=right:$y_{2_o}$] at (9,-4)  {};
            \node (y2i)[label=right:$y_{2_i}$] at (11,-4)  {};
            \begin{scope}[->,every node/.style={ right}]
            \draw[->] (y0o) -- (z0o) node[midway, label=left:$0$] {};
            \draw[->] (y1o) -- (z1o) node[midway, label=left:$0$] {};
            \draw[->] (y2o) -- (z2o) node[midway, label=left:$0$] {};
            \draw[->, color=gray] (z0i) -- (y0i) {};
            \draw[->, color=gray] (z1i) -- (y1i) {};
            \draw[->, color=gray] (z2i) -- (y2i) {};
            \end{scope}
            \node (A)[label=below:$A$] at (5,-7) {};
            \begin{scope}[->,every node/.style={ right}]
            \draw[->] (y0o) -- (A) {};
			 \draw[->] (y1o) -- (A) {};  
   			 \draw[->] (A) -- (y0i) {};
			 \draw[->] (A) -- (y1i) {};
			 \draw[->] (y2o) -- (A) {};     
			 \draw[->] (A) -- (y2i) {};
            \end{scope}
            \node (C10)[label=below:$C_{1,0}$] at (5,-9) {};
            \node (C11)[label=below:$C_{1,1}$] at (7,-9) {};
            \node (C20)[label=below:$C_{2,0}$] at (5,-11) {};
            \node (C21)[label=below:$C_{2,1}$] at (7,-11) {};
            \begin{scope}[->,every node/.style={ right}]
            \draw[->, line width=1pt] (y0o) to  [bend right=60, distance=10mm] (C10) {};
            \draw[->, line width=1pt] (y0o) to  [bend right=60, distance=10mm] (C20) {};
            \draw[->, line width=1pt] (C11) to  [bend right=60, distance=10mm] (y0i) {};
            \draw[->, line width=1pt] (C21) to  [bend right=60, distance=10mm] (y0i) {};
            \draw[->, line width=1pt] (y1o) to  [bend left=60, distance=10mm] (C11) {};
            \draw[->, line width=1pt] (y1o) to  [bend left=60, distance=10mm] (C20) {};
            \draw[->, line width=1pt] (C10) to  [bend right=60, distance=10mm] (y1i) {};
            \draw[->, line width=1pt] (C21) to  [bend right=60, distance=10mm] (y1i) {};
            \draw[->, line width=1pt] (y2o) to  [bend left=60, distance=10mm] (C10) {};
            \draw[->, line width=1pt] (y2o) to  [bend left=60, distance=10mm] (C21) {};
            \draw[->, line width=1pt] (C11) to  [bend right=60, distance=10mm] (y2i) {};
            \draw[->, line width=1pt] (C20) to  [bend right=120, distance=20mm] (y2i) {};            
            \end{scope}
            \end{tikzpicture}}
    \caption{$G''$ for $l = 3$ in the reduction: directed 2-SiSP $\lesssimsp_{m+n}$ Betweenness Centrality. The gray and the bold edges have weight $M'$ and $\frac{1}{3}M'$ respectively. All the outgoing (incoming) edges from (to) $A$ have weight $M'+q$ ($0$).}
    \label{fig:2SiSPBetweennessCentrality}
\end{figure}

We observe that for each $0\leq j\leq l-1$, a shortest path from $y_{j_o}$ to $y_{j_i}$ with $(z_{j_i},y_{j_i})$ as the last edge has weight 
equal to $M' + d_{G''} (z_{j_o},z_{j_i})$.

\textit{(ii) We now show that the Betweenness Centrality of $A$, i.e. $BC(A)$, is equal to $l$ iff $q < d_{G''} (z_{j_o},z_{j_i})$ for each $0\leq j\leq l-1$.}
The only paths that passes through the vertex $A$ are from vertices $y_{j_o}$'s to vertices $y_{j_i}$'s.
For $j\neq k$, as noted in the proof of Lemma~\ref{lemma:2SiSPRadius}, there exists some $r, s$ such that there is a path from $y_{j_o}$
to $y_{k_i}$ that goes through $C_{r,s}$ and has weight equal to $\frac{2}{3}M'$.
However a path from $y_{j_o}$ to $y_{k_i}$ has weight $M'+q$, which is strictly greater than $\frac{2}{3}M'$ and hence the pairs 
$(y_{j_o},y_{k_i})$ does not contribute to the Betweenness Centrality of $A$.

Now if $BC(A)$, is equal to $l$, it implies that the shortest paths for all pairs $(y_{j_o},y_{j_i})$ passes through $A$ and there is exactly
one shortest path for each such pair.
Hence for each  $0\leq j\leq l-1$, $M' + q < M' + d_{G''} (z_{j_o},z_{j_i})$.
Thus  $q < d_{G''} (z_{j_o},z_{j_i})$ for each $0\leq j\leq l-1$.

On the other hand if $q < d_{G''} (z_{j_o},z_{j_i})$ for each $0\leq j\leq l-1$, then the path from $y_{j_o}$ to $y_{j_i}$ with 
$(z_{j_i},y_{j_i})$ as the last edge has weight $M' + d_{G''} (z_{j_o},z_{j_i})$.
However the path from $y_{j_o}$ to $y_{j_i}$ passing through $A$ has weight $M' + q < M' + d_{G''} (z_{j_o},z_{j_i})$.
Hence every such pair contributes $1$ to the Betweenness Centrality of $A$ and thus $BC(A) = l$.

Thus using (ii), we just need to find the minimum value of $q$ such that $BC(A) < l$ in order to compute the value 
$\min_{0\leq j\leq l-1}  d_{G''} (z_{j_o},z_{j_i})$.
We can find such $q$ by performing a binary search in the range $0$ to $nM$ and computing $BC(A)$ at every layer.
Thus we make $O(\log nM)$ calls to the Betweenness Centrality algorithm.

As observed in the proof of Lemma~\ref{lemma:ReplacementANSC}, we know that the shortest path from $z_{j_o}$ to $z_{j_i}$
corresponds to the replacement path for the edge $(v_j,v_{j+1})$ lying on $P$. 
Thus by making $O(\log nM)$ calls to the Betweenness Centrality algorithm, we can compute the second simple shortest path from $s$ to
$t$ in $G$.
This completes the proof.

The cost of this reduction is $O((m+n\log n)\cdot\log nM)$.
\vspace{-.1in}
\end{proof}

\subparagraph{II. \underline{ANSC to Pos ANBC}:}

We now describe a tilde-sparse reduction from the ANSC problem to the All Nodes Positive Betweenness Centrality problem
 (Pos ANBC).
Sparse reductions
from \MinWeightDelta{}  and from Diameter to  
 Pos ANBC
are
given in~\cite{AGW15}.
However, \MinWeightDelta{} can be solved in $O(m^{3/2})$ time, and Diameter is not known to be 
MWC-hard,
 hence  
neither
of these reductions
can be used to show hardness of the All Nodes Positive Betweenness Centrality problem relative to the MWC Conjecture.
(Recall that Pos ANBC is not known to be subcubic equivalent to APSP.)

Our reduction is similar to the reduction from 2-SiSP to the Betweenness Centrality problem, but instead of computing betweenness 
centrality through one vertex, it
computes
the positive betweenness centrality values for $n$ different nodes.
We 
first split every vertex $x$ into vertices $x_o$ and $x_i$ such that the shortest path from $x_o$ to $x_i$ corresponds 
to the shortest cycle passing through $x$ in the original graph.
We then add additional vertices $z_x$ for each vertex $x$ in the original graph and connect it to the vertices $x_o$ and $x_i$ such that 
the only shortest path passing through $z_x$ is from $x_o$ to $x_i$.
We then perform binary search on the edge weights for the edges going from $z_x$ to $x_i$ with oracle calls to the Positive Betweenness 
Centrality problem, to compute the weight of the shortest cycle passing through $x$ in the original graph.
Our reduction is described in Lemma~\ref{lemma:ANSCAllNodesPositive}.

\begin{lemma}	\label{lemma:ANSCAllNodesPositive}
In weighted directed graphs, ANSC $\lesssimsp_{m+n}$ Pos ANBC
\end{lemma}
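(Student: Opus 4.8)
The plan is to mirror the 2-SiSP-to-BC reduction of Lemma~\ref{lemma:2SiSPBetweennessCentrality}, but with one betweenness ``probe'' vertex per original vertex instead of a single vertex $A$, and to drive all $n$ probes in parallel through a shared binary-search parameter. First I would build the vertex-split graph $G'$ exactly as in Lemma~\ref{lemma:MinWtCyc2SiSP}: replace each $x \in V$ by $x_o, x_i$ with a zero-weight internal edge $x_i \to x_o$, and route each original edge $(u,v)$ to $(u_o, v_i)$, so that $d_{G'}(x_o, x_i)$ equals the weight $\mathit{sc}(x)$ of a shortest cycle through $x$ in $G$ (and is $\infty$ if $x$ lies on no cycle). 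To keep all comparisons strict I would first scale every original weight by $2$, so that each $\mathit{sc}(x)$ is even.

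For a parameter $q \in \{0, 1, \ldots, nM\}$ I would form $H_q$ from $G'$ by adding, for each $x$, a fresh vertex $z_x$ with a single in-edge $x_o \to z_x$ of weight $0$ and a single out-edge $z_x \to x_i$ of weight $2q+1$. Since $z_x$ has a unique in-neighbour and a unique out-neighbour, every path through $z_x$ contains the segment $x_o \to z_x \to x_i$ of weight $2q+1$. The crux is the claim $BC_{H_q}(z_x) > 0 \iff 2q+1 < \mathit{sc}(x)$. For the below-threshold direction I would show that when $2q+1 < \mathit{sc}(x)$ the detour $x_o \to z_x \to x_i$ is the \emph{unique} shortest $x_o$-$x_i$ path: any competing path either stays inside $G'$ (weight $\ge \mathit{sc}(x) > 2q+1$) or routes through some foreign probe $z_y$, costing at least $2q+1 + d_{G'}(x_o,y_o) + d_{G'}(y_i,x_i)$, which strictly exceeds $2q+1$ because no zero-weight path leaves $\{x_o,z_x\}$ (all original edges now have weight $\ge 2$). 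Hence the pair $(x_o,x_i)$ alone forces $BC_{H_q}(z_x) \ge 1$. Conversely, when $2q+1 > \mathit{sc}(x)$, for every pair $(s,t)$ any $s$-$t$ path through $z_x$ has its middle segment $x_o \to z_x \to x_i$ replaceable by a genuine $x_o \rightsquigarrow x_i$ path of weight $\mathit{sc}(x) < 2q+1$ inside $G'$ (which avoids $z_x$), yielding a strictly shorter $s$-$t$ walk; so no shortest path touches $z_x$ and $BC_{H_q}(z_x) = 0$. The parity choice ($2q+1$ odd versus $\mathit{sc}(x)$ even) rules out the tie $2q+1 = \mathit{sc}(x)$.

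Given the claim, I would run $n$ binary searches sharing the same $q$: one call to Pos ANBC on $H_q$ returns, for all $x$ simultaneously, whether $BC(z_x) > 0$, and this predicate is monotone in $q$ for each fixed $x$. Maintaining $n$ search intervals inside the common range $[0,nM]$, after $O(\log(nM))$ calls I would locate, for every $x$, the least $q$ with $BC_{H_q}(z_x)=0$; by the claim this equals $\mathit{sc}(x)/2$, recovering the unscaled shortest-cycle weight, with $\mathit{sc}(x)=\infty$ detected when $BC(z_x)>0$ throughout. Each $H_q$ has $O(n)$ vertices and $O(m+n)$ edges and is built in $O(m+n)$ time, the per-round bookkeeping is $O(n)$, and there are $O(\log(nM))$ rounds, so this is a tilde-sparse $O(m+n)$ reduction, giving ANSC $\lesssimsp_{m+n}$ Pos ANBC.

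I expect the main obstacle to be the betweenness claim of the third paragraph: I must guarantee that $z_x$'s betweenness is governed solely by its own threshold $2q+1$ versus $\mathit{sc}(x)$, even though every probe $z_y$ carries the identical detour weight $2q+1$, so I have to rule out shortest $x_o$-$x_i$ paths that detour through a foreign $z_y$ while also respecting the unique-shortest-path promise of Pos ANBC. The first point is handled by the weight-$\ge 2$ lower bound on original edges (no free movement between $o$-copies); for the second, note that $BC(z_x)>0$ is witnessed by a strictly unique shortest path for the pair $(x_o,x_i)$, whereas $BC(z_x)=0$ means no shortest path at all touches $z_x$, so the conclusion is insensitive to ties among other pairs, and any residual ties needed to satisfy the promise can be removed by a standard sub-unit weight perturbation that preserves every strict inequality above.
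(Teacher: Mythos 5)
Your construction is essentially the paper's own reduction (Lemma~\ref{lemma:ANSCAllNodesPositive}): the same vertex split $x\mapsto x_o,x_i$ with the zero-weight edge $x_i\to x_o$, the same probe vertices $z_x$ on the path $x_o\to z_x\to x_i$, the same betweenness criterion $BC(z_x)>0$ iff the detour weight is below the shortest-cycle weight through $x$, and the same binary search driven by calls to Pos~ANBC. Your scaling-by-$2$/odd-threshold device and the explicit verification that no shortest path detours through a foreign probe $z_y$ are welcome refinements that the paper's proof glosses over (the paper never addresses the tie $q_x = \mathit{sc}(x)$, which matters under the unique-shortest-path convention for BC).

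However, there is one genuine flaw: you put the \emph{same} scalar $q$ on every edge $z_x\to x_i$ in $H_q$, and then claim that $O(\log(nM))$ calls complete all $n$ binary searches. This fails. After the first call at the common midpoint, the $n$ search intervals diverge --- some vertices need the next threshold in the lower half of $[0,nM]$, others in the upper half --- and a single shared $q$ cannot bisect all of them in the next round. More bluntly, $O(\log(nM))$ calls with a shared threshold probe only $O(\log(nM))$ distinct values, so they can distinguish at most $O(\log(nM))$ equivalence classes of $\mathit{sc}(x)$, not the $\Theta(nM)$ possible values; you would need $\Theta(nM)$ shared-threshold calls, or $\Theta(n\log(nM))$ calls running the searches one vertex at a time. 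The fix is exactly what the paper does: give each probe its own weight $q_x$ on the edge $z_x\to x_i$ (your $2q_x+1$), so that a single Pos~ANBC call reports $BC(z_x)>0$ for all $x$ at their \emph{individual} current midpoints and advances all $n$ searches by one step simultaneously; then $O(\log(nM))$ rounds suffice. Nothing else in your argument depends on the thresholds being equal (your foreign-probe exclusion only uses that every $z_y\to y_i$ edge has nonnegative weight plus a positive $d_{G'}(x_o,y_o)$ term), so the repair is local and the rest of the proof stands.
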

\vspace{-.25in}
\begin{proof}
We are given an input graph $G = (V,E)$ and we wish to compute the ANSC in $G$.
Let $M$ be the largest edge weight in $G$.

\textit{(i) Constructing $G'$:}
Now we construct a graph $G'$ from $G$.
For each vertex $x\in V$, we replace $x$ by vertices $x_i$ and $x_o$ and we place a directed edge of weight $0$ from $x_i$ to $x_o$,
and we also replace each incoming edge to (outgoing edge from) $x$ with an incoming edge to $x_i$ (outgoing edge from $x_o$) in $G'$.
We can observe that the shortest path from $x_o$ to $x_i$ in $G'$ corresponds to the shortest cycle passing through $x$ in $G$.

For each vertex $x\in V$, we add an additional vertex $z_x$ in $G'$ and we add an edge of weight $0$ from $x_o$ to $z_x$ and an edge
of weight $q_x$ (where $q_x$ lies in the range from $0$ to $nM$) from $z_x$ to $x_i$.
 
Figure~\ref{fig:ANSCAllNodesPositive} depicts the full construction of $G'$ for $n= 3$.

\begin{figure}
\scriptsize
    \centering
        \makebox[.3\textwidth]{
            \begin{tikzpicture}[scale=0.5, every node/.style={circle, draw, inner sep=0pt, minimum width=5pt}]
            \node (1o)[label=above:$1_o$] at (0,0)  {};
            \node (1i)[label=above:$1_i$] at (2,0)  {};
            \node (2o)[label=above:$2_o$] at (3,0)  {};
            \node (2i)[label=above:$2_i$] at (5,0) {};
            \node (3o)[label=above:$3_o$]	at (6,0)	{};
            \node (3i)[label=above:$3_i$] at (8,0)	{};
            \node[ellipse, color=blue, fit=(1o)(1i)(2o)(2i)(3o)(3i), inner sep=3mm](G') [label=left:$G$] {};
            \begin{scope}[->,every node/.style={ right}]
            \draw[->] (1i) -- (1o) node[midway, label=above:$0$] {};
            \draw[->] (2i) -- (2o) node[midway, label=above:$0$] {};
            \draw[->] (3i) -- (3o) node[midway, label=above:$0$] {};
            \end{scope}
            \node (z1)[label=below:$z_1$] at (1,-2) {};
            \node (z2)[label=below:$z_2$] at (4,-2) {};
            \node (z3)[label=below:$z_3$] at (7,-2) {};
            \begin{scope}[->,every node/.style={ right}]
            \draw[->] (1o) -- (z1) node[midway, label=left:$0$] {};
            \draw[->] (2o) -- (z2) node[midway, label=left:$0$] {};
            \draw[->] (3o) -- (z3) node[midway, label=left:$0$] {};
            \draw[->] (z1) -- (1i) node[midway, label=right:$q_1$] {};
            \draw[->] (z2) -- (2i) node[midway, label=right:$q_2$] {};
            \draw[->] (z3) -- (3i) node[midway, label=right:$q_3$] {};
            \end{scope}
            \end{tikzpicture}}
    \caption{$G'$ for $n = 3$ in the reduction: directed ANSC $\lesssimsp_{m+n}$ Pos ANBC.}
    \label{fig:ANSCAllNodesPositive}
\end{figure}
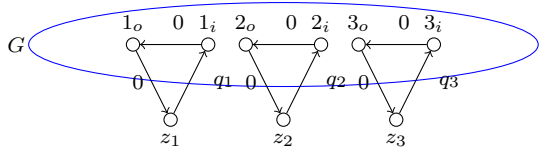

We observe that the shortest path from $x_o$ to $x_i$ for some vertex $x \in V$ passes through $z_x$ only if the shortest cycle passing
through $x$ in $G$ has weight greater than $q_x$.

\textit{(ii) We now show that for each vertex $x\in V$, Positive Betweenness Centrality of $z_x$
is true,
i.e., $BC(z_x) > 0$ iff the shortest cycle
passing through $x$ has weight greater than $q_x$.}
It is easy to see that the only path that pass through vertex $z_x$ is from $x_o$ to $x_i$ (as the only outgoing edge from $x_i$ is to 
$x_o$ and the only incoming edge to $x_o$ is from $x_i$).

Now if $BC(z_x) > 0$, it implies that the shortest path from $x_o$ to $x_i$ passes through $z_x$ and hence the path from $x_o$ to $x_i$
corresponding to the shortest cycle passing through $x$ has weight greater than $q_x$.

On the other hand, if the shortest cycle passing through $x$ has weight greater than $q_x$, then the shortest path from $x_o$ to $x_i$ 
passes through $z_x$.
And hence $BC(z_x) > 0$.

Then using $(ii)$, we just need to find the maximum value of $q_x$ such that $BC(z_x) > 0$ in order to compute the weight of the 
shortest cycle passing through $x$ in the original graph.
We can find such $q_x$ by performing a binary search in the range $0$ to $nM$ and computing Positive Betweenness Centrality for all
nodes at every layer.
Thus we make $O(\log nM)$ calls to the 
Pos ANBC algorithm.
This completes the proof.

The cost of this reduction is $O((m+n)\cdot\log nM)$.
\end{proof}

\appendix

\section{Appendix} \label{sec:appendix}

\subsection{Definitions of Graph Problems} \label{sec:definitions}

\subparagraph*{All Pairs Shortest Distances (APSD).}
Given a graph $G = (V,E)$, the APSD problem is to compute the shortest path distances for every pair of vertices in $G$.

\vspace{-.2in}

\subparagraph*{APSP.}

This is the problem of
computing the shortest path distances for every pair of vertices in $G$
 together with a
concise
 representation of the shortest paths, which in our case is an $n\times n$ matrix, $Last_G$, 
that contains, 
 in position $(x,y)$,
 the predecessor vertex of $y$ on a shortest path from 
$x$ to $y$.

 \vspace{-.2in}

\subparagraph*{Minimum Weight Cycle (MWC).}
Given a graph $G = (V,E)$, the minimum weight cycle problem is to find the weight of a minimum weight cycle in $G$.

\vspace{-.2in}

\subparagraph*{All Nodes Shortest Cycles (ANSC).}
Given a  graph $G = (V,E)$, the ANSC problem is to find the weight of a shortest cycle through each vertex in $G$.

\vspace{-.2in}

\subparagraph*{Replacement Paths.}
Given a graph $G = (V,E)$ and a pair of vertices $s, t$, the replacement paths problem is to find, for each edge $e$ lying on the shortest
path from $s$ to $t$, a shortest path from $s$ to $t$ avoiding the edge $e$.

\vspace{-.2in}

\subparagraph*{$k$-SiSP.}
Given a graph $G = (V,E)$ and a pair of vertices $s, t$, the $k$-SiSP problem is to find the $k$ shortest simple paths from $s$ to $t$: 
the $i$-th path
must be
different from 
first
$(i-1)$ paths and  
must have
weight greater than or equal to the
weight of any of these $(i-1)$ paths.

\vspace{-.2in}

\subparagraph*{$k$-SiSC.}
The corresponding cycle version of $k$-SiSP is known as $k$-SiSC, where the goal is to compute the $k$ shortest simple cycles through a
given vertex $x$, such that the $i$-th cycle generated is different from all previously generated $(i-1)$ cycles and has weight greater than 
or equal to the weight of any of these $(i-1)$ cycles. 

\vspace{-.2in}

\subparagraph*{Radius.}

For a given graph $G = (V,E)$, the Radius problem is to compute the value $\min_{x\in V}$ $\max_{y\in V}$ $d_G (x,y)$.
The {\it center} of a graph is the vertex $x$ which minimizes this value.

\vspace{-.2in}

\subparagraph*{Diameter.}

For a given graph $G = (V,E)$, the Diameter problem is to compute the value $\max_{x,y\in V}$ $d_G (x,y)$.

\vspace{-.2in}

\subparagraph*{Eccentricities.}

For a given graph $G = (V,E)$, the Eccentricities problem is to compute the value $\max_{y \in V} d_G (x,y)$ for each vertex $x \in V$.

 \vspace{-.2in}

\subparagraph*{Betweenness Centrality (BC).}
For a given graph $G = (V,E)$ and a node $v\in V$, the Betweenness Centrality of $v$,
$BC(v)$,
 is the value $\sum_{s,t \in V, s,t\neq v} \frac{\sigma_{s,t} (v)}{\sigma_{s,t}}$, where $\sigma_{s,t}$ is the number of shortest paths from $s$ to $t$ and $\sigma_{s,t} (v)$ is
the number of shortest paths from $s$ to $t$ passing through $v$.

As in~\cite{AGW15} we assume that the graph has unique shortest paths, hence 
$BC(v)$
is simply the number of $s,t$ pairs
such that the shortest path from $s$ to $t$ passes through $v$.

\vspace{-.2in}

\subparagraph*{All Nodes Betweenness Centrality (ANBC).}

The all-nodes version of Betweenness Centrality: 
determine $BC(v)$ for all vertices.

\vspace{-.2in}

\subparagraph*{Positive Betweenness Centrality (Pos BC).}

Given a graph $G=(V,E)$ and a vertex $v$, the Pos BC problem is to deternine if $BC(v) > 0$.

\vspace{-.2in}

\subparagraph*{All Nodes Positive Betweenness Centrality (Pos ANBC).}
 
The all-nodes version of Positive Betweenness Centrality.

\vspace{-.2in}

\subparagraph*{Reach Centrality (RC).}

For a given graph $G = (V,E)$ and a node $v \in V$, the Reach Centrality of $v$,
$RC(v)$,
 is the value $\max_{s,t \in V: \\ d_G (s,v) + d_G (v,t) = d_G (s,t)} {\min (d_G (s,b), d_G (b,t))}$.

\subsection{$k$-SiSC Algorithm : Undirected Graphs} \label{sec:kSiSC}

This section deals with an application of our {\it bit-sampling technique} to obtaining a new near-linear time algorithm for $k$-SiSC in
undirected graphs (see definition below).
Note that this problem is not in the $mn$ class and this result is included here as an application of the {\it bit-sampling technique}.

$k$-SiSC is the problem of finding $k$ simple shortest cycles passing through a vertex $v$.
Here the output is a sequence of $k$ simple cycles through $v$ in non-decreasing order of weights such that the $i$-th cycle in the output is 
different from the previous $i-1$ cycles.
The corresponding path version of this problem is known as $k$-SiSP and is solvable in near linear time in undirected graphs~\cite{KIM82}.

We now use our {\it bit-sampling technique} (described in Section~\ref{sec:undir}) to get a near-linear time algorithm for $k$-SiSC, which was
not previously known.
We obtain this $k$-SiSC algorithm by giving  
 a tilde-sparse $\tilde O(m+n)$ time
reduction from $k$-SiSC to $k$-SiSP.
This reduction uses our {\it bit-sampling technique} for sampling the edges incident to $v$ and creates
 $\lceil \log n \rceil$ different graphs. 
 Here we only use index $i$ of our bit-sampling method.

\begin{lemma}	\label{lemma:unkSiSCkSiSP}
In undirected graphs, $k$-SiSC $\lesssimsp_{(m+n)}$ $k$-SiSP.
\end{lemma}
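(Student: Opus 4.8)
The plan is to adapt the bit-sampling construction of Section~\ref{sec:undir}, now sampling only the edges incident to the designated vertex $v$, and to observe that a simple cycle through $v$ becomes a simple path once we split $v$ into two copies. First I would build $\lceil \log n \rceil$ graphs $G_1,\ldots,G_{\lceil \log n \rceil}$: each $G_i$ contains all of $G$ except that $v$ is replaced by two copies $v^1,v^2$, and for each neighbor $u$ of $v$ the edge $(v,u)$ is attached (with its original weight) to $v^1$ if the $i$-th bit of $u$'s label is $0$ and to $v^2$ if it is $1$. No edge is placed between $v^1$ and $v^2$. Each $G_i$ has $n+1$ vertices and at most $m$ edges, and all $\lceil \log n \rceil$ of them can be built in $\tilde{O}(m+n)$ total time. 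Since the problem is undirected we need only the index $i$ (and not the companion index $j$ of Section~\ref{sec:undir}), because a path and its reverse coincide.

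The central structural claim I would prove is a correspondence between simple $v^1$--$v^2$ paths in $G_i$ and simple cycles through $v$ in $G$. As $v^1,v^2$ are the only split copies, any simple path $v^1,x_1,\ldots,x_r,v^2$ in $G_i$ maps to the closed walk $v,x_1,\ldots,x_r,v$ in $G$; its first and last edges go to neighbors $x_1$ (with $i$-th bit $0$) and $x_r$ (with $i$-th bit $1$), which therefore differ, so $x_1 \neq x_r$ and the image is a genuine simple cycle through $v$ of the same weight. In particular no edge at $v$ is traversed twice, and a two-edge $v^1$--$v^2$ path is impossible since a single vertex cannot carry both bit values, so the cycle has length at least $3$. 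Conversely, a simple cycle through $v$ uses two distinct neighbors $a\neq b$; since all labels are distinct, $a$ and $b$ differ on some bit $i$, and in that $G_i$ the cycle appears as exactly one simple $v^1$--$v^2$ path.

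With this correspondence in hand, I would run $k$-SiSP on each $G_i$ with source $v^1$ and sink $v^2$, translate the returned paths back to cycles through $v$, and then merge the $\lceil \log n \rceil$ lists, deduplicate, sort, and output the $k$ lightest distinct cycles. Correctness rests on a counting argument: let $w^*$ be the weight of the $k$-th lightest distinct cycle through $v$. Every $v^1$--$v^2$ path in $G_i$ that is lighter than the image $P_C$ of a fixed cycle $C$ corresponds to a \emph{distinct} cycle through $v$ that is itself lighter than $C$, so if $C$ is among the $k$ lightest it has fewer than $k$ strictly lighter cycles and hence $P_C$ surfaces within the top $k$ paths of $G_{i}$ for the bit $i$ on which $C$'s two $v$-neighbors differ. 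I would use this to show the deduplicated union contains \emph{all} distinct cycles of weight at most $w^*$, from which taking the $k$ lightest yields a valid $k$-SiSC output. Since we make $\lceil \log n\rceil = \tilde{O}(1)$ oracle calls to $k$-SiSP on graphs with $\tilde{O}(n)$ vertices and $\tilde{O}(m)$ edges and spend $\tilde{O}(m+n)$ additional time on construction, translation and merging, this establishes $k$-SiSC $\lesssimsp_{m+n}$ $k$-SiSP.

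The step I expect to be the main obstacle is the handling of ties in cycle weight at the boundary value $w^*$, since a single cycle may surface from several $G_i$ and distinct cycles may share a weight, which complicates the naive "$P_C$ is within the top $k$" claim. I would resolve this by a clean case split rather than a bare count: either some $G_i$ already contains at least $k$ paths of weight at most $w^*$ (in which case that graph alone yields $k$ distinct cycles of weight $\le w^*$), or every $G_i$ contains fewer than $k$ such paths, in which case the top-$k$ list of each $G_i$ returns \emph{all} of its paths of weight $\le w^*$, and since each such cycle appears in at least one $G_i$ the deduplicated union recovers every cycle of weight $\le w^*$. Combined with the separate verification that all cycles of weight strictly below $w^*$ are recovered, either case guarantees at least $k$ distinct cycles of weight $\le w^*$ in the union, so the final sort-and-truncate is correct regardless of how $k$-SiSP breaks ties among equal-weight paths.
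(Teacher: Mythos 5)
Your construction is exactly the one the paper uses: split $v$ into two copies, attach each edge at $v$ to one copy according to the $i$-th bit of the neighbor's label, build $\lceil\log n\rceil$ such graphs, run $k$-SiSP from one copy to the other in each, and merge/deduplicate. Your proof is correct and in fact supplies more detail than the paper's (which merely asserts the cycle--path correspondence and the correctness of taking the $k$-th element of the deduplicated union), particularly in verifying that paths map to genuine simple cycles and in the tie-handling case split for the top-$k$ selection.
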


\vspace{-.1in}
\begin{proof}
Let the input be $G = (V,E)$ and let $x \in V$ be the vertex for which we need to compute $k$-SiSC.
Let $\mathcal{N}(x)$ be the neighbor-set of $x$.
We create $\lceil \log n \rceil$ graphs $G_i = (V_i, E_i)$ such that $\forall 1 \leq i\leq \lceil \log n\rceil$, $G_i$ contains two additional
vertices $x_{0,i}$ and $x_{1,i}$ (instead of the vertex $x$) and $\forall y \in \mathcal{N}(x)$, the edge $(y,x_{0,i}) \in E_i$ if $y$'s 
$i$-th bit is $0$, otherwise the edge $(y,x_{1,i}) \in E_i$. 
This is our bit-sampling method.

The construction takes $O((m+n) \cdot \log n)$ time and we observe that every cycle through $x$ will appear as a path from $x_{0,i}$ to $x_{1,i}$ in
at least one of the $G_i$.
Hence, the $k$-th shortest path in the collection of $k$-SiSPs from $x_{0,i}$ to  $x_{1,i}$ in $\log n$  $G_i$,  $1 \leq i \leq \lceil \log n \rceil$
(after removing duplicates), corresponds to the $k$-th SiSC passing through $x$. 
\end{proof}

Using the undirected $k$-SiSP algorithm in~\cite{KIM82} that runs in $O(k\cdot (m + n\log n))$, we obtain 
an $O(k\log n\cdot (m + n\log n))$ time algorithm for  $k$-SiSC in undirected graphs.

\bibliographystyle{abbrv}
\bibliography{references-new}

\end{document}